	\numberwithin{equation}{section} %number equations by section
\newcommand{\vc}[1]{\ifthenelse{\boolean{commentsactivated}}{{\color{blue} {\em VC: #1 }}}{}}
\newcommand{\co}[1]{\ifthenelse{\boolean{commentsactivated}}{{\color{teal} {\em CO: #1 }}}{}}
\newcommand{\vk}[1]{\ifthenelse{\boolean{commentsactivated}}{{\color{red} {\em VK: #1 }}}{}}
\newcommand{\lh}[1]{\ifthenelse{\boolean{commentsactivated}}{{\color{brown} {\em LH: #1 }}}{}}
\newcommand{\cs}[1]{\ifthenelse{\boolean{commentsactivated}}{{\color{pink} {\em CS: #1 }}}{}}
\newcommand{\ns}[1]{\ifthenelse{\boolean{commentsactivated}}{{\color{orange} {\em CS: #1 }}}{}}
\newcommand{\draftOnly}[1]{\ifthenelse{\boolean{commentsactivated}}{#1}{}}
\newcommand{\edit}[1]{\ifthenelse{\boolean{highlightEdits}}{{\color{purple}{#1}}}{#1}}
\newcommand{\journal}[1]{\ifthenelse{\boolean{journalTextShown}}{{{\ifthenelse{\boolean{highlightEdits}}{{\color{olive}{#1}}}{#1}}}}{}}
\newcommand{\anonymise}[2]{\ifthenelse{\boolean{anonymous}}{{{#2}}}{#1}}
\newtheorem{definition}{\protect\definitionname}
\newtheorem{lemma}[definition]{\protect\lemmaname}
\newtheorem{proposition}[definition]{\protect\propositionname}
\newtheorem{observation}[definition]{\protect\observationname}
\theoremstyle{definition}           % Non-italic environments
    \newtheorem{example}[definition]{\protect\examplename}
    \newtheorem{notation}[definition]{\protect\notationname}
\providecommand{\corollaryname}{Corollary}
\providecommand{\claimname}{Claim}
\providecommand{\definitionname}{Definition}
\providecommand{\lemmaname}{Lemma}
\providecommand{\notationname}{Notation}
\providecommand{\remarkname}{Remark}
\providecommand{\problemname}{Problem}
\providecommand{\propositionname}{Proposition}
\providecommand{\examplename}{Example}
\providecommand{\theoremname}{Theorem}
\providecommand{\conjecturename}{Conjecture}
\providecommand{\observationname}{Observation}
\DeclareMathAlphabet{\mathpzc}{OT1}{pzc}{m}{it}
\DeclareMathSymbol{\shortminus}{\mathbin}{AMSa}{"39}
\newcommand{\defword}[1]{\textbf{\boldmath{#1}}}
\newcommand{\N}{\mathbb{N}}
\newcommand{\R}{\mathbb{R}}
\newcommand{\E}{\mathbf{E}}
\newcommand{\mc}{\mathcal}
\newcommand{\symbolPlaceholder}{{\, \cdot \, }}
\newcommand{\actions}{\mc A}
\newcommand{\action}{a}
\newcommand{\actionAlt}{{\action'}}
\newcommand{\actionOpp}{b}
\newcommand{\actionOppAlt}{{\actionOpp'}}
\newcommand{\actionSetOpp}{B}
\newcommand{\policy}{\pi}
\newcommand{\policies}{\Pi}
\newcommand{\pl}{i}
\newcommand{\opp}{\others}          % opponent
\newcommand{\others}{{\textnormal{-}\pl}}
\newcommand{\Plone}{\textnormal{P1}}
\newcommand{\Pltwo}{\textnormal{P2}}
\newcommand{\utility}{u}
\newcommand{\br}{\textnormal{br}}                     % best response
\newcommand{\NE}{\textnormal{NE}}                     % Nash equilibrium
\newcommand{\pureSE}{\textnormal{SE}_{\textnormal{pure}}^{\textnormal{\Pltwo{}}}}   % pure Stackleberg
\newcommand{\supp}{\textnormal{supp}}                % Support
\newcommand{\game}{{\mc{G}}}
\newcommand{\gameAlt}{{\game'}}
\DeclareMathSymbol{\shortminus}{\mathbin}{AMSa}{"39}
\newcommand{\actionSubset}{S}
    \newcommand{\simcost}{\texttt{c}}
\newcommand{\simSubscript}{{\textnormal{sim}}}
\newcommand{\simgame}{\game_\simSubscript}
\newcommand{\all}{\textnormal{all}}
\newcommand{\simulate}{\texttt{S}}
\newcommand{\brSelector}{\psi_\br}
\newcommand{\brSelectorPure}{\brSelector^\pure}
\newcommand{\pure}{\textnormal{det}}
\newcommand{\brSelectors}{{\Psi}_\br}
\newcommand{\brSelectorsPure}{\brSelectors^\pure}
\newcommand{\NEtrajectory}{\mathpzc{t}}
\newcommand{\exceptionPoint}{e}
\newcommand{\exceptionSet}{E}
\newcommand{\slackVariable}{w}
\newcommand{\LPmatrix}{A}
\newcommand{\LPrhs}{y^\simcost}
\newcommand{\LPvariables}{x_\simcost}
\newcommand{\transpose}{\textbf{T}}
\newcommand{\indices}{\mc I}
\newcommand{\basis}{B}
\newcommand{\regularBases}{\mc B}
\newcommand{\VoI}{\textrm{VoI}_\simulate}
\newcommand{\slope}{\alpha}
\newcommand{\simProb}{p}
\newcommand{\baseline}{\textrm{B}}
\newcommand{\deviate}{\textrm{D}}
\newcommand{\deviation}{d}
\newcommand{\bigO}{O}
\newcommand{\guessTheNumberConstant}{N}
\newcommand{\Alice}{\textrm{A}}
\newcommand{\Bob}{\textrm{B}}
\newcommand{\trustGame}{\textnormal{TG}}
\newcommand{\simTrustGame}{\trustGame_\simSubscript}
\newcommand{\trust}{\textnormal{T}}
\newcommand{\walkOut}{\textnormal{WO}}
\newcommand{\cooperate}{\textnormal{C}}
\newcommand{\defect}{\textnormal{D}}
\newcommand{\coopUtil}{50}
\newcommand{\coopUtilAlice}{25}
\newcommand{\coopUtilBob}{25}
\newcommand{\defectLossAlice}{100}      % the minus needs to be added manually
\newcommand{\defectUtilBob}{150}
\newcommand{\utilDiffAlice}{125}   % coopUtilAlice - defectUtilAlice
\newcommand{\simcostExample}{7}     % the minus needs to be added manually
\newcommand{\eqUtilAlice}{16.25}        % formula: coopUtilAlice * (1 - simcost / defectLossAlice) - simcost
\newcommand{\eqUtilBob}{\coopUtilBob{}}
\newcommand{\secondBreakpoint}{20}      % formula: \defectLossAlice \cdot \frac{\coopUtilAlice}{\defectLossAlice + \coopUtilAlice}
\newcommand{\direction}{\delta}
\newcommand{\deviationRatio}{r}
\newcommand{\deviationRatioInverse}{r^{-1}}
\title{Game Theory with Simulation of Other Players}
\author{
Vojtěch Kovařík
\and
Caspar Oesterheld\And
Vincent Conitzer\\
\affiliations
Foundations of Cooperative AI Lab (FOCAL),
Carnegie Mellon University\\
\emails
vojta.kovarik@gmail.com,
oesterheld@cmu.edu,
conitzer@cs.cmu.edu
}
\begin{document}

\maketitle

\begin{abstract}
    Game-theoretic interactions with AI agents could differ from traditional human-human interactions in various ways.
One such difference is that it may be possible to simulate an AI agent (for example because its source code is known),
    which allows others to accurately predict the agent's actions.
    This could lower the bar for trust and cooperation.
In this paper, we formalize
    games in which one player can simulate another at a cost.
We first derive some basic properties of such games
and then prove a number of results for them, including:
    (1) introducing simulation into generic-payoff normal-form games makes them easier to solve;
    (2) if the only obstacle to cooperation is a lack of trust in the possibly-simulated agent,
        simulation enables equilibria that improve the outcome for both agents; and
    however (3) there are settings where introducing simulation results in strictly worse outcomes for both players.
\end{abstract}

\section{Introduction}\label{sec:intro}
    \begin{figure*}[!tb]
    \centering
    \includegraphics[width=0.3\textwidth]{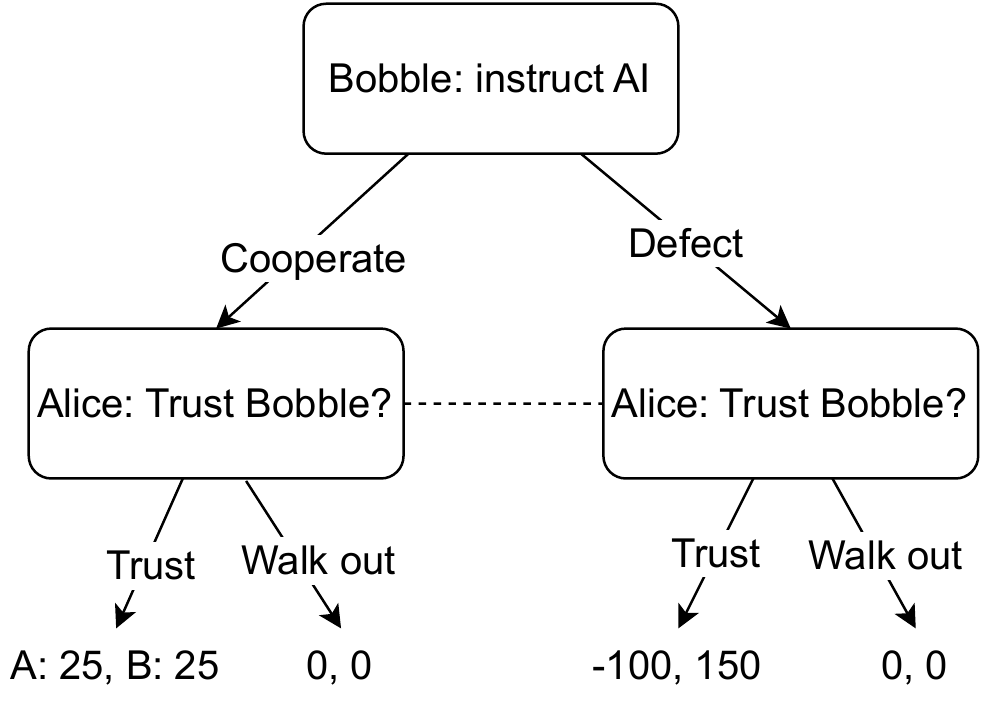}
    \hfill
    \includegraphics[width=0.6\textwidth]{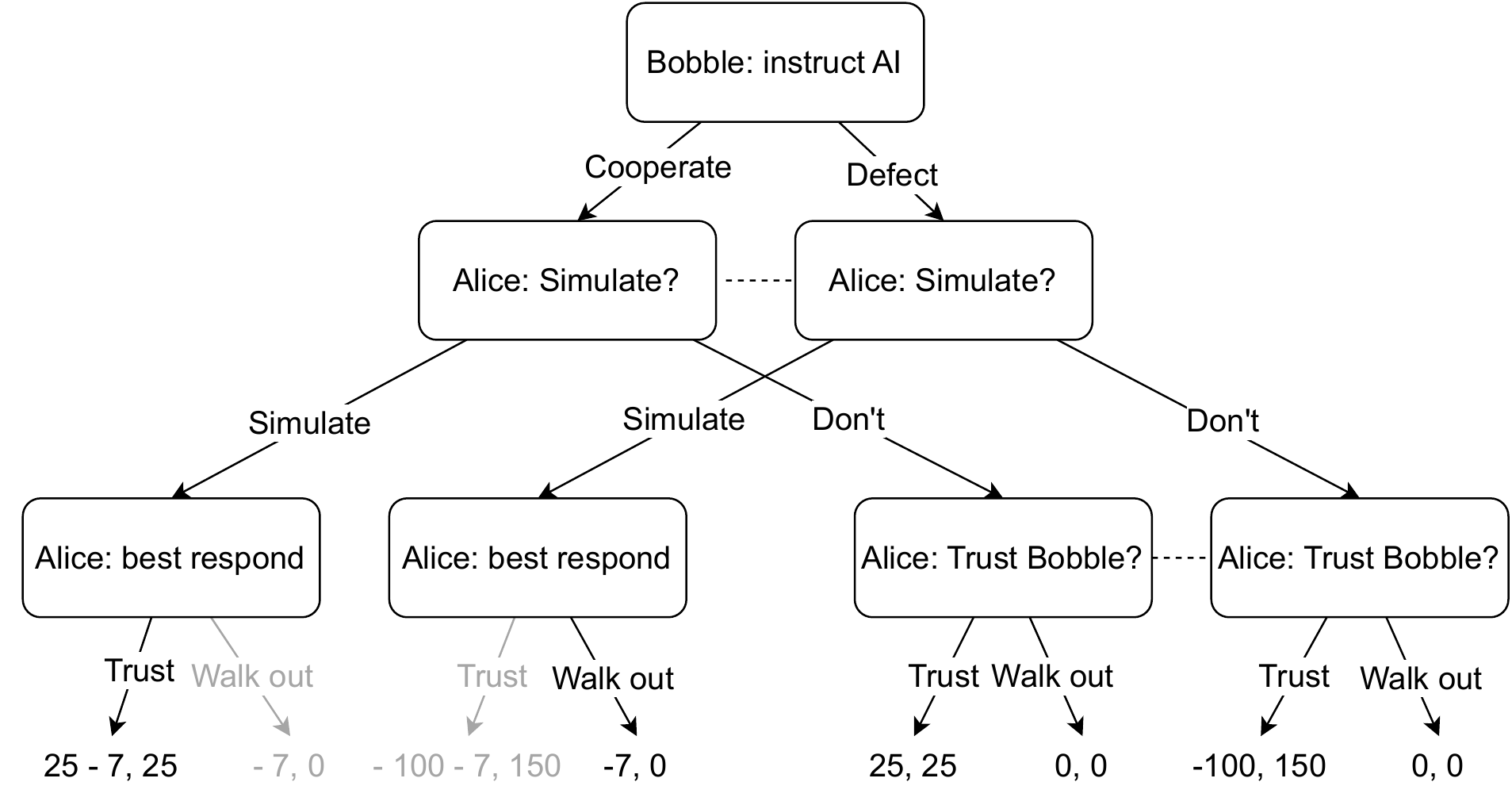}
    \caption{The underlying trust game $\trustGame$ (left) and the corresponding simulation game $\simTrustGame$ (right).
    }
    \label{fig:trust-game-efg}
\end{figure*}
    Game theory is in principle agnostic as to the nature of the players: besides individual human beings, they can be households, firms, countries, and indeed AI agents.  Nevertheless, throughout most of the development of the field, game theorists have had in mind players that were either humans or entities whose decisions were taken by humans; and as with any theory, the examples one has in mind while developing that theory are likely to affect its focus.  If we try to re-develop game theory specifically with AI agents in mind, how might the theory turn out different?  Of course, theorems in traditional game theory will not suddenly become false just because of the change in focus.  Instead, we would expect any difference to consist in the kinds of settings and phenomena for which we develop models, analysis, and computational tools.

In this paper, we focus on one specific phenomenon that is more pertinent in the context of AI agents: agents being able to {\em simulate} each other.  If an agent's source code is available, another agent can simulate what the former agent will do, which intuitively appears to significantly change the game strategically.
    % This is what we will study in this paper.
    %     \co{One could remove the preceding sentence, I think. It's a bit redundant with the first sentence of the paragraph. Also, every other sentence in this paragraph says, ``We will do XYZ.'' I don't think we're leaving any doubt about the fact that we're here describing what the paper is about. :-)}
We consider settings in which one agent can simulate another, and if they do so, they learn what the other agent will do in the actual game;
    however, simulating comes at a cost to the simulator, and therefore it is not immediately clear whether and when simulation will actually be used in equilibrium.
In particular, we are interested in understanding whether and when the availability of such simulation results in play that is more cooperative.
    For example, in settings where {\em trust} is necessary for cooperative behavior~\cite{berg1995trust}, one may expect that the ability to simulate the other player can help to establish this trust.
    But does this in fact happen in equilibrium?
    And if so, does the ability to simulate foster cooperation in all games, or are there games where it backfires?
    Are we even able to compute equilibria of games with the ability to simulate?

\medskip

In terms of related work,
    our setting is similar to the one of credible commitment \cite{von1934marktform},
        except that one needs to decide whether to pay for allowing the \textit{other} player to commit.
    Another perspective is that we study program equilibria \cite{Tennenholtz2004},
        except that only \textit{one} player's program can read the other's source code, and has to pay a cost to do so. 
    For further discussion and references, see Section~\ref{sec:related-work}.

\medskip

In the remainder of this introduction,
    we describe a specific example of a trust game and use it to overview the technical results presented later.
    We also give several examples that illustrate how simulation can lead to different results when moving beyond trust games.
    \textbf{\textit{For a quick overview, the key takeaways are in \Cref{sec:illustrative-examples}, highlighted in italics.}}
    \subsection{Overview and Illustrative Examples}\label{sec:illustrative-examples}

\paragraph{Trust Game}
    % \vc{this is now not exactly the original trust game anymore so should we still call it that?}
    % \vk{because of the reversed order in which A and B take actions? It seems equivalent to me, but I haven't read the economic literature, so perhaps I am not aware of the connotations that people have with it. I am happy to call it something else. (Investment game? Just nothing, and refer to it as ``the Alice and Bobble example''?)}\co{I agree with Vojta that this is essentially the trust game. So I'd be fine with using the term ``trust game'' (while perhaps noting somewhere that it is a variant on the original).}
    % \vk{I presume that the definition of generalised trust games (Section 7) is still fine, though?}
    % \vk{Current solution: mentioned that it is a variant}
As a motivation,
    consider the following Trust Game
    (cf. Figure\,\ref{fig:trust-game-efg}; our $\trustGame$ is a variation on the traditional one from \cite{berg1995trust}).
Alice has \$\defectLossAlice{}k in savings,
    which are currently sitting in her bank account at a 0\% interest rate.
She is considering hiring an AI assistant from the company Bobble
%    (owned, obviously, by her old ``friend'' Bob)
    to manage her savings instead.
    If Bobble and its AI \textit{cooperate} with her, the collaboration generates a profit of \$\coopUtil{}k,
    to be split evenly between her and Bobble.
However, Alice is reluctant to \textit{trust} Bobble,
    which might have instructed the AI to \textit{defect} on Alice by pretending to malfunction, while siphoning off all of the \$\defectUtilBob{}k.
    % (We assume that while Alice would know perfectly well what happened,
    %     she would be unable to get her money back.
    %     % \footnotemark{})
        % \footnotetext{
        %     Given the current state of AI legislature, this assumption does not seem entirely unrealistic.\co{While, unfortunately, I agree with this footnote, I'd remove it, because it's too off-topic.}
        % }
In fact, the only Nash equilibria of this scenario are ones where
    Bobble defects on Alice with high probability, 
    %if given the chance,
    and Alice, expecting this, \textit{walks out} on Bobble.

\paragraph{Adding simulation} 
Dismayed by their inability to make a profit,
    Bobble decides to share with Alice a portion of the AI's source code.
This gives Alice the ability to
    spend \$\simcostExample{}k on hiring a programmer, to \textit{simulate} the AI in a sandbox and learn whether it is going to cooperate or defect.
    Crucially, we assume that the AI either does not learn whether it has been simulated or is unable to react to this fact.
We might hope that this will ensure that Alice and Bobble can reliably cooperate.
However,
    perhaps Alice will try to save on the \textit{simulation cost} and trust Bobble blindly instead
    --- and perhaps Bobble will bet on this scenario and instruct their AI to defect.
%     \vc{could drop next parenthetical as it goes afield and may be misinterpreted / perhaps the timeline of who does what gets confusing / etc.}
% (One way to solve this issue would be for Alice to prove that she indeed hired the programmer.
%     However, for the purpose of this paper, we assume that this is impossible
%     --- perhaps Bobble does not trust Alice to be truthful
%     or the AI assistant has no way of reacting to this information.)

% The modified game is depicted in Fig.\,\ref{fig:trust-game-efg}.
To analyze this modified game $\simTrustGame$,
    note that when Alice simulates,
        the only sensible followup is to trust Bobble if and only if the simulation reveals they instructed the AI to cooperate.
    As a result,
        the normal-form representation of $\simTrustGame$ is equivalent to the normal form of the original game $\trustGame$ with a single added action for Alice (Fig.\,\ref{fig:trust-game-nfg}).
Analyzing $\simTrustGame$ reveals that it has two types of Nash equilibria.
    In one,
        Bobble defects with high probability and
        Alice, expecting this, walks out without bothering to simulate.
    In the other,
        Bobble still sometimes defects ($\policy_\Bob(\defect) = \nicefrac{\simcostExample}{\defectLossAlice}$),
            but not enough to stop Alice from cooperating altogether.
        In response, Alice
            simulates often enough to stop Bobble from outright defection ($\policy_\Alice(\simulate) = 1 - \nicefrac{\coopUtilBob}{\defectUtilBob} = \nicefrac{5}{6}$),
            but also sometimes trusts Bobble blindly ($\policy_\Alice(\trust) = \nicefrac{\coopUtilBob}{\defectUtilBob} = \nicefrac{1}{6}$).
        In expectation, this makes Alice and Bobble better off by \$\eqUtilAlice{}k, resp. \$\eqUtilBob{}k relative to the (\textit{defect, walk-out}) equilibrium.
% In other words, while simulation does not rid us of all cooperation failures,
%     it can be significantly beneficial.

More generally,
we can also consider $\simTrustGame^\simcost$, a~parametrization of $\simTrustGame$
    where simulation costs some $\simcost \in \R$.
As shown in Figure~\ref{tab:TG-NE},
    the equilibria of $\simTrustGame^\simcost$ are similar to the special case $\simcost = \simcostExample{}$
    for a wide range of $\simcost$.

\medskip
\begin{figure*}[tb]
\centering
\includegraphics[width=0.21\textwidth]{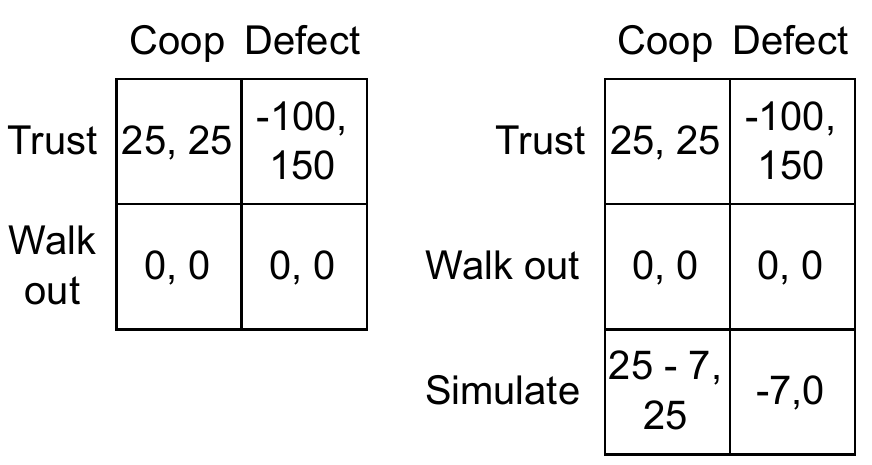}
$\ $
\includegraphics[width=0.75\textwidth]{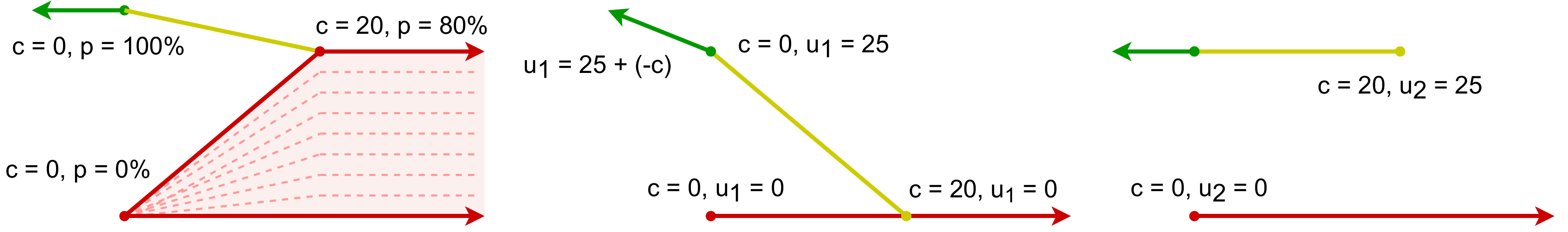}

\smallskip

\begin{tabular}{|c | l | l | l |}
\hline
 simulation cost
    & \multicolumn{3}{c |}{
        NE of $\simTrustGame^\simcost$, given as the probs. of
            (\textit{trust, walk-out, simulate}) and (\textit{cooperate, defect})
            % $\policy_\Alice(\simulate, \trust, \walkOut)$, $\policy_\Bob(\cooperate, \defect)$
    }
    \\
 \hline
$(-\infty, 0)$
    % & \ \ \ \ \ \ \ \ $(1, 0, 0)$ and $(1, 0)$ &  & \\
    & \ \ \ \ \ \ \ \ (\textit{simulate, cooperate}) &  & \\
\hline
% $(0, \defectLossAlice \cdot \nicefrac{\coopUtilAlice}{\defectLossAlice + \coopUtilAlice})$
$(0, \secondBreakpoint)$
    & $(\nicefrac{\coopUtilBob}{\defectUtilBob}, 0, 1 - \nicefrac{\coopUtilBob}{\defectUtilBob})$ and $(1 - \nicefrac{\simcost}{\defectLossAlice}, \nicefrac{\simcost}{\defectLossAlice})$
    % & $(0, 0, 1)$ and $(\nicefrac{\simcost}{\coopUtilAlice}, 1 - \nicefrac{\simcost}{\coopUtilAlice})$
    & \textit{walk-out} and $(\nicefrac{\simcost}{\coopUtilAlice}, 1 - \nicefrac{\simcost}{\coopUtilAlice})$
    % & $(0, 0, 1)$ and $(0, 1)$ 
    & (\textit{walk-out}, \textit{defect})
    \\
\hline
% $(\defectLossAlice \cdot \nicefrac{\coopUtilAlice}{\defectLossAlice + \coopUtilAlice}, \infty)$
$(\secondBreakpoint, \infty)$
    & 
    % & $(0, 0, 1)$ and $(\nicefrac{\defectLossAlice}{\defectLossAlice + \coopUtilAlice}, \nicefrac{\coopUtilAlice}{\defectLossAlice + \coopUtilAlice})$ 
    & \textit{walk-out} and $(\nicefrac{\defectLossAlice}{\utilDiffAlice}, \nicefrac{\coopUtilAlice}{\utilDiffAlice})$ 
    % & $(0, 0, 1)$ and $(0, 1)$
    & (\textit{walk-out}, \textit{defect})
    \\
    \hline
\end{tabular}
\caption{
    % The equilibria of simulation games are piecewise constant/linear.
    % \\
    Top left:  The normal-form representation of the trust game from Figure~\ref{fig:trust-game-efg}, before and after adding simulation.
    \\
    Bottom: The extremal equilibria of $\simTrustGame^\simcost$.
        The non-extremal NE are precisely the convex combinations of the last two columns. 
        % \lh{I think the probabilities might be the wrong way round given the ordering of the strategies in the normal form game above (Trust, Walk Out, Simulate). In particular, I think it ought to be $(\nicefrac{\coopUtilBob}{\defectUtilBob}, 0, 1 - \nicefrac{\coopUtilBob}{\defectUtilBob})$ not $(1 - \nicefrac{\coopUtilBob}{\defectUtilBob}, \nicefrac{\coopUtilBob}{\defectUtilBob}, 0)$, based on the text further above, at least.}
    \\
    Top right:
        The cooperation probability and utilities under each of these NE.
        The non-extremal NE are light red, the dashed lines illustrate the NE trajectories from Proposition~\ref{prop:piecewise_constant_linear}.    
    % The correspondence between the differently coloured lines and the NE from Table~\ref{tab:TG-NE} is
        % green (leftmost) $\leftrightarrow \policy_1(\simulate) = 1$,
        % yellow (top middle) $\leftrightarrow \policy_1(\simulate) \in (0, 1)$,
        % red (the remaining solid lines) $\leftrightarrow \policy_1(\walkOut) = 1$,
        % red dashed $\leftrightarrow$ the non-extremal NE with $\policy_1(\walkOut) = 1$.
    Note that all the red NE (i.e., with $\policy_1(\walkOut) = 1$) yield $\utility_1 = \utility_2 = 0$.
    }
    \label{fig:piecewise-linearity}\label{tab:TG-NE}\label{fig:trust-game-nfg}
\end{figure*}

\paragraph{Generalizable properties of the trust game}
The analysis of Figure~\ref{fig:piecewise-linearity} illustrates several trends that hold more generally:
First, \textit{when simulation is subsidized,
    the simulation game turns into a ``pure commitment game'' where the simulated player is the Stackelberg leader} (Prop.\,\ref{prop:NE-for-extreme-simcost}\,(i)).
Conversely, \textit{when simulation is prohibitively costly,
    the simulation game is equivalent to the original game}
        (Prop.\,\ref{prop:NE-for-extreme-simcost}\,(ii)).
Third,
    the simulation game has a finite number of breakpoints between which
            %\footnotemark{}
        individual equilibria change continuously
        --- more specifically, the simulator's strategy does not change at all while the simulated player's strategy changes linearly in $\simcost$
        (Prop.\,\ref{prop:piecewise_constant_linear}).
        % \footnotetext{For a discussion of NE at breakpoints, see footnote \ref{fn:breakpoint-equilibria}.
        %     \vc{not sure a footnote referring to a footnote does much good or that this one is necessary}
        %     \vk{(1) I updated this footnote to perhaps make more sense. (2) That said, I am not claiming it is necessary. But I expect some reviewer might complain about this otherwise? (Ofc, or perhaps they will complain now that we highlighted it.)}
        %     \co{I vote in favor of removal.}
        %     \vk{I guess I am out-voted then :-)}
        % }
    Informally speaking, \textit{simulation games have piecewise constant/linear equilibrium trajectories}.
    A corollary of this observation is that \textit{it is \emph{not} the case that as simulation gets cheaper, the simulator must use it more and more often} (Fig.\,\ref{fig:piecewise-linearity}).
        % (Intuitively, this is because the probability of using simulation is determined not by its cost but by the need to control the simulated player's incentives, which do not depend on $\simcost$.)
        %     \co{I think the preceding two sentences aren't that important, because they're relatively obvious in retrospect, and because these kinds of effects are probably familiar to game theorists. For example, many have probably observed similar effects in versions of Rock--Paper--Scissors where one player gets some bonus if (Rock,Rock) is played.}
Fourth,
    the indifference principle implies that when the simulator simulates with a nontrivial probability (i.e., neither $0$ nor $1$),
        \textit{the value of information of simulating must be precisely equal to the simulation cost}.
    This also implies that
        \textit{any pure NE of the original game is also a NE of the simulation game} for any $\simcost \geq 0$
            (Prop.\,\ref{prop:VoI-and-specific-NE}).
        % (This is sometimes undesirable, as in $\trustGame$.
        % \vc{next sentence about trembling hand perfection does not seem true??}
        %     \vk{I \textit{think} that with some minor disclaimers, it should be approximately true? That is, suppose you add $K \cdot \simcost$ of uniform noise, and that the game isn't such that the \Plone{} strategy from the pure NE isn't a best response to all $\actionOpp \in \actions_2$. Then there will be a constant $K$ for which you would be better of switching to simulating with probability $1$. So the only $K \epsilon$-trembling-hand equilibrium is going to be (simulate, Stackleberg)}
        %     \vk{However, perhaps that is too many disclaimers to make, and it would be better just to comment out the whole parenthesis.}\vc{trembling hand perfection is usually about the noise going to zero, not some major shock, I thought..  I guess there's a question about which goes to zero ``first'' $\epsilon$ of trembling hand perfection or $\simcost$.  I think this would probably require more working out, so probably should just be removed for now?}
        % Fortunately, such equilibria \vk{generally} won't be \vk{$\epsilon$-}trembling-hand perfect \cite{selten1975reexamination}, i.e., stable in the presence of noise.)
Finally, we saw that
    \textit{at $\simcost = 0$,
        the \emph{outcome} of the simulation game becomes deterministic
        despite the \emph{strategy} of the simulator being stochastic}.
    (For example, in the NE where Bobble always cooperates, Alice always ends up trusting him --- either directly or after first simulating.)
    In Section~\ref{sec:complexity}, we show that this result holds quite generally but not always:
    By Theorem\,\ref{thm:NE-of-generic-NFG}, \textit{the equilibria of generic normal-form games with cheap simulation can be found in linear time}.

\paragraph{Different effects of simulation} 
There are games where simulation behaves similarly to the Trust Game above.
Indeed, in Theorem~\ref{thm:generealised-TG-no-tiebreaking},
    we prove that \textit{simulation leads to a strict Pareto improvement in generalized trust games} with generic payoffs (defined in Section~\ref{sec:theory:social-welfare}).
However, simulation can also affect games quite differently from what we saw so far.
For example, \textit{simulation can benefit either of the players at the cost of the other, or even be harmful to both of them}.
    Indeed, simulation
        benefits only the simulator in zero-sum games \edit{(Prop.\,\ref{prop:sim-in-zero-sum-games})},
        benefits only the simulated player in the Commitment Game (Fig.\,\ref{fig:commitment-game}),
        and harms both if cooperation is predicated upon the simulated player's ability to maintain privacy (Ex.\,\ref{ex:exp-exp}).
    In fact,
        there are even \textit{cases where the Pareto optimal outcome requires simulation to be neither free nor prohibitively expensive} (Ex.\,\ref{ex:nontrivial-simcost}).
Finally,
    with multiple, incompatible ways to cooperate,
    a \textit{game might admit multiple simulation equilibria}
        (i.e., multiple NE with $\policy_1(\simulate) > 0$; cf. Fig.\,\ref{fig:multiple-sim-eq}).

\begin{figure}[!tb]
    \centering
    \includegraphics[align=t,width=0.12\textwidth]{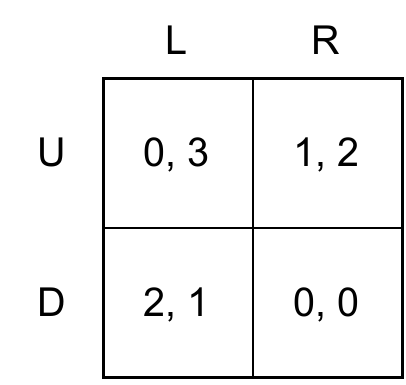}
    \ \ \ \ \ \ \ \ \ 
    \includegraphics[align=t,width=0.16\textwidth]{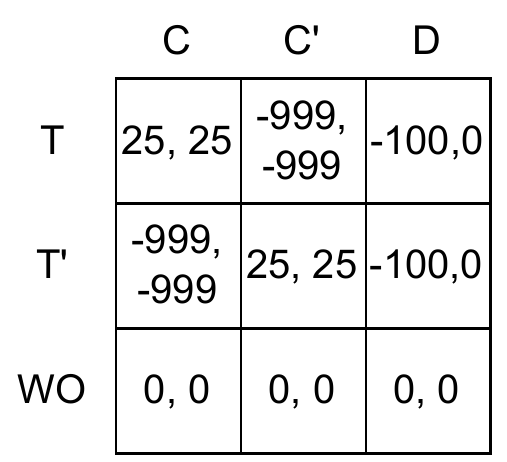}
    \caption{
        Left: Commitment game, where the row player prefers to \textit{not} be able to simulate.
        For details, see Example~\ref{ex:commitment-game-continued}.
        \\
        Right: A variant of Trust Game with multiple simulation NE.
    }
    \label{fig:commitment-game}\label{fig:multiple-sim-eq}
\end{figure}

    \subsection{Outline}

% \co{Often people (incl.\ me) use the overview of the results as a \textit{substitute} for the outline (or whatever one writes following ``The remainder of the paper is structured as follows''). E.g., \url{https://youtu.be/WP-FkUaOcOM?t=887} recommends this.) I guess currently the issue is that the overview of results above is quite long and doesn't follow the structure of the paper.}
%     \vk{I agree --- but for the reason you mention, I would just keep the one-paragraph outline in this particular paper.}

The remainder of the paper is structured as follows.
First, we
    recap the necessary background (Section~\ref{sec:background}).
In Section~\ref{sec:simgame-basics}, we formally define simulation games and describe their basic properties.
In Section~\ref{sec:structural}, we prove several structural properties of simulation games; while these are instrumental for the subsequent results, we also find them interesting in their own right.
Afterwards, we analyze
    the computational complexity of solving simulation games (Section~\ref{sec:complexity})
    and the effects of simulation on the players' welfare (Section~\ref{sec:theory:social-welfare}).
Finally, we
    review the most relevant existing work (Section~\ref{sec:related-work}),
    summarize our results,
    and discuss future work (Section~\ref{sec:discussion}).
The detailed proofs are presented in the appendix (which is only available in the arXiv version of the paper).

\section{Background}\label{sec:background}
    % two-player normal-form game
A two-player \defword{normal-form game} (NFG)
    is a pair $\game = (\actions, \utility)$
    where
        $\actions = \actions_1 \times \actions_2 \neq \emptyset$ is a finite set of \defword{actions} 
        %(of player one and two)
        and $\utility = (\utility_1, \utility_2) : \actions \to \R^2$ is the \defword{utility function}.
    % In this paper, we will only consider two-player games.
% players
We use \Plone{} and \Pltwo{} as shorthands for ``player one'' and ``player two''.
% strategy
For finite $X$, $\Delta(X)$ denotes the set of all probability distributions over $X$.
A \defword{strategy} (or policy) \defword{profile} is a pair $\policy = (\policy_1, \policy_2)$ of \defword{strategies} $\policy_\pl \in \Delta(\actions_\pl)$.
% set of policies
We denote the set of all strategies as $\policies = \policies_1 \times \policies_2$.
% pure strategy
$\policy_\pl$ is \defword{pure} if it has support $\supp (\policy_\pl)$ of size $1$. We identify such strategy with the corresponding action.

% expected utility
For $\policy \in \policies$,
    $\utility_\pl(\policy)
        := \sum_{(\action, \actionOpp) \in \actions}
            \policy_1(\action)\policy_2(\actionOpp)
            \utility_\pl(\action, \actionOpp)
    $
    is the \defword{expected utility} of $\policy$.
$\policy_1$ is said to be a \defword{best response} to $\policy_2$ if
    $\policy_1 \in \arg \max_{\policy'_1 \in \policies_1} \utility_1(\policy'_1, \policy_2)$;
$\br(\policy_2)$
    denotes the set of all pure best responses to $\policy_2$.
Since the \defword{best-response utility} ``$\utility_1 (\br, \symbolPlaceholder )$'' is uniquely determined by $\policy_2$, we denote it as
    $
        \utility_1 (\br, \policy_2)
        := \max_{\action \in \actions_1} \utility_1(\action, \policy_2)
    $.
(The analogous definitions apply for \Pltwo{} and $\policy_1$.)
% Nash equilibrium
A \defword{Nash equilibrium} (NE)
    % (or just ``an equilibrium'')
    is a strategy profile $(\policy_1, \policy_2)$ under which each player's strategy is a best response to the strategy of the other player.
    We use $\NE(\game)$ to denote the set of all Nash equilibria of $\game$.

\medskip

% pure Stackelberg equilibrium
A \defword{pure-commitment equilibrium} (cf. \cite{von1934marktform}) is, informally,
        a subgame-perfect
    equilibrium of the game in which
    the leader first commits to a pure action,
    after which the follower sees the commitment and best-responds, possibly stochastically.
Since our formalism will assume that \Plone{} is the simulator, we naturally encounter situations where \Pltwo{} acts as the leader.
Formally, we will use $\pureSE(\game)$ to denote
    all pairs $(\brSelector, \actionOpp)$ where
        the \defword{optimal commitment} $\actionOpp \in \actions_2$ and
        \Plone{}'s best-response policy $\brSelector : \actionOppAlt \in \actions_2 \mapsto \brSelector(\actionOppAlt) \in \Delta(\br(\actionOppAlt)) \subseteq \Delta(\actions_1)$
        satisfy
            $\actionOpp \in \arg \max_{\actionOppAlt \in \actions_2} \, \E_{\action \sim \brSelector(\actionOppAlt)} \, \utility_2(\action, \actionOppAlt)$.

\medskip

\edit{
Below, we sometimes restrict the analysis to particular classes of NFGs.
To motivate the first one,
    recall that a property is said to be generic (typical)
        if it holds for almost all elements of a set \cite[1.35]{rudinRealAndComplexAnalysis}

\begin{definition}[Generic games]\label{not:generic-games}
We say that a statement $P$ holds \defword{for games with generic payoffs} if, among games whose payoffs are sampled i.i.d. from the uniform distribution over $[0, 1]$, \textit{P} holds with probability 1.
\end{definition}}

\edit{
Since different joint actions in a generic-payoff NFG necessarily yield different payoffs,
    these games are a special case of the following more general class:

\begin{definition}[No best-response utility tiebreaking]\label{def:no-br-tiebreaking}
An NFG $\game$ is said to admit \defword{no best-response utility tiebreaking} by \Plone{} if
    for every pure strategy $\actionOpp$ of \Pltwo{}, any two pure best-responses $\action, \actionAlt \in \br(\actionOpp)$ give \Pltwo{} the same utility, i.e. $\utility_2(\action, \actionOpp) = \utility_2(\actionAlt, \actionOpp) =: \utility_2(\br, \actionOpp)$.
\end{definition}
}

\noindent
Note that if $\game$ satisfies Def.~\ref{def:no-br-tiebreaking},
    any pure-commitment equilibrium $\policy \in \pureSE(\game)$ can be identified with a joint action $(\action, \actionOpp)$ s.t.
    $\action \in \br(\actionOpp)$ and
    $\actionOpp \in \arg \max_{\actionOpp \in \actions_2} \utility_2(\br, \actionOpp)$.

% \medskip

% Several of the results in this paper refer to games with generic payoffs (i.e., ``typical'' or ``almost all'' games).
%     Formally, saying that a statement holds for a \defword{game with generic payoffs} shall mean that
%         if each payoff is an i.i.d. sample from the uniform distribution over $[0, 1]$,
%         the statement holds with probability $1$ (i.e., ``generically'')
%         \cite[1.35 Definition]{rudinRealAndComplexAnalysis}
%         \cite{genericPropertyWiki}. \vc{hmm ideally we wouldn't cite wikipedia -- also we only need this once right?  Maybe we just give the definition in a footnote at that point, where we can also point out that the exact condition is in the appendix?}
%     % For example, a sufficiently small perturbation to the payoffs of a generic NFG is guaranteed to not change the number of Nash equilibria.
%         % \vc{reference? is this relevant?}
%         % \vk{I guess it isn't relevant.}

\section{Simulation Games}\label{sec:simgame-basics}
    In this section, we formally define simulation games and describe their basic properties.
To streamline this initial investigation of simulation games, we assume that
    when the simulator learns the other agent's action, they
        always best-respond to it
    --- that is, they will not execute non-credible threats \cite{MAS}.
    \edit{(However, this assumption somewhat limits the applicability of the results, and we consider moving beyond it a worthwhile future direction.)}
        % \co{Why add this sentence? I think this is not the best place for future work remarks. Also space... Also, I expect people to have lots of questions about dropping this assumption.}
        % \vk{Happy to drop it for IJCAI page limit. Would prefer to keep it for arXiv version. And sure, we can move it to future work.}
        % \vk{Re people having questions: well, I think dropping the assumptions is important, and having questions about it is important, and the only reason (to me) for not including it is for strategic but unfortunate considerations regarding the review process.}
% This yields the following formal definition:

% \co{Again, I think this could just be a definition.}
%     \vk{I would prefer to stick with ``notation'' for something that is totally auxiliary and I see no reason for using it beyond the present paper.}
\edit{
\begin{notation}
For a two-player NFG $\game$,
    $
        \brSelectors
        := \{
            \brSelector : \actions_2 \to \Delta(\actions_1) \mid \forall \actionOpp \in \actions_2 : \brSelector(\actionOpp) \in \Delta(\br(\actionOpp))
        \}
    $,
    resp. $\brSelectorsPure \subset \brSelectors$,
    is the set of all stochastic, resp. pure \defword{best-response policies}.
\end{notation}
}

\begin{definition}[Simulation game]\label{def:simgame}
(1) For a \defword{simulation cost} $\simcost \in \R$,
    the \defword{simulation game} $\simgame^{\simcost,\all}$ is defined as
        the NFG that is identical to $\game$,
        \edit{
        except that \Plone{} additionally has access to ``simulation'' actions $\simulate_{\brSelector}$, $\brSelector \in \brSelectorsPure$,
            s.t.
            $\utility_1(\simulate_{\brSelector}, \actionOpp) := \utility_1(\br, \actionOpp) - \simcost$,
            $\utility_2(\simulate_{\brSelector}, \actionOpp) := \utility_2(\brSelector(\actionOpp), \actionOpp)$.}

(2) For a fixed $\brSelector \in \brSelectors$,
    $\simgame^\simcost := \simgame^{\simcost,\brSelector}$ denotes the game where
    \Plone{} has a single additional action $\simulate := \simulate_{\brSelector}$
    with
        $\utility_1(\simulate, \actionOpp) := \utility_1(\br, \actionOpp) - \simcost$,
        $\utility_2(\simulate, \actionOpp) := \E_{\action \sim \brSelector(\actionOpp)} \utility_2(\action, \actionOpp)$.
\end{definition}
% VOJTA: old version of the definition below
% \begin{definition}[Simulation game]
% Let
%     $\game$ be a two-player normal-form game,
%     $\simcost \in \R$ be a \defword{simulation cost},
%     and fix some best-response policy
%             $\brSelector : \actionOpp \in \actions_2 \mapsto \brSelector(\actionOpp) \in \Delta(\br(\actionOpp)) \subseteq \Delta(\actions_1)$.
% The corresponding \defword{simulation game} $\simgame^\simcost$
%     % corresponding to $\game$, $\simcost$, and $\brSelector$
%     is defined as the NFG
%         that is identical to $\game$,
%         except that \Plone{} has an additional ``simulate'' action that corresponds to utilities
%             $\utility_1(\simulate, \actionOpp) := \utility_1(\br, \actionOpp) - \simcost$,
%             $\utility_2(\simulate, \actionOpp) := \E_{\action \sim \brSelector(\actionOpp)} \utility_2(\action, \actionOpp)$.
% \end{definition}

\noindent
We refer to \Plone{} as the \defword{simulator} and to \Pltwo{} as the \defword{simulated player}.
When the exact value of $\simcost$ is unspecified or unimportant, we write $\simgame$ instead of $\simgame^\simcost$.

\subsection{Basic Properties}\label{sec:sub:basic-properties}

\edit{
In the remainder of this paper, we will only study simulation games in the context of a fixed best-response policy.
To justify this decision, note that the variants (1) and (2) of \Cref{def:simgame} are equivalent for most games:

\begin{restatable}{lemma}{simgameDefEquivalence}\label{lem:simgame-def-equivalence}
If (and only if) $\game$ admits no best-response utility tiebreaking by \Plone{},
$\simgame^{\simcost,\all}$ and $\simgame^\simcost$ are identical up to the existence of duplicate actions.
\end{restatable}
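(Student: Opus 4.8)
The plan is to read ``identical up to the existence of duplicate actions'' as the statement that, after deleting from each game every action whose payoff vector against $\actions_2$ coincides with that of another action, the two games become literally the same. Since $\simgame^{\simcost,\all}$ and $\simgame^\simcost$ share the same $\actions_2$ and contain all original actions of $\game$ with unchanged payoffs, the whole comparison reduces to the added simulation actions. The key preliminary observation I would record is that, by Definition~\ref{def:simgame}, every simulation action in either game gives \Plone{} exactly the payoff vector $(\utility_1(\br,\actionOpp)-\simcost)_{\actionOpp\in\actions_2}$; hence two simulation actions can differ only in what they assign to \Pltwo{}, namely the values $\utility_2(\brSelector(\actionOpp),\actionOpp)$ as $\brSelector$ varies.

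For the \emph{if} direction I would assume no best-response utility tiebreaking and show that all simulation actions collapse to one payoff vector. Under the hypothesis every pure best response to a fixed $\actionOpp$ yields \Pltwo{} the common value $\utility_2(\br,\actionOpp)$, so for any $\brSelector\in\brSelectors$ (pure or mixed) the support of $\brSelector(\actionOpp)$ lies in $\br(\actionOpp)$ and therefore $\E_{\action\sim\brSelector(\actionOpp)}\utility_2(\action,\actionOpp)=\utility_2(\br,\actionOpp)$. Consequently, in both $\simgame^{\simcost,\all}$ and $\simgame^\simcost$ every simulation action carries the identical vector $(\utility_1(\br,\actionOpp)-\simcost,\ \utility_2(\br,\actionOpp))_{\actionOpp}$. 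All such actions are mutual duplicates, so after deletion each game is just $\game$ together with a single simulation action bearing this vector, and the two reduced games coincide.

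For the \emph{only if} direction I would argue the contrapositive. If tiebreaking occurs, fix $\actionOpp_0$ and $\action,\actionAlt\in\br(\actionOpp_0)$ with $\utility_2(\action,\actionOpp_0)\neq\utility_2(\actionAlt,\actionOpp_0)$, and take two pure best-response policies $\brSelector_1,\brSelector_2\in\brSelectorsPure$ that agree off $\actionOpp_0$ and map $\actionOpp_0$ to $\action$, resp. $\actionAlt$. Their simulation actions in $\simgame^{\simcost,\all}$ share their \Plone{}-payoffs and agree on \Pltwo{}'s payoffs except at $\actionOpp_0$, so they are payoff-distinct; thus $\simgame^{\simcost,\all}$ carries at least two distinct simulation rows, whereas $\simgame^\simcost$ carries only one.

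The main obstacle is the remaining possibility that these extra rows merely duplicate original actions of $\game$, which could let the reduced games coincide in spite of the tiebreak. I would handle this by noting that a simulation row equals an original action $a$ only if $\utility_1(\br,\actionOpp)-\simcost=\utility_1(a,\actionOpp)$ for every $\actionOpp$, a condition that pins $\simcost$ down uniquely for each $a$; as $\actions_1$ is finite, only finitely many $\simcost$ admit any such coincidence. Since Definition~\ref{def:simgame} introduces $\simgame^{\simcost,\all}$ for every $\simcost\in\R$, it suffices for the contrapositive to exhibit one cost at which the games differ: choosing $\simcost$ outside this finite exceptional set makes the two simulation rows genuinely new actions, so the reduced $\simgame^{\simcost,\all}$ has strictly more simulation-type actions than the reduced $\simgame^\simcost$ and the two cannot be identical up to duplicates. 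Flagging this degenerate-merge case and dispatching it by the finiteness-of-bad-costs argument is, I expect, the only delicate part of the proof.
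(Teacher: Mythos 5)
Your proof is correct, and it is best described as the careful elaboration of an argument the paper does not actually spell out: the paper's entire proof of this lemma is the single sentence that it ``immediately follows'' from Definitions~\ref{def:no-br-tiebreaking} and~\ref{def:simgame}. Your \emph{if} direction is exactly that immediate definitional argument: absent tiebreaking, every simulation action in either game carries the payoff vector $\left(\utility_1(\br,\actionOpp)-\simcost,\ \utility_2(\br,\actionOpp)\right)_{\actionOpp\in\actions_2}$, so all simulation actions are mutual duplicates and the reduced games coincide. Where you genuinely add something is the \emph{only if} direction: you notice that the payoff-distinct simulation rows produced by a tiebreak might, at special values of $\simcost$, merge with original rows of $\game$, and you dispose of this with the finitely-many-bad-costs argument. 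That care is not cosmetic. Consider $\game$ with $\actions_2=\{b\}$, $\actions_1=\{a_1,a_2\}$, $\utility_1(a_1,b)=\utility_1(a_2,b)=1$, $\utility_2(a_1,b)=0$, $\utility_2(a_2,b)=1$: tiebreaking holds, yet at $\simcost=0$ both $\simgame^{0,\all}$ and $\simgame^{0}$ (with a pure $\brSelector$) reduce, up to duplicates, to the same two rows $(1,0)$ and $(1,1)$. So the ``only if'' claim is false if read pointwise at a fixed $\simcost$; it holds only when quantified over $\simcost$ (equivalently, for generic $\simcost$), which is precisely the reading your exhibit-one-good-cost step supplies. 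In short, your proof follows the same definitional route the paper gestures at, but it also identifies and repairs an imprecision that the paper's one-line proof glosses over entirely.
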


Moreover, the problem of solving general simulation games can be reduced to the problem of solving simulation games for a fixed best-response policy:

\begin{restatable}{lemma}{reductionToFixedBRpolicy}\label{lem:reduction-to-fixed-br-policy}
$\NE(\simgame^{\simcost,\all}) \setminus \NE(\game)$ (i.e., the new NE introduced by adding simulation)
can be written as a disjoint union
$
    \dot{\bigcup}_{\brSelector \in \brSelectors}
    \NE(\simgame^{\simcost, \brSelector})
    \setminus
    \NE(\game)
$.
\end{restatable}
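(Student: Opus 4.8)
The plan is to exploit the fact that, in $\simgame^{\simcost,\all}$, every simulation action $\simulate_\brSelector$ ($\brSelector \in \brSelectorsPure$) yields \Plone{} the \emph{same} payoff $\utility_1(\br,\actionOpp)-\simcost$ against each $\actionOpp$, so these actions are interchangeable for \Plone{} and differ only in the payoff they confer on \Pltwo{}. Consequently, any \Plone{}-strategy $\policy_1$ in $\simgame^{\simcost,\all}$ placing total mass $p := \sum_{\brSelector\in\brSelectorsPure}\policy_1(\simulate_\brSelector)$ on simulation is payoff-equivalent — against every $\policy_2$, for both players — to the strategy that keeps the non-simulation part of $\policy_1$ unchanged and moves the mass $p$ onto the single action $\simulate_{\brSelector^\ast}$ of $\simgame^{\simcost,\brSelector^\ast}$, where $\brSelector^\ast := \tfrac1p \sum_{\brSelector\in\brSelectorsPure}\policy_1(\simulate_\brSelector)\,\brSelector \in \brSelectors$ is the induced (stochastic) best-response policy. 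I would first record this payoff-equivalence as a short computation, noting that $\brSelectors = \prod_{\actionOpp}\Delta(\br(\actionOpp))$ is a product of simplices, so $\brSelector^\ast$ indeed lies in $\brSelectors$ and, conversely, every $\brSelector\in\brSelectors$ is a convex combination $\sum_k \convexCombination_k \brSelector_k$ of pure policies $\brSelector_k\in\brSelectorsPure$.

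Next I would set up the correspondence underlying the claimed identity. A preliminary observation handles ``newness'': if a profile is a NE of $\simgame^{\simcost,\all}$ (resp.\ $\simgame^{\simcost,\brSelector}$) in which \Plone{} assigns zero total probability to simulation, then, since dropping the simulation actions only removes options from \Plone{} and leaves \Pltwo{}'s action set untouched, the profile is already a NE of $\game$; hence every element of $\NE(\simgame^{\simcost,\all})\setminus\NE(\game)$ and of $\NE(\simgame^{\simcost,\brSelector})\setminus\NE(\game)$ has $p>0$. For the projection direction, given a new NE $(\policy_1,\policy_2)$ of $\simgame^{\simcost,\all}$ I map it to the profile of $\simgame^{\simcost,\brSelector^\ast}$ obtained by collapsing the simulation mass onto $\simulate_{\brSelector^\ast}$; by the payoff-equivalence this profile has the same payoffs and the same deviation values (the original actions are shared, and every simulation action in either game gives \Plone{} the common value $\utility_1(\br,\policy_2)-\simcost$), so it is a NE, and it is new since $p>0$. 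For the lifting direction, given a new NE of $\simgame^{\simcost,\brSelector}$ I decompose $\brSelector=\sum_k\convexCombination_k\brSelector_k$ and spread the simulation mass over the $\simulate_{\brSelector_k}$; the same equivalence shows the result is a new NE of $\simgame^{\simcost,\all}$ whose induced policy is again $\brSelector$.

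Finally I would argue disjointness and conclude. The natural identification is the one already implicit in \Cref{lem:simgame-def-equivalence}: two \Plone{}-strategies of $\simgame^{\simcost,\all}$ are identified when they share a non-simulation part and an induced best-response policy $\brSelector^\ast$ (this is strictly finer than payoff-equivalence, hence unambiguous even when best-response tiebreaking occurs). Under this identification the projection becomes a bijection whose inverse is the lifting, and since each new NE of $\simgame^{\simcost,\all}$ determines its induced $\brSelector^\ast$ uniquely, the fibres over distinct $\brSelector$ are disjoint; matching the fibre over $\brSelector$ with $\NE(\simgame^{\simcost,\brSelector})\setminus\NE(\game)$ yields the disjoint-union decomposition. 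I expect the main obstacle to be precisely this bookkeeping: the two families of games live over formally different action sets (pure simulation actions in $\simgame^{\simcost,\all}$ versus a single, possibly stochastic, simulation action in $\simgame^{\simcost,\brSelector}$), and the decomposition $\brSelector=\sum_k\convexCombination_k\brSelector_k$ is non-unique, so the set equality must be read up to the ``duplicate simulation actions'' identification rather than literally; making this identification precise, and checking that it interacts correctly with best-response tiebreaking, is the crux.
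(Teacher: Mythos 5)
Your proposal is correct and follows essentially the same route as the paper's proof: collapse the simulation mass of a new NE of $\simgame^{\simcost,\all}$ onto the single action $\simulate_{\brSelector^\ast}$ for the induced mixture $\brSelector^\ast$, lift a new NE of $\simgame^{\simcost,\brSelector}$ back by decomposing $\brSelector$ into pure best-response policies, check that both operations preserve all players' payoffs and deviation values, and obtain disjointness because a new NE determines its induced $\brSelector$ uniquely.

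There is one point of divergence, and it is in your favor. The paper's lifting direction asserts that ``every stochastic best-response policy can be uniquely written as a convex combination of deterministic best-response policies,'' which is false in general: $\brSelectors = \prod_{\actionOpp \in \actions_2}\Delta(\br(\actionOpp))$ is a product of simplices, and once two or more factors are non-trivial it is not itself a simplex, so interior points admit multiple vertex decompositions (e.g., uniform mixing over two best responses at each of two opponent actions can be written as $\tfrac12[(a_1,a_1)+(a_2,a_2)]$ or $\tfrac12[(a_1,a_2)+(a_2,a_1)]$). Consequently the collapse map is many-to-one, and the stated equality only holds after identifying strategies of $\simgame^{\simcost,\all}$ that share a non-simulation part and an induced $\brSelector^\ast$ --- exactly the ``duplicate actions'' identification you invoke, in the spirit of \Cref{lem:simgame-def-equivalence}. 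You correctly flag this non-uniqueness as the crux and resolve it, whereas the paper papers over it with an incorrect uniqueness claim; your version of the argument is the more accurate one.
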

}

The next observation we make (Proposition~\ref{prop:NE-for-extreme-simcost}) is that
    if simulation is too costly, then it is never used and the simulation game $\simgame$ becomes strategically equivalent to the original game $\game$.
    Conversely, if simulation is subsidized (i.e., a negative simulation cost), then \Plone{} will always use it,
        which effectively turns $\simgame$ into a pure-commitment game with \Pltwo{} moving first.
    (The situation is similar when simulation is free but not subsidized,
        except that this allows for additional equilibria where the simulation probability is less than $1$.)
        % However, these equilibria generally won't be trembling-hand perfect \cite{selten1975reexamination}.
        %     \vc{this does sound right -- but should it be a proposition somewhere, perhaps in a later version of the paper?}

\begin{restatable}[Equilibria for extreme simulation costs]{proposition}{basicNEresults}\label{prop:NE-for-extreme-simcost}
In any simulation game $\simgame$, we have:
\begin{enumerate}[label={(\roman*)}]
    \item For $\simcost < 0$, simulating is a strongly dominant action.
    \\
        In particular, $\NE(\simgame^\simcost) \subseteq \pureSE(\game)$.\footnotemark{}
    % \item For $\simcost = 0$, $\simulate$ is a very weakly dominant\footnotemark{} action.
    
    %     In particular, $\NE(\simgame^0) \supset \pureSE(\game)$.
    \item For 
        $
            \simcost
            >
            \underset{\action \in \actions_1,  \actionOpp \in \actions_2}{\max}
                \utility_1(\action, \actionOpp)
            - 
            \underset{\policy_1 \in \policies_1}{\max}
            \underset{\policy_2 \in \policies_2}{\min}
                \utility_1(\policy_1, \policy_2)
        $,\\
        $\simulate$ is a strictly dominated action.
        
        In particular, $\NE(\simgame^\simcost) = \NE(\simgame)$.
\end{enumerate}
\end{restatable}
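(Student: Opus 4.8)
The plan is to prove each part by exhibiting a dominance relation for the simulation action $\simulate$ and then reading off the stated consequence for $\NE(\simgame^\simcost)$. Throughout I work with the fixed-policy game $\simgame^\simcost = \simgame^{\simcost,\brSelector}$ of Definition~\ref{def:simgame}(2), so that there is a single extra action $\simulate = \simulate_\brSelector$.

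\textbf{Part (i).} First I would verify strong dominance directly. Fix any $\actionOpp \in \actions_2$ and any original action $\action \in \actions_1$. By definition $\utility_1(\simulate, \actionOpp) = \utility_1(\br, \actionOpp) - \simcost = \max_{\actionAlt \in \actions_1} \utility_1(\actionAlt, \actionOpp) - \simcost$, so $\utility_1(\action, \actionOpp) \le \utility_1(\br, \actionOpp)$ together with $-\simcost > 0$ yields $\utility_1(\simulate, \actionOpp) > \utility_1(\action, \actionOpp)$. As this holds for every $\actionOpp$ and every original $\action$, $\simulate$ strongly dominates every other action of \Plone{}. For the inclusion $\NE(\simgame^\simcost) \subseteq \pureSE(\game)$, I would use that a strongly dominant action is played with probability $1$ in every NE: in any $\policy \in \NE(\simgame^\simcost)$ we have $\policy_1 = \simulate_\brSelector$, and $\policy_2$ best-responds to it. Expanding $\utility_2(\simulate, \symbolPlaceholder)$ via the definition shows $\policy_2$ is supported on $\arg\max_{\actionOpp \in \actions_2} \E_{\action \sim \brSelector(\actionOpp)} \utility_2(\action, \actionOpp)$, i.e.\ on optimal commitments, so the profile is identified with pure-commitment equilibria $(\brSelector, \actionOpp) \in \pureSE(\game)$.

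\textbf{Part (ii).} Here the idea is to dominate $\simulate$ by a maximin strategy over the original actions. Let $\policy_1^\star \in \arg\max_{\policy_1 \in \policies_1} \min_{\policy_2 \in \policies_2} \utility_1(\policy_1, \policy_2)$ (it exists since the game is finite), and write $\underline{\utility}_1 := \max_{\policy_1}\min_{\policy_2}\utility_1(\policy_1,\policy_2)$ for \Plone{}'s maximin value and $M := \max_{\action \in \actions_1, \actionOpp \in \actions_2} \utility_1(\action, \actionOpp)$. Since the minimum of a linear function over a simplex is attained at a vertex, $\utility_1(\policy_1^\star, \actionOpp) \ge \underline{\utility}_1$ for every pure $\actionOpp$. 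The hypothesis $\simcost > M - \underline{\utility}_1$ rearranges to $M - \simcost < \underline{\utility}_1$, and because $\utility_1(\simulate, \actionOpp) = \utility_1(\br, \actionOpp) - \simcost \le M - \simcost$ for all $\actionOpp$, I obtain $\utility_1(\simulate, \actionOpp) < \utility_1(\policy_1^\star, \actionOpp)$ for every $\actionOpp$, i.e.\ $\policy_1^\star$ strictly dominates $\simulate$.

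For the equality $\NE(\simgame^\simcost) = \NE(\game)$ I would invoke the standard fact that a strictly dominated action carries probability $0$ in every NE and that deleting it preserves the NE set, spelling out both inclusions: in the forward direction any $\policy \in \NE(\simgame^\simcost)$ has $\policy_1(\simulate) = 0$, so it lives in $\game$ and remains a best response after removing the option; in the backward direction, extending an NE of $\game$ to $\simgame^\simcost$ cannot break \Plone{}'s best-response condition because the only new option $\simulate$ is strictly worse. I expect the two payoff inequalities to be routine; the main thing to get right is the bookkeeping around the two ``in particular'' clauses --- especially stating precisely (as the footnote presumably does) the sense in which a possibly-mixed best response of \Pltwo{} in part (i) is identified with the pure-commitment equilibria supported on its support.
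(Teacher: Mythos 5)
Your proof is correct and follows essentially the same route as the paper: in (i) you verify strong dominance of $\simulate$ directly from $\utility_1(\simulate, \actionOpp) = \utility_1(\br, \actionOpp) - \simcost$ and identify the resulting equilibria (P1 simulating, P2 playing optimal commitments) with $\pureSE(\game)$, and in (ii) you dominate $\simulate$ by \Plone{}'s maximin policy, exactly as the paper does. Your write-up is in fact more detailed than the paper's terse proof (e.g., spelling out the threshold inequality and both inclusions for $\NE(\simgame^\simcost) = \NE(\game)$, and flagging the identification of mixed best responses with pure-commitment equilibria), but the underlying argument is the same.
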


\footnotetext{
    If we allowed \Plone{} to consider all possible best-response policies,
        $\NE(\simgame^\simcost) \subseteq \pureSE(\game)$ would turn into equality.
}

\subsection{Information-Value of Simulation}\label{sec:sub:VoI}

The following definition measures the extra utility that the simulator can gain by using the knowledge of the other player's strategy:

\begin{definition}[Value of information of simulation]\label{def:VoI}
The \defword{value of information of simulation} for $\policy_2 \in \policies_2$ is
\begin{align*}
    \VoI(\policy_2)
        :=
        &
        \left( \sum\nolimits_{\actionOpp \in \actions_2}
            \!\!\!
            \policy_2(\actionOpp)
            \max_{\action \in \actions_1}
                \utility_1(\action, \actionOpp) \right)
        \!
        -
        \!\!\!\,
        \max_{\policy_1 \in \policies_1}
            \!
            \utility_1(\policy_1, \policy_2)
        .
\end{align*}
\end{definition}

\begin{restatable}{lemma}{VoIElementary}\label{lem:VoIandSim}
$
    \forall \policy_2
    :
    \utility_1(\simulate, \policy_2)
    = \utility_1(\br, \policy_2) + \VoI(\policy_2) - \simcost
$.
\end{restatable}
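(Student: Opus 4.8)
The plan is to unfold both sides into sums over the pure actions $\actionOpp \in \actions_2$ and then match them using the definition of $\VoI$. First I would use the fact that in the extended game $\simgame^\simcost$ the simulation action $\simulate$ is an ordinary action, so that \Plone{}'s expected utility against a mixed strategy $\policy_2$ is linear in $\policy_2$, giving
\[
    \utility_1(\simulate, \policy_2)
    = \sum_{\actionOpp \in \actions_2} \policy_2(\actionOpp)\, \utility_1(\simulate, \actionOpp).
\]
Substituting the defining identity $\utility_1(\simulate, \actionOpp) = \utility_1(\br, \actionOpp) - \simcost$ from Definition~\ref{def:simgame}\,(2), and using $\sum_{\actionOpp} \policy_2(\actionOpp) = 1$ to pull the constant $\simcost$ out of the sum, this becomes
\[
    \utility_1(\simulate, \policy_2)
    = \left( \sum_{\actionOpp \in \actions_2} \policy_2(\actionOpp)\, \utility_1(\br, \actionOpp) \right) - \simcost.
\]

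Next I would identify the parenthesized sum. Recalling that $\utility_1(\br, \actionOpp) = \max_{\action \in \actions_1} \utility_1(\action, \actionOpp)$, it is exactly the first term in Definition~\ref{def:VoI}, so rearranging the definition of $\VoI(\policy_2)$ gives
\[
    \sum_{\actionOpp \in \actions_2} \policy_2(\actionOpp)\, \utility_1(\br, \actionOpp)
    = \VoI(\policy_2) + \max_{\policy_1 \in \policies_1} \utility_1(\policy_1, \policy_2).
\]
Finally I would invoke the standard fact that a best response to a mixed strategy can always be taken pure, i.e. $\max_{\policy_1 \in \policies_1} \utility_1(\policy_1, \policy_2) = \max_{\action \in \actions_1} \utility_1(\action, \policy_2) = \utility_1(\br, \policy_2)$, which follows from the linearity of $\utility_1(\symbolPlaceholder, \policy_2)$ over the simplex $\policies_1$. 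Chaining the three displayed equalities then yields the claim.

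I do not anticipate a genuine obstacle, as the statement is an elementary rearrangement of the definitions of $\simulate$ and of $\VoI$. The only point requiring a moment's care is reconciling the two notational conventions for $\utility_1(\br, \symbolPlaceholder)$ — the one in the background section writes it as a maximum over actions $\max_{\action \in \actions_1} \utility_1(\action, \policy_2)$, whereas the last term of Definition~\ref{def:VoI} is the mixed-strategy maximum $\max_{\policy_1 \in \policies_1} \utility_1(\policy_1, \policy_2)$; verifying that these agree is exactly the pure-best-response observation used in the final step.
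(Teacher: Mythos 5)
Your proposal is correct and follows essentially the same route as the paper's proof: expand $\utility_1(\simulate, \policy_2)$ by linearity using $\utility_1(\simulate, \actionOpp) = \utility_1(\br, \actionOpp) - \simcost$, then recognize the resulting sum via the definition of $\VoI(\policy_2)$. The only difference is that you explicitly justify the identification $\max_{\policy_1 \in \policies_1} \utility_1(\policy_1, \policy_2) = \utility_1(\br, \policy_2)$, a step the paper performs implicitly by simply calling both quantities ``\Plone{}'s best-response utility against $\policy_2$''.
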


\noindent
\Cref{lem:VoIandSim} implies
    that $\VoI(\policy_2)$ always lies between $0$ and the difference between maximum possible $\utility_1$ and \Plone{}'s maxmin value.
Moreover, to make \Plone{} simulate with a non-trivial probability,
    \Pltwo{} needs to pick a strategy whose value of information is equal to the simulation cost:

\begin{restatable}[$\VoI$ is equal to simulation cost]{lemma}{VOIequalsSIMCOST}
    \label{lem:VoI}
(1) For any $\policy \in \NE(\simgame^\simcost)$, we have
    $\policy_1(\simulate) \in (0, 1) \implies \VoI(\policy_2) = \simcost$.
% \begin{align}
    % & \policy_1(\simulate) = 0 \phantom{(0 1.}  \implies \VoI(\policy_2) \leq \simcost
    % \\
    % & \policy_1(\simulate) \in (0, 1) \implies \VoI(\policy_2) = \simcost
    % \\
    % & \policy_1(\simulate) = 1 \phantom{(0 1.} \implies \VoI(\policy_2) \geq \simcost
%     .
% \end{align}
(2) Moreover,
    unless $\game$ admits multiple optimal commitments of \Pltwo{} that do not have a common best-response,
    any $\policy \in \NE(\simgame^0)$ has $\VoI(\policy_2) = 0$.
\end{restatable}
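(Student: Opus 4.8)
The plan for part (1) is to combine the indifference principle with \Cref{lem:VoIandSim}. Suppose $\policy \in \NE(\simgame^\simcost)$ with $\policy_1(\simulate) \in (0,1)$. Since $\policy_1(\simulate) > 0$, the action $\simulate$ lies in the support of \Plone{}'s equilibrium strategy and hence attains \Plone{}'s equilibrium value, which weakly dominates every available pure action; in particular $\utility_1(\simulate, \policy_2) \ge \max_{\action \in \actions_1}\utility_1(\action, \policy_2) = \utility_1(\br, \policy_2)$. Since $\policy_1(\simulate) < 1$, some original action $\action \in \actions_1$ also lies in the support and thus attains the same equilibrium value, so $\utility_1(\simulate, \policy_2) = \utility_1(\action, \policy_2) \le \utility_1(\br, \policy_2)$. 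Combining the two inequalities gives $\utility_1(\simulate, \policy_2) = \utility_1(\br, \policy_2)$, and substituting this into \Cref{lem:VoIandSim} immediately yields $\VoI(\policy_2) = \simcost$.

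For part (2) I would first record the elementary characterization that $\VoI(\policy_2) = 0$ holds exactly when the actions in $\supp(\policy_2)$ admit a common pure best response. This follows from the chain $\sum_{\actionOpp}\policy_2(\actionOpp)\max_{\action}\utility_1(\action,\actionOpp) \ge \max_{\action}\sum_{\actionOpp}\policy_2(\actionOpp)\utility_1(\action,\actionOpp) = \utility_1(\br,\policy_2)$, where the inequality expresses that letting the maximizing action depend on the realized $\actionOpp$ can only increase \Plone{}'s value; equality (i.e. $\VoI(\policy_2)=0$) holds precisely when a single $\action^\ast$ simultaneously maximizes $\utility_1(\symbolPlaceholder,\actionOpp)$ over all $\actionOpp \in \supp(\policy_2)$. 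With this in hand I would case-split on $\policy_1(\simulate)$.

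If $\policy_1(\simulate) \in (0,1)$, part (1) applied with $\simcost = 0$ gives $\VoI(\policy_2) = 0$. If $\policy_1(\simulate) = 0$, the equilibrium condition says $\simulate$ is not a profitable deviation, so $\utility_1(\simulate,\policy_2) \le \utility_1(\br,\policy_2)$; by \Cref{lem:VoIandSim} at $\simcost=0$ this reads $\VoI(\policy_2) \le 0$, and since \Cref{lem:VoIandSim} also shows $\VoI \ge 0$, we conclude $\VoI(\policy_2)=0$. The remaining case $\policy_1(\simulate) = 1$ is the crux: here \Pltwo{}'s best-response condition forces $\supp(\policy_2) \subseteq \arg\max_{\actionOpp}\utility_2(\simulate,\actionOpp) = \arg\max_{\actionOpp}\E_{\action\sim\brSelector(\actionOpp)}\utility_2(\action,\actionOpp)$, i.e.\ the support of $\policy_2$ consists of \Pltwo{}'s optimal commitments. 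By the hypothesis of the lemma these optimal commitments possess a common best response $\action^\ast$, which is then a best response to every $\actionOpp \in \supp(\policy_2)$, so the characterization above gives $\VoI(\policy_2)=0$.

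The main obstacle is exactly the $\policy_1(\simulate)=1$ case: there the indifference principle no longer constrains \Plone{} (at $\simcost=0$, playing $\simulate$ is automatically optimal because $\VoI \ge 0$), so the argument cannot route through \Plone{}'s incentives and must instead exploit \Pltwo{}'s equilibrium behaviour together with the no-common-best-response hypothesis. I would also take care to fix the intended reading of ``multiple optimal commitments that do not have a common best response'': the clean sufficient condition for the conclusion is that the \emph{full} set of optimal commitments has a nonempty intersection of best-response sets $\bigcap_{\actionOpp}\br(\actionOpp)$ — which automatically covers the unique-optimal-commitment case — and I would confirm the statement is meant with this reading before finalizing.
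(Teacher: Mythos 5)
Your proof is correct and takes essentially the same approach as the paper: part (1) is the indifference principle combined with Lemma~\ref{lem:VoIandSim} (which the paper states contrapositively), and part (2) rests on the same two facts the paper uses---that $\policy_1(\simulate)=1$ forces $\supp(\policy_2)$ to consist of \Pltwo{}'s optimal commitments, and that $\VoI(\policy_2)=0$ holds iff $\supp(\policy_2)$ admits a common pure best response---which the paper merely packages as a contraposition (assume $\VoI(\policy_2)>0$, conclude $\policy_1(\simulate)=1$ and that $\supp(\policy_2)$ is a set of optimal commitments with no common best response) rather than as your case split on $\policy_1(\simulate)$. On your final concern, the paper does intend your ``clean'' reading: since its contrapositive exhibits $\supp(\policy_2)$ itself as a set of optimal commitments with $\bigcap_{\actionOpp}\br(\actionOpp)=\emptyset$, the hypothesis is the negation of ``some set of optimal commitments has empty intersection of best-response sets,'' which is equivalent to the full set of optimal commitments having a common best response, exactly as you proposed.
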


\noindent
(Where, in (2), a set of actions having a common best-response means that $\bigcap_{\actionOpp \in \actionSetOpp} \br(\actionOpp) \neq \emptyset$.)

% \section{Simulation Improves Social Welfare in Generalised Trust Games}
\section{Structural Properties}\label{sec:structural}
    We now review several structural properties that appear in simulation games because of the special nature of the simulation action.These results will prove instrumental when determining the complexity of simulation games (Sec.\,\ref{sec:complexity}) and predicting the impact of simulation on the players' welfare (Sec.\,\ref{sec:theory:social-welfare}).
Moreover, we find these results interesting 
%-- and somewhat surprising -- 
in their own right.
% \co{I'd probably write a bit about what these results are for. I think in the present case it's both that we use them later \textit{and} that they're interesting in and of themselves.} \co{Nice, I like what you have done presumably in response to my comment!}

% \subsection{Simulation games have piecewise constant/linear equilibrium trajectories}
%
%
The first of these properties is that a change of the simulation cost \textit{typically} results in a very particular change in a Nash equilibrium of the corresponding game:
The strategy of the simulating player (\Plone) doesn't change at all, while the simulated player's strategy changes linearly.
However, to be technically accurate, we need to make two disclaimers.
    First, there is a finite number of ``atypical'' values of $\simcost$, called breakpoints,
        where the nature of the NE strategies changes discontinuously.\footnotemark{}
    Second, there can be multiple equilibria, which complicates the formal description of the result.
\footnotetext{
    \label{fn:breakpoint-equilibria}
    % We avoid discussion of equilibria when $\simcost$ is equal to one of the breakpoints.
    % To justify this, observe that
    % Note that
        While all of the non-breakpoint equilibria extend to the corresponding breakpoints as limits (Definition~\ref{def:limit-eq}),
        the breakpoints might also admit additional non-limit equilibria, typically convex combinations of the limits
            (cf. \Cref{fig:piecewise-linearity}).
    % However, any breakpoint equilibrium that is not a limit equilibrium is going to be sensitive to modelling assumptions,
    %     and thus of smaller practical relevancy.
    %     \vc{This last sentence seems likely to raise more questions than it answers; maybe just drop.}
}

\begin{restatable}[Trajectories of simulation NE are piecewise constant/linear]{proposition}{piecewiseLinear}\label{prop:piecewise_constant_linear}
For every $\game$,
    there is a finite set of simulation-cost breakpoint values
        $- \infty = \exceptionPoint_{-1} < 0 = \exceptionPoint_0 < \exceptionPoint_1 < \dots < \exceptionPoint_k < \exceptionPoint_{k+1} = \infty $
    such that the following holds:
For every $\simcost_0 \in ( \exceptionPoint_l, \exceptionPoint_{l+1} )$
    and every $\policy^{\simcost_0} \in \NE(\simgame^{\simcost_0})$,
there is a linear mapping $\NEtrajectory_2 : \simcost \in [ \exceptionPoint_l, \exceptionPoint_{l+1} ] \mapsto \policy^\simcost_2 \in \policies_2$
such that
    $\NEtrajectory_2(\simcost_0) = \policy_2^{\simcost_0}$ and
    $(\policy^{\simcost_0}_1, \NEtrajectory_2(\simcost)) \in \NE(\simgame^\simcost)$ for every $\simcost \in [\exceptionPoint_l, \exceptionPoint_{l+1} ]$.
\end{restatable}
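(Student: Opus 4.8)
The plan is to exploit two special features of how $\simcost$ enters $\simgame^\simcost$. First, player 2's payoffs are entirely independent of $\simcost$, since $\utility_2(\simulate,\actionOpp)=\E_{\action\sim\brSelector(\actionOpp)}\utility_2(\action,\actionOpp)$ carries no $\simcost$. Second, in player 1's payoffs only the simulation row depends on $\simcost$, and it does so as a uniform shift: by \Cref{lem:VoIandSim}, $\utility_1(\simulate,\policy_2)=\utility_1(\br,\policy_2)+\VoI(\policy_2)-\simcost$. From this the equilibrium conditions split cleanly. The equations pinning down $\policy_1$ come from player 2's indifference, hence are $\simcost$-free, so I can \emph{hold $\policy_1$ fixed} as $\simcost$ moves -- this gives the ``constant simulator strategy'' half of the claim. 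The conditions pinning down $\policy_2$ come from player 1's indifference, and the only one touching $\simcost$ is the equation tying $\simulate$ to player 1's other best responses, which on a region of fixed best-response support reads $\VoI(\policy_2)=\simcost$: an affine equation in $\policy_2$ whose constant term moves linearly with $\simcost$. This is the source of the linear dependence of $\policy_2$.

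Concretely, I would fix a support pair $(S_1,S_2)$ with $S_1\subseteq\actions_1\cup\{\simulate\}$ and $S_2\subseteq\actions_2$, and a fixed $\policy_1$, and write out the conditions for $(\policy_1,\policy_2)\in\NE(\simgame^\simcost)$ with these supports. Using the two facts above, these conditions are \emph{affine in the joint variable} $(\policy_2,\simcost)$: the simplex constraints, player 2's best-response face, and player 1's indifference among the non-simulation actions of $S_1$ are all $\simcost$-free, while every simulation-related (in)equality is affine because $\utility_1(\simulate,\cdot)$ is affine in $(\policy_2,\simcost)$. Hence at a fixed $\simcost$ the set $K$ of feasible $\policy_2$ is a polytope, and on the region where player 1's pure best-response support stays fixed, $\VoI$ is \emph{linear} on $K$.

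For the breakpoints I would collect every $\simcost$ at which the combinatorial type can change: the value $0$ (where \Cref{prop:NE-for-extreme-simcost} already separates the subsidized, free and costly regimes), $\pm\infty$, and -- ranging over the finitely many support pairs -- the $\VoI$-values of the vertices of the corresponding polytopes $K$, together with the finitely many $\simcost$ at which some pure action enters or leaves a best-response set. Since all relevant functions are affine, this set is finite; order it as $\exceptionPoint_{-1}<\exceptionPoint_0<\dots<\exceptionPoint_{k+1}$. Between two consecutive breakpoints no vertex of any $K$ is crossed and no best-response support flips, so the family of slices $K_t:=\{\policy_2\in K:\VoI(\policy_2)=t\}$ has constant combinatorial type, and a standard slicing fact gives $K_t=\tfrac{\exceptionPoint_{l+1}-t}{\exceptionPoint_{l+1}-\exceptionPoint_l}K_{\exceptionPoint_l}+\tfrac{t-\exceptionPoint_l}{\exceptionPoint_{l+1}-\exceptionPoint_l}K_{\exceptionPoint_{l+1}}$ as a Minkowski combination.

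Finally, given $\simcost_0\in(\exceptionPoint_l,\exceptionPoint_{l+1})$ and $\policy^{\simcost_0}\in\NE(\simgame^{\simcost_0})$, I read off its supports and the polytope $K$. If $\policy_1^{\simcost_0}(\simulate)\notin(0,1)$, the only $\simcost$-dependent constraint is the one-sided inequality $\VoI(\policy_2^{\simcost_0})\lessgtr\simcost$, so the constant map $\NEtrajectory_2\equiv\policy_2^{\simcost_0}$ works throughout the closed interval (the relevant endpoint is a breakpoint by construction). If $\policy_1^{\simcost_0}(\simulate)\in(0,1)$, then \Cref{lem:VoI} gives $\VoI(\policy_2^{\simcost_0})=\simcost_0$, so $\policy_2^{\simcost_0}\in K_{\simcost_0}$, and the Minkowski identity lets me write $\policy_2^{\simcost_0}=\lambda p_-+(1-\lambda)p_+$ with $p_-\in K_{\exceptionPoint_l}$, $p_+\in K_{\exceptionPoint_{l+1}}$; the affine map through $p_-,p_+$ is my $\NEtrajectory_2$, which passes through $\policy_2^{\simcost_0}$, stays in $K$ by convexity, and keeps $\VoI(\NEtrajectory_2(\simcost))=\simcost$, i.e.\ player 1's indifference. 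Holding $\policy_1=\policy_1^{\simcost_0}$ fixed yields an NE for every $\simcost$ in $[\exceptionPoint_l,\exceptionPoint_{l+1}]$, endpoints included since the defining (in)equalities are weak. The main obstacle I anticipate is exactly the degeneracy in the last case: a generic equilibrium $\policy_2^{\simcost_0}$ need not be a vertex of its feasibility polytope, so the linear continuation must be chosen to pass through \emph{that specific point} while remaining feasible across the whole interval. The ``constant combinatorial type of slices between breakpoints'' argument is what licenses the convex-combination decomposition and hence furnishes the trajectory; making the finiteness of breakpoints and the affine-image property of the slices fully rigorous is where the real work lies.
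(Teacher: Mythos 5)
Your treatment of the interior case $\policy_1(\simulate)\in(0,1)$ is essentially sound, and it is in spirit the same argument as the paper's: the paper encodes the equilibrium conditions for each support pair as a linear program in which $\simcost$ enters only through the right-hand side, so that basic feasible solutions (your vertices) move linearly with $\simcost$, and then follows convex combinations of them; your slicing identity is a geometric restatement of exactly that. The genuine gap is in your boundary cases. For $\policy_1(\simulate)\in\{0,1\}$ you claim the constant map $\NEtrajectory_2\equiv\policy_2^{\simcost_0}$ works because ``the relevant endpoint is a breakpoint by construction.'' It is not: the relevant endpoint is $\VoI(\policy_2^{\simcost_0})$, and your breakpoint list contains only the $\VoI$-values of \emph{vertices} of the polytopes $K$, whereas $\policy_2^{\simcost_0}$ need not be a vertex. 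Concretely, take the $2\times 2$ pure coordination game ($\utility_1=\utility_2=1$ on matching joint actions, $0$ otherwise). Against $\simulate$, \Pltwo{} earns $1$ no matter what they play, so \emph{every} $\policy_2=(q,1-q)$ best-responds to $\simulate$, and $\VoI(\policy_2)=\min(q,1-q)$. At $\simcost_0=0.1$ the profile $(\simulate,(0.2,0.8))$ is a NE with $\policy_1(\simulate)=1$, and the first positive breakpoint of this game is $\nicefrac{1}{2}$ (by your own recipe the candidate vertex values are $0$ and $\nicefrac{1}{2}$). Yet the constant trajectory already fails at $\simcost=0.3\in(0,\nicefrac{1}{2})$, where $\utility_1(\simulate,(0.2,0.8))=0.7<0.8=\utility_1(\action_2,(0.2,0.8))$. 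The proposition does hold for this NE, but only via a \emph{non-constant} trajectory, e.g.\ $\NEtrajectory_2(\simcost)=(0.125+0.75\,\simcost,\ 0.875-0.75\,\simcost)$, which keeps $\VoI(\NEtrajectory_2(\simcost))\geq\simcost$ on all of $[0,\nicefrac{1}{2}]$. A symmetric failure occurs for $\policy_1(\simulate)=0$: there $\VoI(\policy_2^{\simcost_0})$ can lie strictly inside $(\exceptionPoint_l,\simcost_0)$, and the constant profile stops being a NE as soon as $\simcost$ drops below that value.

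The root cause is twofold. When $S_1=\{\simulate\}$, the support contains no original action, so nothing pins $\utility_1(\br,\symbolPlaceholder)$ to a single linear function on $K$; $\VoI$ is then only \emph{concave} piecewise linear on $K$, interior points can have $\VoI$ strictly above every vertex, and your slicing fact does not apply. When $\simulate\notin S_1$, the simulation constraint is one-sided, so $\policy_2^{\simcost_0}$ sits in a full-dimensional feasible region rather than on a slice, and its $\VoI$-value is a convex combination of vertex values --- generically not a breakpoint. The repair is to run the very same parametric machinery in these cases too: write $\VoI(\policy_2)\geq\simcost$ (resp.\ $\leq\simcost$) as the finitely many inequalities $\sum_{\actionOpp}\policy_2(\actionOpp)\utility_1(\br,\actionOpp)-\utility_1(\action,\policy_2)\geq\simcost$ for $\action\in\actions_1$ (resp.\ the reverse for the tight action), which are jointly affine in $(\policy_2,\simcost)$, and decompose $\policy_2^{\simcost_0}$ over the vertices of the $\simcost_0$-slice of the resulting polyhedron; those vertices move linearly in $\simcost$ (some are constant, namely those at which the simulation constraint is slack), and the induced convex combination is the desired trajectory. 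This is precisely what the paper's proof accomplishes in one stroke: it introduces a slack variable for every non-support action of \Plone{} (including $\simulate$), moves $\simcost$ to the right-hand side of the LP, observes that every basic feasible solution $\LPmatrix_\basis^{-1}\LPrhs$ is linear in $\simcost$, and follows convex combinations of these --- which covers all three cases ($\policy_1(\simulate)=0$, $\in(0,1)$, and $=1$) uniformly, producing exactly the non-constant trajectories your constant map misses.
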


Since we were not able to find any existing result that would immediately imply this proposition, we provide our own proof in the appendix.
    % Appendix~\ref{sec:app:linearity}.
    However, a related result in the context of parameterized linear programming appears in \cite[Prop.\,2.3]{parametrizedLPs}.
As an intuition for why this result holds,
    recall that in an equilibrium, each player uses a strategy that makes the other player indifferent between the actions in their support.
    Since \Pltwo's payoffs are not affected by $\simcost$, \Plone{} should keep their strategy constant to keep \Pltwo{} indifferent, even when $\simcost$ changes.
    Similarly, increasing $\simcost$ linearly decreases \Plone's payoff for the simulate action, so \Pltwo{} needs to linearly adjust their strategy to bring \Plone's payoffs back into equilibrium.

A particular corollary of Proposition~\ref{prop:piecewise_constant_linear} is that
    while one might perhaps expect simulation will gradually get used more and more as it gets more affordable,
    this is in fact not what happens
    --- instead, the simulation rate is dictated by the need to balance the unchanging tradeoffs of the other player.

\medskip

The second structural property of simulation games is
    the following refinement of Proposition~\ref{prop:NE-for-extreme-simcost}:

\begin{restatable}[Gradually recovering the NE of $\game$]{proposition}{VoIandSpecNE}
    \label{prop:VoI-and-specific-NE}
Let $\policy$ be a NE of $\game$.
Then $\policy$,
    as a strategy in $\simgame^\simcost$ with $\policy_1(\simulate) := 0$,
    is a NE
    precisely when $\simcost \geq \VoI(\policy_2)$.

In particular, $\VoI(\policy_2)$ is a breakpoint of $\game$.
\end{restatable}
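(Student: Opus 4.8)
The plan is to check the two best-response conditions defining a Nash equilibrium of $\simgame^\simcost$ for the extended profile $\policy$ (with $\policy_1(\simulate) := 0$) separately, isolate the single inequality that decides whether $\policy$ survives, and then argue that the threshold value is a breakpoint. I would first dispatch \Pltwo{}'s condition, which I expect to be immediate: because $\policy_1$ puts no weight on $\simulate$, \Pltwo{}'s expected payoff against $\policy_1$ in $\simgame^\simcost$ coincides with her payoff against $\policy_1$ in $\game$, and her action set is the same in both games. Hence $\policy_2$ is a best response to $\policy_1$ in $\simgame^\simcost$ if and only if it is one in $\game$, which holds since $\policy \in \NE(\game)$. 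So \Pltwo{}'s side holds automatically, for every $\simcost$.

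The substance therefore lies with \Plone{}'s condition. Since $\policy \in \NE(\game)$, the strategy $\policy_1$ already attains $\utility_1(\br, \policy_2) = \max_{\action \in \actions_1} \utility_1(\action, \policy_2)$ against $\policy_2$, so among the original actions it remains a best response in $\simgame^\simcost$; the only extra deviation available to \Plone{} is $\simulate$. Thus the extended $\policy_1$ is a best response exactly when $\utility_1(\simulate, \policy_2) \le \utility_1(\br, \policy_2)$. By \Cref{lem:VoIandSim} this reads $\utility_1(\br, \policy_2) + \VoI(\policy_2) - \simcost \le \utility_1(\br, \policy_2)$, i.e.\ $\simcost \ge \VoI(\policy_2)$. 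Combined with the automatic \Pltwo{} condition, this yields the claimed equivalence in both directions.

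For the final assertion I would show that $\VoI(\policy_2)$ is one of the breakpoints of \Cref{prop:piecewise_constant_linear}. Since $\VoI(\policy_2) \ge 0$, if $\VoI(\policy_2) = 0$ it coincides with the designated breakpoint $\exceptionPoint_0 = 0$; so I assume $\VoI(\policy_2) > 0$ and suppose for contradiction that it lies in the interior of some interval $(\exceptionPoint_l, \exceptionPoint_{l+1})$. Applying \Cref{prop:piecewise_constant_linear} at $\simcost_0 := \VoI(\policy_2)$ to the equilibrium $\policy$ produces a trajectory that keeps \Plone{}'s strategy equal to $\policy_1$ and moves \Pltwo{}'s strategy linearly, with $(\policy_1, \NEtrajectory_2(\simcost)) \in \NE(\simgame^\simcost)$ for all $\simcost \in [\exceptionPoint_l, \exceptionPoint_{l+1}]$ and $\NEtrajectory_2(\simcost_0) = \policy_2$. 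If this trajectory is in fact constant, i.e.\ $\NEtrajectory_2 \equiv \policy_2$, then $(\policy_1, \policy_2)$ would be an equilibrium of $\simgame^\simcost$ for some $\simcost < \VoI(\policy_2)$, contradicting the equivalence just established, so $\VoI(\policy_2)$ must be a breakpoint.

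The step I expect to be the main obstacle is precisely showing that \Pltwo{}'s strategy cannot genuinely move along this trajectory. Running the equivalence along the trajectory gives $\VoI(\NEtrajectory_2(\simcost)) \le \simcost$ with equality at $\simcost_0$, and one checks that $\simcost \mapsto \VoI(\NEtrajectory_2(\simcost))$ is affine on the interval: along the trajectory $\policy_1$ stays a maximizer among the original actions, so $\utility_1(\br, \NEtrajectory_2(\symbolPlaceholder))$ agrees with the affine map $\utility_1(\policy_1, \NEtrajectory_2(\symbolPlaceholder))$, and the remaining term of $\VoI$ is affine as well. This alone only forces the affine function to equal $\simcost$ identically, not to be constant. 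The clean way to close the gap is to use that the indifference conditions making $(\policy_1, \policy_2)$ an equilibrium pin \Pltwo{}'s strategy within its support, so the constant trajectory is the only admissible one; alternatively, one appeals directly to the construction of the breakpoint set in the proof of \Cref{prop:piecewise_constant_linear}, into which these threshold values enter by definition. Handling the degenerate case where \Pltwo{}'s strategy is not pinned down is where the real care is needed.
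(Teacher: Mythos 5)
Your proof of the headline equivalence is correct and is the same argument the paper gives: \Pltwo{}'s incentives are untouched because $\policy_1(\simulate) = 0$; among \Plone{}'s original actions $\policy_1$ stays optimal because $\policy \in \NE(\game)$; and by \Cref{lem:VoIandSim} the single new deviation $\simulate$ is unprofitable exactly when $\simcost \geq \VoI(\policy_2)$. (The paper compresses this into two sentences, citing \Cref{lem:VoI} for necessity and \Cref{lem:VoIandSim} for sufficiency, but the content is identical.)

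Where you diverge from the paper is the rider ``in particular, $\VoI(\policy_2)$ is a breakpoint'': the paper's proof contains no argument for it at all, whereas you attempt one and correctly locate the obstruction. You are right that the \emph{statement} of \Cref{prop:piecewise_constant_linear} cannot yield the rider: along the trajectory through $\policy$ at $\simcost_0 = \VoI(\policy_2)$, each point $(\policy_1, \NEtrajectory_2(\simcost))$ is again a no-simulation NE of $\game$, so your equivalence gives $\VoI(\NEtrajectory_2(\simcost)) \leq \simcost$ with equality at the interior point $\simcost_0$; since that composition is affine, this forces $\VoI(\NEtrajectory_2(\simcost)) = \simcost$ identically --- a trajectory along which \Plone{} remains exactly indifferent to simulating, which is no contradiction. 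So the rider genuinely requires the \emph{construction} behind the proposition, as your fallback suggests: in the LP of \Cref{prop:linearity-extremal-NE} for the supports of $\policy$, the slack variable attached to the $\simulate$ constraint evaluates, at the solution corresponding to $\policy$, to $\simcost - \VoI(\policy_2)$, and the values of $\simcost$ at which such slacks hit zero are exactly the feasibility-interval endpoints that the construction collects into the breakpoint set. Carrying this out (including the case where $\policy_2$ is not a vertex of its NE polytope --- the degenerate case you flag) is what a complete proof of the rider needs; neither your write-up nor the paper's supplies it, so on this point your attempt is, if anything, more thorough than the original.
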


\noindent
Together, these two results imply that
    with $\simcost = 0$,
        $\simgame^0$ may have no NE in common with $\game$.
    As we increase $\simcost$,
        the NE of $\game$ gradually appear in $\simgame^\simcost$ as well,
        while the simulation equilibria of $\simgame$
            (i.e., those with $\policy_1(\simulate) > 0$)
        gradually disappear,
    until eventually $\NE(\simgame^\simcost) = \NE(\game)$.

\subsection{Equilibria for Cheap Simulation}

By combining the concept of value of information with the piecewise constancy/linearity of simulation equilibria,
    we are now in a position to give a more detailed description of Nash equilibria of games where simulation is cheap.
First, we identify the equilibria of $\simgame$ with $\simcost = 0$ that might be connected to the equilibria for $\simcost > 0$:

\begin{definition}[Limit equilibrium of $\simgame$]\label{def:limit-eq}
A policy profile $\policy^0$ is a \defword{limit equilibrium} (at $\simcost = 0$) of $\simgame$
    if it is a limit of some $\policy^{\simcost_n} \in \NE(\simgame^{\simcost_n})$ where $\simcost_n \to 0_+$.
\end{definition}

\noindent
As witnessed by the Trust Game (and Table~\ref{tab:TG-NE} in particular), not every NE of $\simgame^0$ is a limit equilibrium.
Note that this definition
    automatically implies a stronger condition:

\begin{restatable}{lemma}{limitEquilibriaWLOG}\label{lem:limit-eq-alternative-def}
For any limit equilibrium $\policy^0$ of $\simgame$,
there is some $\exceptionPoint > 0$ and $\policy^\exceptionPoint_2$ such that
    for every $\simcost \in [0, \exceptionPoint]$,
    $(\policy^0_1, (1 - \frac{\simcost}{\exceptionPoint}) \policy^0_2
            + \frac{\simcost}{\exceptionPoint} \policy^\exceptionPoint_2)$
    is a NE of $\simgame^\simcost$.
\end{restatable}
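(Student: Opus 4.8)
The plan is to combine the definition of a limit equilibrium with the piecewise-constant/linear structure from Proposition~\ref{prop:piecewise_constant_linear} and a compactness argument. Since $\policy^0$ is a limit equilibrium, I would fix a sequence $\policy^{\simcost_n} \in \NE(\simgame^{\simcost_n})$ with $\simcost_n \to 0_+$ and $\policy^{\simcost_n} \to \policy^0$; in particular $\policy_1^{\simcost_n} \to \policy_1^0$ and $\policy_2^{\simcost_n} \to \policy_2^0$. Let $(0, \exceptionPoint_1) = (\exceptionPoint_0, \exceptionPoint_1)$ be the first breakpoint interval to the right of $0$ furnished by Proposition~\ref{prop:piecewise_constant_linear}. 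Since $\simcost_n \to 0_+$ and $\exceptionPoint_1 > 0$, we have $\simcost_n \in (0, \exceptionPoint_1)$ for all large $n$, so by passing to a tail of the sequence I may assume this holds for every $n$.

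For each $n$, Proposition~\ref{prop:piecewise_constant_linear} applied at $\simcost_0 = \simcost_n$ yields an affine (``linear'') map $\NEtrajectory_2^n : [0, \exceptionPoint_1] \to \policies_2$ with $\NEtrajectory_2^n(\simcost_n) = \policy_2^{\simcost_n}$ such that the \emph{same} first-player strategy $\policy_1^{\simcost_n}$ satisfies $(\policy_1^{\simcost_n}, \NEtrajectory_2^n(\simcost)) \in \NE(\simgame^\simcost)$ for every $\simcost \in [0, \exceptionPoint_1]$. Writing it in point--slope form $\NEtrajectory_2^n(\simcost) = \policy_2^{\simcost_n} + (\simcost - \simcost_n)\,\alpha_n$, I would first note that the slopes $\alpha_n$ are bounded: since $\NEtrajectory_2^n$ maps the interval $[0, \exceptionPoint_1]$ into the simplex $\policies_2$, we get $\exceptionPoint_1\,\lVert \alpha_n \rVert = \lVert \NEtrajectory_2^n(\exceptionPoint_1) - \NEtrajectory_2^n(0)\rVert \le \operatorname{diam}(\policies_2)$, which is finite.

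Boundedness lets me pass to a further subsequence along which $\alpha_n \to \alpha^\ast$; combined with $\policy_1^{\simcost_n}\to\policy_1^0$, $\policy_2^{\simcost_n}\to\policy_2^0$, and $\simcost_n\to 0$, this gives $\NEtrajectory_2^n(\simcost) \to \policy_2^0 + \simcost\,\alpha^\ast =: \NEtrajectory_2^\ast(\simcost)$ for each fixed $\simcost \in [0, \exceptionPoint_1]$, with $\NEtrajectory_2^\ast(\simcost) \in \policies_2$ because $\policies_2$ is closed. Now fix any $\simcost$: the profiles $(\policy_1^{\simcost_n}, \NEtrajectory_2^n(\simcost))$ are Nash equilibria of the \emph{single} game $\simgame^\simcost$ and converge to $(\policy_1^0, \NEtrajectory_2^\ast(\simcost))$, so by closedness of $\NE(\simgame^\simcost)$ the limit is again a NE of $\simgame^\simcost$. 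Setting $\exceptionPoint := \exceptionPoint_1$ and $\policy_2^\exceptionPoint := \NEtrajectory_2^\ast(\exceptionPoint_1) = \policy_2^0 + \exceptionPoint_1\alpha^\ast$, the affine map rewrites as $\NEtrajectory_2^\ast(\simcost) = (1 - \tfrac{\simcost}{\exceptionPoint})\policy_2^0 + \tfrac{\simcost}{\exceptionPoint}\policy_2^\exceptionPoint$, which is exactly the claimed profile.

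I expect the main obstacle to be the need to take a \emph{limit} of trajectories rather than to reuse a single one. The approximating equilibria $\policy^{\simcost_n}$ generally have distinct first-player components $\policy_1^{\simcost_n}$, and no individual $\NEtrajectory_2^n$ need pass through $\policy^0$ at $\simcost = 0$, so Proposition~\ref{prop:piecewise_constant_linear} cannot simply be invoked once. The technical crux is therefore (i) securing the compactness needed to extract the limiting slope $\alpha^\ast$, which hinges on the trajectories living in the bounded simplex $\policies_2$, and (ii) verifying that the equilibrium property survives passage to the limit; for the latter it suffices to use closedness of the equilibrium set of the fixed game $\simgame^\simcost$ for each $\simcost$ separately, so that no uniformity in $\simcost$ is required.
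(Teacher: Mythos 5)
Your proof is correct and follows essentially the same route as the paper's own proof: invoke Proposition~\ref{prop:piecewise_constant_linear} at each approximating equilibrium to obtain linear trajectories on $[0, \exceptionPoint_1]$, bound their slopes via the fact that they map into the bounded set $\policies_2$, extract a convergent subsequence of slopes, and pass to the limit using closedness of the Nash-equilibrium correspondence before rewriting the limiting trajectory in the claimed convex-combination form. If anything you are slightly more careful than the paper, which tacitly takes the approximating equilibria to already have the fixed first-player component $\policy^0_1$, whereas you explicitly carry the varying components $\policy_1^{\simcost_n}$ through the limit.
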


% \vk{Changed the formulation of the following paragraph and lemma a bit. (Because I had to change the technical statement anyway.) Please check if it still make sense.}
The following result shows that cheap-simulation equilibria have a very particular structure.
    Informally, every such NE corresponds to a ``baseline'' limit equilibrium $\policy^\baseline$ and \Pltwo{}'s ``deviation policy'' $\policy^\deviate_2$.
    As the simulation cost increases, \Pltwo{} gradually deviates away from their baseline, which forces \Plone{} to randomize between their baseline and simulating.
    % (Intuitively, the technical assumptions in the lemma say that if a player is indifferent between some actions, their choice will not affect the opponent.)
    While the technical formulation can seem daunting, all of the conditions in fact have quite intuitive interpretations that can be used for locating the simulation equilibria of small games by hand.

\newcommand{\likesBaselineLabel}{(D_2^>)}
\newcommand{\indiffLabel}{(D_2^=)}
\newcommand{\likesSimLabel}{(D_2^<)}

\begin{restatable}[Structure of cheap-simulation equilibria]{lemma}{limitEquilibriaVoI}
    \label{lem:limit-NE-structure}
Let $\simcost_0 \in (0, \exceptionPoint_1)$
and suppose that 
    $\game$ admits no best-response utility tiebreaking by \Plone{}.
    % and that any set $\actionSetOpp \subset \actions_2$ satisfying $\forall \actionOpp, \actionOppAlt \in \actionSetOpp : \utility_2(\br, \actionOpp) = \utility_2(\br, \actionOppAlt)$ satisfies $\bigcap_{\actionOpp \in \actionSetOpp} \br(b) \neq \emptyset$.
%
% \noindent
Then any $\policy \in \NE(\simgame^{\simcost_0})$ with $\policy_1(\simulate) \in (0, 1)$
    is of the form $\policy = (\policy_1, \policy^{\simcost_0}_2)$,
    where
    \begin{align*}
        \policy_1 & \ = \ (1 - \policy_1(\simulate) ) \cdot \policy^\baseline_1 + \policy_1(\simulate) \cdot \simulate \\
        \policy^\simcost_2 & \ = \ \ \ (1 - \slope \simcost) \ \, \cdot \policy^\baseline_2 + \ \, \slope \simcost \ \,  \cdot \policy^\deviate_2
        , \ \ \ 
        \alpha > 0
        ,
    \end{align*}
and the following holds:
\begin{enumerate}[label={(\roman*)}]
    \item For every $\simcost \in [0, \exceptionPoint_1]$, $(\policy_1, \policy^\simcost_2) \in \NE(\simgame^\simcost)$.
    \item $\policy^\baseline \in \policies$ is some \defword{baseline policy} that satisfies:
        \begin{itemize}[leftmargin=*]
            \item[(B1)] every action in the support of $\policy^\baseline_1$ is a best-response to every action from $\supp(\policy^\baseline_2)$;
            \item[(B2)] every action in the support of $\policy^\baseline_2$ is an optimal commitment by \Pltwo{} conditional on \Pltwo{} only using strategies that satisfy (B1).
        \end{itemize}
    \item $\policy^\deviate_2 \in \policies_2$ is some \defword{deviation policy} that satisfies:
        \begin{itemize}[leftmargin=*]
        % BEFORE APPENDIX UPDATE
        % \item[(D1)] $\utility_1(\policy^\baseline_1, \policy^\deviate_2) < \utility_1(\policy^\baseline)$.
        % \item[(D2)] Either (a) $\utility_2(\policy^\baseline_1, \policy^\deviate_2) > \utility_2(\policy^\baseline) > \utility_2(\simulate, \policy^\deviate_2)$
        %     \\
        %     $\phantom{eit}$ or (b) $\utility_2(\policy^\baseline_1, \policy^\deviate_2) < \utility_2(\policy^\baseline) < \utility_2(\simulate, \policy^\deviate_2)$.                
        % \item[(D3)] If $\policy^\deviate_2$ satisfies (a), resp. (b),
        %     it maximizes
        %         $ \left( \utility_2(\policy^\baseline_1, \symbolPlaceholder) - \utility_2(\policy^\baseline) \right)
        %         /
        %         \left( \utility_2(\policy^\baseline) - \utility_2(\simulate, \symbolPlaceholder) \right)$
        %     resp.
        %     \\
        %         $\left( \utility_2(\simulate, \symbolPlaceholder) - \utility_2(\policy^\baseline) \right)
        %         /
        %         \left( \utility_2(\policy^\baseline) - \utility_2(\policy^\baseline_1, \symbolPlaceholder) \right)$
        %     among all strategies that satisfy (a), resp. (b).
        \item[(D1)] No $\action \! \in \! \supp(\policy^\baseline_1)$ lies in $\br(\deviation)$ for all $\deviation \! \in \! \supp(\policy^\deviate_2)$.
        \item[(D2)] Every $\deviation \in \supp(\policy^\deviate_2)$ satisfies one of
            \begin{align*}
                \utility_2(\policy^\baseline_1, \deviation) > \utility_2(\policy^\baseline) > \utility_2(\br, \deviation)
                    \phantom{.}
                    \hspace{3em} \likesBaselineLabel
                    \\
                \utility_2(\policy^\baseline_1, \deviation) = \utility_2(\policy^\baseline) = \utility_2(\br, \deviation)
                    \phantom{.}
                    \hspace{3em} \indiffLabel
                    \\
                \utility_2(\policy^\baseline_1, \deviation) < \utility_2(\policy^\baseline) < \utility_2(\br, \deviation)
                .
                    \hspace{3em} \likesSimLabel
            \end{align*}           
        \item[(D3)] If $\deviation \in \supp(\policy^\deviate_2)$ satisfies $\likesBaselineLabel$, resp. $\likesSimLabel$,\\
            it maximizes the \emph{attractiveness ratio} $\deviationRatio_\deviation$, resp. $\deviationRatioInverse_\deviation$
            \begin{align*}
                \frac{
                    \utility_2(\policy^\baseline_1, \deviation') - \utility_2(\policy^\baseline)
                }{
                    \utility_2(\policy^\baseline) - \utility_2(\br, \deviation')
                }
                \text{ resp. }
                \frac{
                    \utility_2(\br, \deviation') - \utility_2(\policy^\baseline)
                }{
                    \utility_2(\policy^\baseline) - \utility_2(\policy^\baseline_1, \deviation')
                }
            \end{align*}
            among all $\deviation' \in \actions_2$ that satisfy $\likesBaselineLabel$, resp. $\likesSimLabel$.
        \end{itemize}
\end{enumerate}
\end{restatable}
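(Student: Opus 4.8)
The plan is to build everything on \Cref{prop:piecewise_constant_linear}, which already guarantees that on the interval $[\exceptionPoint_0,\exceptionPoint_1]=[0,\exceptionPoint_1]$ the simulator's strategy is constant while the simulated player's strategy moves linearly. First I would fix the given NE $\policy=(\policy_1,\policy^{\simcost_0}_2)$ and set $q:=\policy_1(\simulate)\in(0,1)$. Since $q<1$ the non-simulation mass is positive, so normalizing it yields a genuine distribution $\policy^\baseline_1\in\Delta(\actions_1)$ with $\policy_1=(1-q)\policy^\baseline_1+q\cdot\simulate$; by \Cref{prop:piecewise_constant_linear} this $\policy_1$ (and hence $q$) is the same for every $\simcost\in[0,\exceptionPoint_1]$, which is exactly claim (i). Setting $\policy^\baseline_2:=\policy^0_2$ and extending the linear trajectory until it first meets the boundary of $\Delta(\actions_2)$ produces a distribution $\policy^\deviate_2$ and a scalar $\slope>0$ with $\policy^\simcost_2=(1-\slope\simcost)\policy^\baseline_2+\slope\simcost\,\policy^\deviate_2$; that $\slope>0$ rather than a constant trajectory follows because $\VoI$ is constant along a constant $\policy_2$, yet along this NE it must track $\simcost$.

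Next I would pin down the baseline. By \Cref{lem:VoI}(1), $\VoI(\policy^\simcost_2)=\simcost$ for every $\simcost\in(0,\exceptionPoint_1)$; letting $\simcost\to 0^+$ and using continuity of $\VoI$ in $\policy_2$ gives $\VoI(\policy^\baseline_2)=0$. Unfolding \Cref{def:VoI}, this says $\sum_{\actionOpp}\policy^\baseline_2(\actionOpp)\max_\action\utility_1(\action,\actionOpp)=\max_\action\utility_1(\action,\policy^\baseline_2)$; since the left side always dominates the right, equality forces every right-side maximizer $\action\in\supp(\policy^\baseline_1)$ to attain $\max_{\action'}\utility_1(\action',\actionOpp)$ for each $\actionOpp\in\supp(\policy^\baseline_2)$, i.e. $\action\in\br(\actionOpp)$, which is (B1). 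Combining (B1) with no best-response utility tiebreaking gives $\utility_2(\policy^\baseline_1,\actionOpp)=\utility_2(\br,\actionOpp)$ on $\supp(\policy^\baseline_2)$, so $\utility_2(\policy^\baseline)$ is the common value of $\utility_2(\br,\cdot)$ there. Because $\utility_2(\simulate,\actionOpp)=\utility_2(\br,\actionOpp)$ under no tiebreaking, any (B1)-compatible commitment $\actionOpp'$ yields Player 2 exactly $\utility_2(\br,\actionOpp')$ against $\policy_1$, which the NE best-response condition bounds above by $\utility_2(\policy^\baseline)$; this is (B2).

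The deviation conditions then drop out of two elementary facts. Writing $f(\actionOpp):=\utility_2(\policy_1,\actionOpp)=(1-q)\utility_2(\policy^\baseline_1,\actionOpp)+q\,\utility_2(\br,\actionOpp)$ --- which, crucially, does \emph{not} depend on $\simcost$ because $\policy_1$ is constant --- the best-response condition for Player 2 gives $f\equiv\utility_2(\policy^\baseline)$ on $\supp(\policy^\baseline_2)\cup\supp(\policy^\deviate_2)$ and $f\le\utility_2(\policy^\baseline)$ elsewhere. For $\deviation\in\supp(\policy^\deviate_2)$ the equality $f(\deviation)=\utility_2(\policy^\baseline)$ rearranges to $(1-q)\bigl(\utility_2(\policy^\baseline_1,\deviation)-\utility_2(\policy^\baseline)\bigr)=q\bigl(\utility_2(\policy^\baseline)-\utility_2(\br,\deviation)\bigr)$; reading off the common sign of the two sides yields the trichotomy (D2), while the shared value $\deviationRatio_\deviation=q/(1-q)$ (resp. $\deviationRatioInverse_\deviation=(1-q)/q$) computes the attractiveness ratio, and the off-support inequality $f(\deviation')\le\utility_2(\policy^\baseline)$ rearranges to $\deviationRatio_{\deviation'}\le q/(1-q)$, making the support deviations maximal, i.e. (D3). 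Finally (D1) follows from $\VoI(\policy^\simcost_2)=\simcost>0$: the support $\supp(\policy^\baseline_2)\cup\supp(\policy^\deviate_2)$ admits no common best response, so no baseline action --- already a best response to all of $\supp(\policy^\baseline_2)$ by (B1) --- can also lie in $\br(\deviation)$ for every $\deviation\in\supp(\policy^\deviate_2)$.

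The step I expect to be the main obstacle is the strengthening in (B1): the NE only forces $\policy^\baseline_1$ to best-respond to the \emph{mixture} $\policy^\baseline_2$, and upgrading this to a best response against every pure action of $\supp(\policy^\baseline_2)$ is exactly where the vanishing value of information (and its continuity as $\simcost\to 0^+$) must be invoked, together with no best-response utility tiebreaking to convert statements about Player 1's best responses into statements about Player 2's payoffs. Once (B1) and the $\simcost$-independence of $f$ are established, (B2) and (D1)--(D3) reduce to routine rearrangements of the indifference and best-response inequalities.
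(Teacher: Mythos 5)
Your proof is correct and takes essentially the same route as the paper's: the decomposition and claim (i) come from \Cref{prop:piecewise_constant_linear}, (B1)--(B2) from $\VoI(\policy^\baseline_2)=0$ (via \Cref{lem:VoI}) together with the no-tiebreaking assumption, and (D1)--(D3) from \Pltwo{}'s indifference and best-response conditions against the fixed $\policy_1$. The remaining differences are cosmetic: your ray-exit construction of $\policy^\deviate_2$ coincides with the paper's projection of $\policy^{\exceptionPoint_1}_2$ onto $\policy^\baseline_2$, and your direct computation that $\deviationRatio_\deviation = q/(1-q)$ on the support of $\policy^\deviate_2$ (with the inequality $\deviationRatio_{\deviation'} \leq q/(1-q)$ off the support) is equivalent to the paper's argument via the indifference-inducing simulation probabilities.
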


\noindent
In a generic game, these conditions even imply that both the baseline and deviation policies are pure.

\begin{restatable}[Equilibria with binary supports]{theorem}{supportSizeTwo}\label{thm:generic-games-support-size}
Let $\game$ be a game with generic payoffs
    and $\simcost \in (0, \exceptionPoint_1)$.
    Then all NE of $\simgame^\simcost$ are either pure or have supports of size two.
\end{restatable}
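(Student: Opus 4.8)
The plan is to exploit genericity to remove all the degeneracies left open by the structural results of Section~\ref{sec:structural}, and then to split the analysis according to the value of $\policy_1(\simulate)$. For payoffs drawn i.i.d.\ uniformly on $[0,1]$, the following hold with probability $1$: (a) every pure $\actionOpp \in \actions_2$ has a \emph{unique} pure best response, so $\br(\actionOpp)$ is a singleton, and symmetrically for \Pltwo; (b) no finite list of distinct payoff entries of $\utility_1$ or $\utility_2$, of payoff differences, or of the ratios of differences arising below takes coinciding values (for each of the finitely many index patterns, the bad set is measure zero). In particular, generic games admit no best-response utility tiebreaking (Def.~\ref{def:no-br-tiebreaking}), so Lemma~\ref{lem:limit-NE-structure} applies. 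I would then fix a NE $\policy$ of $\simgame^\simcost$ and treat $\policy_1(\simulate)=0$, $\policy_1(\simulate)=1$, and $\policy_1(\simulate)\in(0,1)$ in turn.

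For $\policy_1(\simulate)=0$, restricting to the original actions makes $\policy$ a NE of $\game$, and since \Plone{} does not deviate to $\simulate$ we have $\utility_1(\simulate,\policy_2)\le\utility_1(\br,\policy_2)$, so Lemma~\ref{lem:VoIandSim} gives $\VoI(\policy_2)\le\simcost$. By Proposition~\ref{prop:VoI-and-specific-NE} the nonnegative quantity $\VoI(\policy_2)$ is a breakpoint, and since $\simcost<\exceptionPoint_1$ the only breakpoint in $[0,\simcost]$ is $0$; hence $\VoI(\policy_2)=0$. By Definition~\ref{def:VoI} this forces a common best response $\action^* \in \bigcap_{\actionOpp\in\supp(\policy_2)}\br(\actionOpp)$, which by (a) is unique. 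A short computation shows $\action^*$ is then the \emph{unique} best response to the mixture $\policy_2$ (any competitor would have to lie in the same intersection), so $\policy_1=\action^*$ is pure; best-responding to pure $\action^*$ forces $\policy_2$ pure by (a). Thus such NE are pure. For $\policy_1(\simulate)=1$, by (a)--(b) \Pltwo's best response to $\simulate$ is a pure $\actionOpp^*$, and then $\utility_1(\simulate,\actionOpp^*)=\utility_1(\br,\actionOpp^*)-\simcost<\utility_1(\br,\actionOpp^*)$ since $\simcost>0$, so $\simulate$ is not a best response and no such NE exists.

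The case $\policy_1(\simulate)\in(0,1)$ is where the work lies. Here I invoke Lemma~\ref{lem:limit-NE-structure} to write $\policy_1=(1-\policy_1(\simulate))\policy^\baseline_1+\policy_1(\simulate)\simulate$ and $\policy^\simcost_2=(1-\slope\simcost)\policy^\baseline_2+\slope\simcost\,\policy^\deviate_2$. Since the mixing weights lie strictly in $(0,1)$, it suffices to prove that $\policy^\baseline_1$, $\policy^\baseline_2$ and $\policy^\deviate_2$ are all pure, for then $\supp(\policy_1)=\supp(\policy^\baseline_1)\cup\{\simulate\}$ and $\supp(\policy_2)=\supp(\policy^\baseline_2)\cup\supp(\policy^\deviate_2)$ each have size at most $2$. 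Purity of $\policy^\baseline_1$ follows from (B1): every action of $\supp(\policy^\baseline_1)$ lies in $\bigcap_{\actionOpp\in\supp(\policy^\baseline_2)}\br(\actionOpp)$, an intersection of singletons. Purity of $\policy^\baseline_2$ follows from (B2) with (b): once $\policy^\baseline_1$ is pure, its support actions must tie for the optimal-commitment value $\utility_2(\policy^\baseline_1,\symbolPlaceholder)$, which generically happens for at most one action.

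The crux is purity of $\policy^\deviate_2$, and I would argue it \emph{directly} from the equilibrium indifference conditions rather than only from (D3). \Plone{} randomizes with a single simulation probability $q:=\policy_1(\simulate)$, and making \Pltwo{} indifferent between the (pure) baseline action and any $\deviation\in\supp(\policy^\deviate_2)$ pins $q$ to an explicit ratio $q_\deviation$ of payoff differences (this is how the cases of (D2) arise, with $q_\deviation=\deviationRatio_\deviation/(1+\deviationRatio_\deviation)$ for a $(D_2^{>})$-action and $q_\deviation=1/(1+\deviationRatioInverse_\deviation)$ for a $(D_2^{<})$-action; $(D_2^{=})$ is generically vacuous by (b)). Since all deviation actions must induce the \emph{same} $q$, and by (b) the values $q_\deviation$ are pairwise distinct across the finitely many candidates, at most one deviation action is supported, so $\policy^\deviate_2$ is pure. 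The obstacle I anticipate is precisely this step: a priori $\policy^\deviate_2$ could mix one $(D_2^{>})$-action with one $(D_2^{<})$-action (each category is individually pinned to a single maximizer by (D3)), so it is essential to observe that the \emph{shared} value of $q$ across categories is a generically unattainable coincidence. Combining the three cases shows every NE of $\simgame^\simcost$ is either pure or has both supports of size exactly two.
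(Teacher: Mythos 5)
Your proposal is correct and follows essentially the same route as the paper's proof: the same genericity facts (unique best responses, no coincidences among payoffs or attractiveness ratios), the same three-way case split on $\policy_1(\simulate)$, and for the interior case the same use of Lemma~\ref{lem:limit-NE-structure} with purity of the baseline via (B1)--(B2) and purity of the deviation policy via the generic distinctness of the indifference-pinned simulation probabilities (the paper's observations (3) and (3')). The only cosmetic difference is in the $\policy_1(\simulate)=0$ case, where you invoke Proposition~\ref{prop:VoI-and-specific-NE} to force $\VoI(\policy_2)=0$ while the paper traces the equilibrium to $\simcost=0$ along the trajectory of Proposition~\ref{prop:piecewise_constant_linear}; both are valid and rest on the same breakpoint structure.
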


\section{Computational Aspects}\label{sec:complexity}
    We now investigate the difficulty of solving simulation games.
Since many of the results hold for multiple solution concepts, we formulate them using the phrase ``solving a game'',
with the understanding that this refers to either finding all Nash equilibria, or a single NE, or a single NE with a specific property (e.g., one with the highest social welfare).
For a specific game $\game$, we will also use
    $ -\infty < 0 < \exceptionPoint_1 < \dots < \exceptionPoint_{k} < \infty$
    to denote the breakpoints of $\simgame$ (given by Proposition\,\ref{prop:piecewise_constant_linear}).
\journal{
First, note that the number of $\exceptionPoint_i$-s could be exponential:

\newcommand{\harmonicMean}{H}
\newcommand{\indexSubset}{I}
\newcommand{\payoffA}{x}
\newcommand{\payoffB}{y}
\newcommand{\payoffsA}{\vec \payoffA}
\newcommand{\payoffsB}{\vec \payoffB}
\newcommand{\payoffSubsetA}{\payoffsA_\indexSubset}
\newcommand{\payoffSubsetB}{\payoffsB_\indexSubset}
\newcommand{\prodA}{\mathbf{\payoffA}}
\newcommand{\prodB}{\mathbf{\payoffB}}

\begin{restatable}[Games can have exponentially many breakpoints]{proposition}{exponentialBreakpoints}
    \label{prop:breakpoints-exponential}
For every $n \in \N$, there is a two-player NFG $\game$ with $| \actions_1 | = | \actions_2 | = n$ such that $\simgame$ has at least $2^n$ breakpoints.
\end{restatable}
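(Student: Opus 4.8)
The plan is to turn the counting of breakpoints into a counting of Nash equilibria with distinct values of information. By \Cref{prop:VoI-and-specific-NE}, whenever $\policy \in \NE(\game)$ the number $\VoI(\policy_2)$ is a breakpoint of $\simgame$, so it suffices to construct an $n \times n$ game $\game$ possessing (at least) $2^n$ Nash equilibria whose \Pltwo{}-marginals $\policy_2$ take pairwise distinct values of information.

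I would use a generalized coordination game: draw payoffs $x_1, \dots, x_n, y_1, \dots, y_n$ from a continuous distribution and set $\utility_1(i,j) = x_i$ and $\utility_2(i,j) = y_j$ on the diagonal $i = j$, with $\utility_1 = \utility_2 = 0$ off the diagonal. For each nonempty $I \subseteq \{1, \dots, n\}$ I claim there is a NE $\policy^I$ supported on $I \times I$, obtained by the usual indifference construction: \Plone{} plays $\policy_1(j) \propto 1/y_j$ on $I$, which makes every action in $I$ equally good for \Pltwo{} and every action outside $I$ strictly worse (payoff $0$); symmetrically \Pltwo{} plays $\policy_2(i) \propto 1/x_i$ on $I$. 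Verifying $\policy^I \in \NE(\game)$ is then the routine best-response check.

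Next I would compute the induced breakpoint. From \Cref{def:VoI}, using $\max_{\action} \utility_1(\action, j) = x_j$ and that the best response against $\policy_2^I$ is worth $1 / \sum_{i \in I} x_i^{-1}$, one obtains $\VoI(\policy_2^I) = (|I| - 1)/\sum_{i \in I} x_i^{-1}$, a harmonic-mean-type functional of $\{x_i\}_{i \in I}$. There are $2^n - n - 1$ subsets of size at least two, each yielding such a value.

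The crux is distinctness: I must show that $I \mapsto \VoI(\policy_2^I)$ is injective for generic payoffs. For $I \neq J$, clearing denominators in $\VoI(\policy_2^I) = \VoI(\policy_2^J)$ turns it (after multiplying by the products $\prod_{i \in I} x_i$ and $\prod_{i \in J} x_i$) into a polynomial relation among the $x_i$ that is not identically zero; since the payoffs come from a continuous distribution, the union of these finitely many algebraic hypersurfaces has measure zero, so generically all the values differ. This is the main obstacle, and precisely where the product structure of the payoffs must be controlled. As stated, the argument yields $2^n - n - 1$ distinct positive breakpoints together with $\exceptionPoint_0 = 0$; to reach the clean bound $2^n$ I would either account for the extra breakpoints that \Cref{prop:piecewise_constant_linear} forces on the simulation-equilibrium trajectory as $\simcost$ sweeps past each $\VoI(\policy_2^I)$, or, more cleanly, re-engineer the payoffs so that the per-subset breakpoint value is a bare product $\prod_{i \in I} x_i$ — these are pairwise distinct across all $2^n$ subsets for generic $x_i$, since the $\log x_i$ are then linearly independent over $\Q$, which removes the loss incurred by singletons (whose value of information is always $0$).
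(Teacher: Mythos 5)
Your proposal is correct and takes essentially the same route as the paper: the paper's proof uses exactly this diagonal coordination game (its ``Caf\'es in Paris'' example), the same inverse-payoff indifference equilibria supported on each subset $I$, the same value-of-information formula $(1 - \nicefrac{1}{|I|})\,H(x_I) = (|I|-1)/\sum_{i \in I} x_i^{-1}$, and the same reduction from distinct $\VoI$ values to distinct breakpoints via \Cref{prop:VoI-and-specific-NE}. If anything, you are more careful than the paper on the one delicate point: the paper simply asserts that different subsets yield different values for generic payoffs (overlooking that every singleton gives $\VoI = 0$), whereas you make the measure-zero genericity argument explicit and honestly flag the resulting $2^n - n$ versus $2^n$ counting slack, which the paper's own proof also implicitly has.
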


\begin{figure}[!tb]
    \centering
    \includegraphics[width=0.24\textwidth]{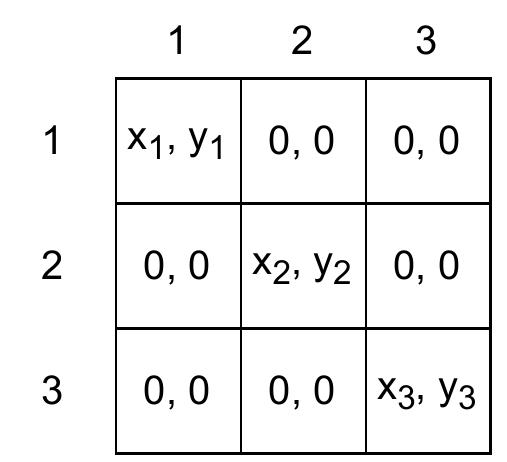}
    \caption{
        Cafés in Paris: Alice and Bob want to meet but they need to coordinate on which café to go to.
        We assume that $\payoffA_i, \payoffB_i > 0$ for every $i$.
        The actual game has $n \in \N$ actions.
    }
    \label{fig:cafes-in-paris}
\end{figure}
}

\edit{As an upper bound on the complexity of solving simulation games, their definition immediately yields that:}

\begin{restatable}[Simulation games are no harder than general games]{proposition}{GsimEquallyHard}\label{prop:NE-complexity-basic}
Solving $\simgame^\simcost$ is at most as difficult as solving a normal-form game where \Plone{} has one more action than in $\game$.
\end{restatable}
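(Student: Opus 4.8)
The plan is to observe that the statement is essentially immediate from Definition~\ref{def:simgame}(2): under the fixed-best-response-policy convention adopted throughout the paper, $\simgame^\simcost$ \emph{already is} a normal-form game, and it differs from $\game$ only by the single extra \Plone{} action $\simulate$. So the ``reduction'' is nothing more than the identity map, up to some trivial preprocessing, and the only things to verify are that this identity is realized by an efficient transformation and that it preserves each of the relevant notions of ``solving''.

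First I would make the action counts explicit. By Definition~\ref{def:simgame}(2), the game $\simgame^\simcost$ has \Plone{}-action set $\actions_1 \cup \{\simulate\}$ and \Pltwo{}-action set $\actions_2$; hence it is a two-player NFG in which \Plone{} has exactly $|\actions_1| + 1$ actions and \Pltwo{} has $|\actions_2|$ actions, i.e.\ precisely one more \Plone{}-action than $\game$.

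Next I would exhibit the transformation producing the payoff bimatrix of $\simgame^\simcost$ from the data $(\game, \simcost, \brSelector)$. On the original $\actions_1 \times \actions_2$ block the entries are copied verbatim from $\game$. The only new row is the one indexed by $\simulate$, whose entries are $\utility_1(\simulate, \actionOpp) = \utility_1(\br, \actionOpp) - \simcost$ and $\utility_2(\simulate, \actionOpp) = \E_{\action \sim \brSelector(\actionOpp)} \utility_2(\action, \actionOpp)$ for each $\actionOpp \in \actions_2$. Each such entry is computable from $\game$ and $\brSelector$ in $O(|\actions_1|)$ time per column (a max over \Plone{}'s actions, resp.\ an expectation over $\brSelector(\actionOpp)$), so the full matrix is assembled in time linear in the size of $\game$. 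Since the constructed NFG is literally equal to $\simgame^\simcost$, its set of Nash equilibria, its individual equilibria, and any equilibrium optimizing a prescribed objective coincide exactly with those of $\simgame^\simcost$; hence any algorithm solving general NFGs with $|\actions_1|+1$ \Plone{}-actions also solves $\simgame^\simcost$ in each of the senses of ``solving'' under consideration.

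I do not expect a genuine obstacle here, and the content is entirely in recognizing that the single-$\brSelector$ formulation of Definition~\ref{def:simgame} already packages the simulation option as one ordinary action. The one point deserving a line of care is that the reduction preserves \emph{every} relevant solution concept; but this holds trivially because the two games are identical rather than merely strategically equivalent. (If one instead worked with the all-best-response-policies variant $\simgame^{\simcost,\all}$, which may carry exponentially many simulate actions, the statement would no longer be literal and one would route through the decomposition of Lemma~\ref{lem:reduction-to-fixed-br-policy}; for the fixed-$\brSelector$ game named in the proposition, no such detour is needed.)
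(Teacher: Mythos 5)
Your proposal is correct and takes exactly the same route as the paper, whose proof simply notes that the result ``trivially follows from the assumption that simulation games are modelled as the original normal-form game $\game$ with the added simulate action $\simulate$.'' Your additional details (explicit action counts, linear-time construction of the $\simulate$ row, preservation of all solution concepts, and the caveat about $\simgame^{\simcost,\all}$) merely spell out what the paper treats as immediate.
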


\edit{For extreme values of $\simcost$, Prop.~\ref{prop:NE-for-extreme-simcost} implies the following:}

\begin{restatable}[Solving $\simgame$ for extreme $\simcost$]{proposition}{complexityForExtremeSimcost}\label{prop:NE-complexity-extreme}
(i) For $\simcost < 0$, the time complexity of solving $\simgame^\simcost$ is $\bigO(|\actions|)$.

\noindent
(ii) For $\simcost > \exceptionPoint_{k}$, the time-complexity of solving $\simgame^\simcost$ is the same as the time-complexity of solving $\game$.
% $\ $
% \begin{enumerate}[label={(\roman*)}]
    % \item For $\simcost \in (-\infty, 0)$, the time complexity of solving $\simgame^\simcost$ is $\bigO(|\actions|)$.
    % \item For $\simcost \in (\exceptionPoint_{k}, \infty)$, the time-complexity of solving $\simgame^\simcost$ is the same as the time-complexity of solving $\game$.
% \end{enumerate}
\end{restatable}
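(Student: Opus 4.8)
The plan is to treat both claims as computational corollaries of Proposition~\ref{prop:NE-for-extreme-simcost}, by first obtaining an explicit description of $\NE(\simgame^\simcost)$ in each regime and then reading off the cost of producing it.

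For part (i), recall that by Proposition~\ref{prop:NE-for-extreme-simcost}(i) the action $\simulate$ is strongly dominant for \Plone{} when $\simcost < 0$, so in every $\policy \in \NE(\simgame^\simcost)$ we have $\policy_1 = \simulate$. Consequently \Pltwo{} faces a single fixed column of payoffs and merely best-responds, so $\NE(\simgame^\simcost)$ is exactly the set of profiles $(\simulate, \policy_2)$ with $\supp(\policy_2) \subseteq \argmax_{\actionOpp \in \actions_2} \utility_2(\simulate, \actionOpp)$. Computing this description requires only evaluating the $\simulate$-row entries $\utility_1(\simulate, \actionOpp) = \utility_1(\br, \actionOpp) - \simcost$ and $\utility_2(\simulate, \actionOpp) = \E_{\action \sim \brSelector(\actionOpp)} \utility_2(\action, \actionOpp)$, each of which costs a single pass over the corresponding column of $\game$, followed by one $\max$/$\argmax$ scan over $\actions_2$; this is $\bigO(|\actions|)$. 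The same bound covers every reading of ``solving'': a single NE is obtained by picking any maximizer $\actionOpp$; the full NE set is the face spanned by the maximizers; and since over that face the welfare equals a constant plus $\utility_1(\simulate, \policy_2)$, the welfare-maximizing (or \Plone-optimal) NE is attained at one of its pure vertices, found by another linear scan.

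For part (ii), the task reduces to establishing that $\NE(\simgame^\simcost) = \NE(\game)$ for every $\simcost > \exceptionPoint_k$. The final breakpoint interval $(\exceptionPoint_k, \infty)$ is unbounded, so the linear \Pltwo-trajectory guaranteed by Proposition~\ref{prop:piecewise_constant_linear} must be constant there (a nonconstant affine map would eventually leave the bounded simplex $\policies_2$); meanwhile, for all sufficiently large $\simcost$ Proposition~\ref{prop:NE-for-extreme-simcost}(ii) makes $\simulate$ strictly dominated and forces $\NE(\simgame^\simcost) = \NE(\game)$. Combining these, the constant NE set on $(\exceptionPoint_k, \infty)$ must equal $\NE(\game)$ throughout (equivalently, one may invoke Proposition~\ref{prop:VoI-and-specific-NE} and the discussion following it, since every $\VoI(\policy_2)$ is a breakpoint and hence at most $\exceptionPoint_k$). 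Once the two solution sets coincide, any algorithm that solves $\game$ also solves $\simgame^\simcost$ and vice versa: the two instances differ only by the extra $\simulate$-row, whose $\bigO(|\actions|)$ size is dominated by the $\Theta(|\actions|)$ cost of merely reading either payoff matrix, so the asymptotic time complexities agree.

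The only step that is not pure bookkeeping is the equality $\NE(\simgame^\simcost) = \NE(\game)$ on $(\exceptionPoint_k, \infty)$ in part (ii); I expect the main care to go into ruling out surviving simulation equilibria past the last breakpoint, which is where the constancy argument (or the $\VoI$-breakpoint argument) does the real work. All remaining estimates are elementary linear scans over the payoff matrices.
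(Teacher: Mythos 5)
Your proposal is correct and follows essentially the same route as the paper: both parts are read off from Proposition~\ref{prop:NE-for-extreme-simcost}, with part~(i) reducing to computing the $\simulate$-row (the best-response utilities $\utility_1(\br,\actionOpp)$ and the values $\utility_2(\simulate,\actionOpp)$ under the fixed $\brSelector$) and then taking a maximum over $\actions_2$, which is exactly the paper's argument and cost accounting.

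The one place you go beyond the paper is part~(ii). The paper's proof is a two-line appeal to Proposition~\ref{prop:NE-for-extreme-simcost}: ``$\simulate$ will never be played for high enough $\simcost$, so solving $\simgame$ becomes equivalent to solving $\game$.'' This silently conflates the explicit dominance threshold
$\max_{\action,\actionOpp}\utility_1(\action,\actionOpp)-\max_{\policy_1}\min_{\policy_2}\utility_1(\policy_1,\policy_2)$
with the last breakpoint $\exceptionPoint_k$, which is what the statement actually quantifies over; the threshold can a priori exceed $\exceptionPoint_k$. Your argument closes this gap: since $(\exceptionPoint_k,\infty)$ is the final breakpoint interval, the affine \Pltwo{}-trajectories of Proposition~\ref{prop:piecewise_constant_linear} must be constant on it (an affine map with unbounded domain and range in the compact simplex is constant), so every NE at any $\simcost_0>\exceptionPoint_k$ persists unchanged up to costs where Proposition~\ref{prop:NE-for-extreme-simcost}\,(ii) applies, giving $\NE(\simgame^{\simcost_0})\subseteq\NE(\game)$; the reverse inclusion follows from Proposition~\ref{prop:VoI-and-specific-NE} since each $\VoI(\policy_2)$ is a breakpoint and hence at most $\exceptionPoint_k$. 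This is the argument the paper implicitly relies on but does not write down, so your version is, if anything, the more complete one; the only cost is that it invokes two structural results where the paper uses one.
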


In contrast with Proposition~\ref{prop:NE-complexity-extreme}\,(ii),
    finding the equilibria at low simulation costs is straightforward
    \edit{if we restrict our attention to generic-payoff NFGs}:

\begin{restatable}[Cheap-simulation equilibria in generic games]{theorem}{genericGames}\label{thm:NE-of-generic-NFG}
Let $\game$ be a NFG with generic payoffs
% \footnotemark
and $\simcost \in (0, \exceptionPoint_1)$.
Then the time complexity of finding all equilibria of $\simgame^\simcost$ is $\bigO(|\actions|)$.
\end{restatable}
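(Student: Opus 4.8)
The plan is to enumerate $\NE(\simgame^\simcost)$ according to the value of $\policy_1(\simulate)$, using \Cref{thm:generic-games-support-size} to keep every candidate's support small and \Cref{lem:limit-NE-structure} to pin down the simulation equilibria. First I would split the equilibria into three classes: $\policy_1(\simulate) = 0$, $\policy_1(\simulate) \in (0,1)$, and $\policy_1(\simulate) = 1$, and show each class can be produced in $\bigO(|\actions|)$ time.

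For $\policy_1(\simulate) = 0$, such a profile is exactly a NE of $\game$ (extended by $\policy_1(\simulate) := 0$), and by \Cref{prop:VoI-and-specific-NE} it survives into $\simgame^\simcost$ precisely when $\simcost \geq \VoI(\policy_2)$, with $\VoI(\policy_2)$ always a breakpoint. Since $\simcost \in (0, \exceptionPoint_1)$ and $\exceptionPoint_1$ is the smallest positive breakpoint, only the NE with $\VoI(\policy_2) = 0$ remain. The genericity step here is to show that $\VoI(\policy_2) = 0$ forces $\policy_2$ (and hence $\policy_1$) to be pure: by \Cref{def:VoI}, $\VoI(\policy_2) = 0$ means some \Plone{}-action best-responds to every action of $\supp(\policy_2)$ simultaneously; in a generic game this common best response is strict against the mixture $\policy_2$, so \Plone{} plays it purely, whereupon \Pltwo{} responds purely, contradicting $|\supp(\policy_2)| \ge 2$. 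Thus the surviving profiles are exactly the pure NE of $\game$, and all of them do survive because at a pure \Pltwo{}-action $\actionOpp$ we have $\utility_1(\simulate, \actionOpp) = \utility_1(\br, \actionOpp) - \simcost < \utility_1(\br, \actionOpp)$, so \Plone{} never wants to simulate. These pure NE are listed in $\bigO(|\actions|)$ by precomputing each player's (generically unique) best response to every action and checking the two mutual best-response conditions.

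The class $\policy_1(\simulate) = 1$ I would rule out directly: if \Plone{} always simulates, \Pltwo{}'s optimal commitment is the pure Stackelberg action (a mixed commitment is never uniquely optimal generically), against which \Plone{} strictly prefers its pure best response over paying $\simcost > 0$, so no such profile is an equilibrium. The remaining class $\policy_1(\simulate) \in (0,1)$ is handled by \Cref{lem:limit-NE-structure}, whose baseline $\policy^\baseline$ and deviation $\policy^\deviate_2$ are pure in a generic game. Condition (B2) identifies the baseline with the (generically unique) pure-commitment equilibrium in $\pureSE(\game)$, computable in $\bigO(|\actions|)$. With this fixed baseline, condition (D3) selects the deviation as the unique maximizer of $\deviationRatio_\deviation$ over the class $(D_2^>)$, or of $\deviationRatioInverse_\deviation$ over $(D_2^<)$ (the borderline class $(D_2^=)$ being non-generic), yielding at most two candidate deviations, each found by an $\bigO(|\actions_2|)$ scan; the mixing weights $\policy_1(\simulate)$ and $\slope$ follow in $\bigO(1)$ from the indifference conditions, and each candidate is verified to be a genuine NE in $\bigO(|\actions|)$. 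Summing the three classes gives the $\bigO(|\actions|)$ bound.

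I expect the main obstacle to be the $\policy_1(\simulate) = 0$ case: rigorously arguing, from genericity, that the only surviving equilibria of $\game$ are the pure ones — i.e., that $\VoI(\policy_2) = 0$ cannot occur at a genuinely mixed equilibrium — because otherwise one would have to enumerate mixed NE of $\game$, which cannot be done in linear time in general. A secondary subtlety is confirming through \Cref{lem:limit-NE-structure} that every simulation equilibrium shares the single Stackelberg baseline, so that the deviation search is a constant number of $\bigO(|\actions_2|)$ scans rather than a quadratic enumeration over baseline-deviation pairs.
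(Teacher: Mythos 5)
Your decomposition into the three cases $\policy_1(\simulate)=0$, $\policy_1(\simulate)=1$, and $\policy_1(\simulate)\in(0,1)$, and your treatment of the first two cases, match the paper's proof. The genuine gap is in the third case: you claim that condition (B2) of \Cref{lem:limit-NE-structure} identifies the baseline with \emph{the} (generically unique) pure-commitment equilibrium in $\pureSE(\game)$, so that all simulation equilibria share a single Stackelberg baseline. That is a misreading of (B2). The condition only requires that $\actionOpp$ be an optimal commitment \emph{conditional on} \Pltwo{} restricting to actions to which $\supp(\policy^\baseline_1)$ best-responds; concretely, in a generic game, $(\action,\actionOpp)$ is a valid baseline whenever $\action = \br(\actionOpp)$ and $\actionOpp$ maximizes $\utility_2(\action,\actionOppAlt)$ over $\{\actionOppAlt \in \actions_2 \mid \action = \br(\actionOppAlt)\}$. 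This is a \emph{local} optimality condition, and a generic game can have up to $\min\{|\actions_1|,|\actions_2|\}$ such baseline pairs (one candidate per \Plone{} action), each potentially supporting its own simulation equilibrium --- the paper's variant of the Trust Game with multiple simulation NE (\Cref{fig:multiple-sim-eq}) is exactly this phenomenon. Your algorithm, which searches for deviations only around the global Stackelberg pair, would therefore miss equilibria and fail the requirement of finding \emph{all} of them; it also makes your ``secondary subtlety'' (a constant number of deviation scans) unattainable rather than merely subtle.

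The paper's proof repairs this by enumerating all ``suitable'' triplets $(\action,\actionOpp,\deviation)$: it shows there are at most $\min\{|\actions_1|,|\actions_2|\}$ admissible baseline pairs (for each $\actionOpp$ only $(\br(\actionOpp),\actionOpp)$ can satisfy (B1), and for each $\action$ only one $\actionOpp$ can satisfy (B2)), at most two deviations per pair by genericity of the attractiveness ratios in (D3), and that each resulting candidate can be solved and verified in $\bigO(|\actions_1|+|\actions_2|)$ time. The total remains linear only because of the accounting you would otherwise lose: since $|\actions| = |\actions_1|\cdot|\actions_2|$, one has $\min\{|\actions_1|,|\actions_2|\}\cdot \bigO(|\actions_1|+|\actions_2|) = \bigO(|\actions|)$. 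So the multiplicity of baselines is compatible with the linear bound, but it must be confronted, not assumed away.
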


% VOJTA: Commenting this out, since it is no longer open
% \hideFromReviewers{Finally, an important case not covered by Theorem~\ref{thm:NE-of-generic-NFG} are \textit{extensive} form games,
%     whose normal-form representations have non-generic payoffs even when the EFG itself does have generic payoffs.
% We consider determining the complexity of finding the limit equilibria of such games to be an interesting open problem:

% \begin{problem}
% What is the complexity of finding the limit equilibria of $\simgame$ (i.e., $\NE(\simgame^\simcost)$ for $\simcost \in (0, \exceptionPoint_1)$) when $\game$ is
%     (i) a general NFG or
%     (ii) an EFG with generic payoffs?
% \end{problem}
% }

\journal{
However, without generic payoffs, simulation games can be as hard to solve as fully general games.

\newcommand{\deviationProb}{\alpha \simcost}
\newcommand{\gameModif}{\widetilde \game}
\newcommand{\actionsModif}{\widetilde \actions}
\newcommand{\simgameModif}{\widetilde \game_\simSubscript}

\begin{restatable}[Solving general simulation games is hard]{theorem}{simgameHardness}\label{thm:simgame-hardness-general}
For an NFG $\game$ whose utilities lie in $[0, 1]$,
    let $\gameModif$ be as in \Cref{fig:trust-game-hardness}.
Then for $\simcost \in (0, \nicefrac{1}{3})$,
    $\NE(\simgameModif^\simcost)$ can be identified with $\NE(\game)$.
Moreover,
    whenever $\simgameModif^\simcost$ has some NE $\policy$ with
        $\policy_1(\simulate) = \simProb$,
        $\policy_2(\deviate) = \slope \simcost$,
    $\game$ must have some NE strategy $\policy^\game$ with
        $\utility_1(\policy^\game) = 2 - \nicefrac{1}{\slope}$,
        $\utility_2(\policy^\game) = \frac{1}{1 - \simProb} - 2$.
\end{restatable}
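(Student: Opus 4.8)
The plan is to read the statement as a payoff-preserving, polynomial-time reduction from arbitrary NFGs to simulation games, and to prove it by exhibiting a two-way correspondence between $\NE(\game)$ and the equilibria of $\simgameModif^\simcost$ together with the two scalar read-off formulas. Concretely, I would read off from \Cref{fig:trust-game-hardness} that $\gameModif$ keeps an embedded copy of $\game$ on the ``cooperative'' block of joint actions and augments it with trust-game scaffolding: \Pltwo{} gets one extra column $\deviate$, and \Plone{} gets the rows needed to make simulation attractive exactly when \Pltwo{} deviates. The whole argument then rests on showing that in the regime $\simcost \in (0, \nicefrac{1}{3})$ the equilibrium constraints of $\simgameModif^\simcost$ decouple into (a) ``the embedded block is played according to a NE of $\game$'' and (b) two scalar indifference equations that pin down $\simProb$ and $\slope$ from the payoffs of that NE.

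For the forward direction I would first argue that $\nicefrac{1}{3} \le \exceptionPoint_1$ for this gadget, so that \Cref{lem:limit-NE-structure} applies throughout $(0, \nicefrac{1}{3})$. For a NE $\policy$ with $\policy_1(\simulate) \in (0,1)$, that lemma supplies the decomposition $\policy_1 = (1-\simProb)\policy^\baseline_1 + \simProb\cdot\simulate$ and $\policy_2^\simcost = (1-\slope\simcost)\policy^\baseline_2 + \slope\simcost\cdot\deviate$, and its conditions (B1)--(B2) force the baseline $\policy^\baseline$, restricted to the embedded block, to be a mutual best response, i.e. a NE $\policy^\game$ of $\game$, while (D1) identifies $\deviate$ with the extra column. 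The boundary cases $\policy_1(\simulate) \in \{0,1\}$ I would dispatch using \Cref{prop:NE-for-extreme-simcost} and a direct check that they either coincide with a NE of $\game$ or are excluded by the gadget payoffs, yielding the claimed identification of $\NE(\simgameModif^\simcost)$ with $\NE(\game)$.

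The two formulas come from the indifference principle. \Plone{}'s indifference between the baseline and simulating is, by \Cref{lem:VoI}, equivalent to $\VoI(\policy_2) = \simcost$; expanding $\VoI$ over the gadget's first-player payoffs and substituting $\policy_2(\deviate) = \slope\simcost$ cancels the common $\simcost$ factor and leaves a single linear relation between $\slope$ and $\utility_1(\policy^\game)$ that rearranges to $\utility_1(\policy^\game) = 2 - \nicefrac{1}{\slope}$. Symmetrically, \Pltwo{}'s indifference between the baseline columns and $\deviate$, expanded over the second-player payoffs with $\policy_1(\simulate) = \simProb$ substituted in, rearranges to $\utility_2(\policy^\game) = \frac{1}{1-\simProb} - 2$. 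The specific constants (the $2$, the $\frac{1}{1-\simProb}$ shape, and the threshold $\nicefrac{1}{3}$) are exactly what the chosen entries of $\gameModif$ produce, so verifying them amounts to substituting those entries into the two indifference equations.

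For the converse I would start from an arbitrary $\policy^\game \in \NE(\game)$, define $\slope := 1/(2 - \utility_1(\policy^\game))$ and $\simProb := 1 - 1/(2 + \utility_2(\policy^\game))$ from the inverted formulas, take the embedded $\policy^\game$ as the baseline, and check that the resulting profile satisfies every best-response condition of $\simgameModif^\simcost$; the bounds $\utility_1(\policy^\game), \utility_2(\policy^\game) \in [0,1]$ guarantee $\slope \in [\nicefrac{1}{2}, 1]$ and $\simProb \in [\nicefrac{1}{2}, \nicefrac{2}{3}]$, so that $\policy_2(\deviate) = \slope\simcost < \nicefrac{1}{3} < 1$ and $\simProb \in (0,1)$ stay admissible for all $\simcost \in (0, \nicefrac{1}{3})$. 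I expect the main obstacle to be the decoupling claim in the forward direction: ruling out ``escape'' equilibria in which a best response or an optimal deviation uses the scaffolding rows or columns rather than the embedded block, which would break the identification with $\NE(\game)$. Handling this cleanly requires a careful case analysis of the gadget's best-response correspondence together with the verification that $\exceptionPoint_1 \ge \nicefrac{1}{3}$, both of which hinge on the precise payoffs in \Cref{fig:trust-game-hardness}.
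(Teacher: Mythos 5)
Your reconstruction of the gadget in \Cref{fig:trust-game-hardness} is reversed, and the reversal is fatal rather than cosmetic. In the paper's $\widetilde{\game}$, the embedded copy of $\game$ sits in the \emph{deviation} block, not the cooperative one: \Pltwo{}'s columns are a single scaffolding action $\cooperate$ plus the pairs $(\defect,\actionOpp)$ for actions $\actionOpp$ of $\game$, \Plone{}'s rows are $\walkOut$ plus the pairs $(\trust,\action)$, and $\game$ is actually played only in the event $\{\trust\}\times\{\defect\}$, which in equilibrium has probability of order $\slope\simcost$; the baseline of every simulation NE is the pure scaffolding pair $(\trust,\cooperate)$, whose value of information is zero by construction. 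Your arrangement (embedded $\game$ as the baseline block, one extra column $\deviate$) cannot deliver the theorem, and the step where this surfaces is your decoupling claim: conditions (B1)--(B2) of \Cref{lem:limit-NE-structure} do \emph{not} say the baseline is a mutual best response, i.e.\ a NE of $\game$. Condition (B1) requires every action in $\supp(\policy^\baseline_1)$ to best-respond to \emph{every single} action in $\supp(\policy^\baseline_2)$, which is equivalent to $\VoI(\policy^\baseline_2)=0$ and is strictly stronger than the NE property. Hence any $\game$ whose equilibria are all mixed with positive value of information --- matching pennies, and precisely the hard instances a reduction must handle --- would have no counterpart among the simulation equilibria of your gadget, so the claimed identification of $\NE(\widetilde{\game}_\simSubscript^{\,\simcost})$ with $\NE(\game)$ fails. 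Your converse breaks for the same reason: taking the embedded $\policy^\game$ as baseline, \Plone{} strictly prefers $\simulate$ to that baseline by roughly $\VoI(\policy^\game_2)-\simcost>0$ once $\simcost$ is small, and no choice of $\simProb$ and $\slope$ restores indifference. (Separately, \Cref{prop:NE-for-extreme-simcost} cannot dispatch the cases $\policy_1(\simulate)\in\{0,1\}$; it concerns extreme \emph{costs}, not extreme simulation probabilities.)

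The paper's proof is self-contained and never invokes \Cref{lem:limit-NE-structure} (whose no-tiebreaking hypothesis you would in any case have to verify, and which can fail once an arbitrary $\game$ occupies the baseline block). It argues directly that any NE with $\policy_1(\simulate)>0$ must mix $\simulate$ with $(\trust,\policy^\game_1)$ against $\cooperate$ mixed with $(\defect,\policy^\game_2)$, using the scaffolding payoffs and the restriction $\simcost<\nicefrac{1}{3}$ to rule out \Plone{} being indifferent to $\walkOut$ as well. The NE-ness of the embedded $\policy^\game$ then follows from a decomposition argument, not from (B1)--(B2): in the expansion of $\utility_\pl(\policy)$ only the single term carrying the factor $\policy_1(\trust)\policy_2(\defect)$ depends on the embedded strategies, so a profitable deviation inside $\game$ would lift to a profitable unilateral deviation in $\widetilde{\game}_\simSubscript^{\,\simcost}$, and conversely every $\policy^\game\in\NE(\game)$ lifts to a simulation NE with the stated mixing weights. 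Your assignment of the two indifference equations is aimed at the right players (\Plone{}'s indifference yields $\utility_1(\policy^\game)=2-\nicefrac{1}{\slope}$, \Pltwo{}'s yields $\utility_2(\policy^\game)=\frac{1}{1-\simProb}-2$), but in the correct proof these equations are solved against the scaffolding payoffs, with $\game$ entering only through the scalars $\utility^\game_1(\policy^\game)$ and $\utility^\game_2(\policy^\game)$ --- not against an embedded cooperative block.
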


\begin{figure}[!tb]
    \centering
    \includegraphics[width=0.24\textwidth]{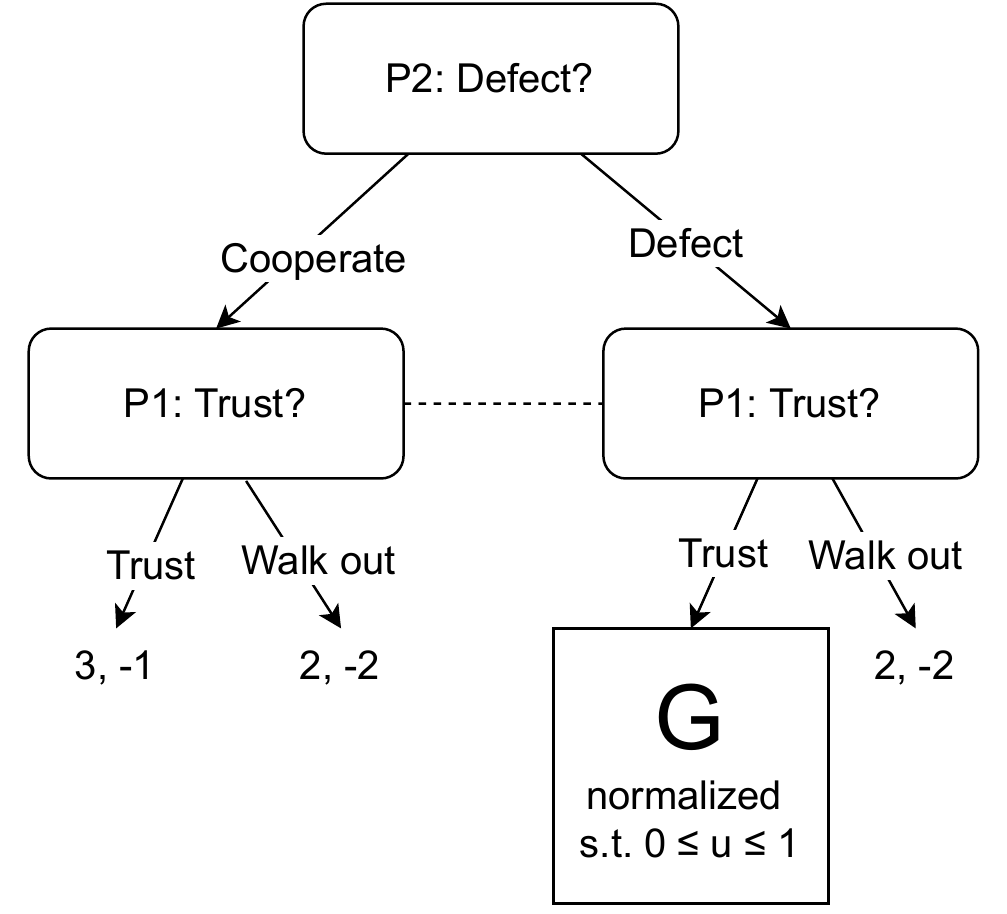}
    \caption{
        The game $\gameModif$ which shows that for any $\game$, there is a similarly-sized simulation game that is as hard to solve as $\game$.
    }
    \label{fig:trust-game-hardness}
\end{figure}
}

% \subsection{Placeholder: Tiebreaking}

% \vk{what if there is tiebreaking? There might be exponentially many tiebreaking strategies}

% \vk{Could this make the situation harder, even when the original game is simple? EG, P2 has 2 actions, P1 has n, with tiebreaking.}

\edit{
Finally, it is also generally difficult to determine whether simulation is beneficial or not:

\begin{restatable}{theorem}{usfulnessIsNPhard}\label{thm:determining-usefulness-is-hard}
For a general NFG $\game$ and $\simcost \in \R$, it is co-NP-hard to determine whether
    there is $\policy \in \NE(\simgame^\simcost) \setminus \NE(\game)$ s.t.
    $
        \forall \rho \in \NE(\game) :
        \utility_1(\policy) \geq \utility_1(\rho)
        \And
        \utility_2(\policy) \geq \utility_2(\rho)
    $.
\end{restatable}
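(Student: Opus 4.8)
The plan is to prove \textsf{co-NP}-hardness by reducing from (the complement of) a classical \textsf{NP}-complete problem about equilibria: deciding whether a two-player NFG $G$ has a Nash equilibrium in which \Plone{} receives utility at least a threshold $v$ (Gilboa and Zemel; Conitzer and Sandholm). Equivalently, I would show that the \emph{complement} of our problem --- ``there is \emph{no} beneficial new equilibrium'' --- is \textsf{NP}-hard, by constructing from each instance $(G,v)$ a base game $\game$ and a cost $\simcost$ for which a beneficial $\policy \in \NE(\simgame^\simcost) \setminus \NE(\game)$ exists if and only if \emph{every} NE of $G$ gives \Plone{} strictly less than $v$. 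Since equilibrium payoffs of a rational-payoff game are rationals of polynomially bounded bit-length, there is a computable gap $\delta > 0$ separating the two cases (either some NE has $\utility_1 \geq v$, or all have $\utility_1 \leq v - \delta$), which I will exploit to pick the gadget constant below.

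The base game $\game$ will consist of two weakly interacting blocks. The first is an affine copy of $G$: I keep \Plone{}'s payoffs unchanged (so the threshold $v$ is meaningful) and rescale \Pltwo{}'s payoffs by a small positive factor into $[0,\epsilon]$. As positive-affine transformations of each player's payoffs preserve the equilibrium set, this block's equilibria are exactly $\NE(G)$, with \Plone{}-payoffs equal to those of $G$ and \Pltwo{}-payoffs at most $\epsilon$. The second block is a trust-game-style gadget: an extra \Pltwo{}-action $\actionOpp_T$ (``cooperate'') and an extra \Plone{}-action $\action_r$ with $\br(\actionOpp_T) = \{\action_r\}$, where $(\action_r,\actionOpp_T)$ yields $\utility_2 = \utility_2^\ast$ (chosen $\gg \epsilon$) and $\utility_1 = v - \tfrac{\delta}{2} + \simcost$, while $\action_r$ is strictly dominated for \Plone{} against every \Pltwo{}-action other than $\actionOpp_T$, and $\actionOpp_T$ is strictly dominated for \Pltwo{} in the base game. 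This is exactly the configuration in which cooperation is impossible without commitment: $\actionOpp_T$ and $\action_r$ are never used in the base game, so $\NE(\game)=\NE(G)$ and the universal quantifier in the statement ranges precisely over $\NE(G)$; yet under simulation $\actionOpp_T$ becomes \Pltwo{}'s optimal commitment because $\utility_2^\ast$ exceeds every $\utility_2(\br,\actionOpp')$.

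For small $\simcost \in (0,\exceptionPoint_1)$, the gadget produces one fixed new equilibrium $\policy^\ast$ in which \Pltwo{} mixes $\actionOpp_T$ with a gadget ``defect'' action and \Plone{} mixes $\simulate$ with a gadget baseline --- precisely the mixed structure forced by \Cref{lem:VoI} and \Cref{lem:limit-NE-structure} (against a \emph{pure} opponent action simulation is never used, since one saves $\simcost$ by playing $\br$ directly). By tuning the gadget's indifference payoffs I would arrange $\utility_1(\policy^\ast)=v-\tfrac{\delta}{2}$ and $\utility_2(\policy^\ast)=\utility_2^\ast>\epsilon$, both \emph{constants} independent of the internal structure of $G$. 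Then $\policy^\ast$ Pareto-dominates a block equilibrium $\rho$ iff $v-\tfrac{\delta}{2}\geq \utility_1(\rho)$ (the \Pltwo{}-coordinate is automatic since $\utility_2^\ast>\epsilon\geq\utility_2(\rho)$), so $\policy^\ast$ dominates all of $\NE(\game)=\NE(G)$ iff $\max_{\rho}\utility_1(\rho)\leq v-\tfrac{\delta}{2}$, i.e. iff no NE of $G$ has $\utility_1\geq v$. This yields the equivalence in one direction.

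The main obstacle --- and where most of the work lies --- is ruling out interference between the two blocks. First, I must check that adding $\action_r,\actionOpp_T$ creates no spurious \emph{base}-game equilibria mixing gadget and block actions; this follows from $\action_r$ being strictly dominated off $\actionOpp_T$ and $\actionOpp_T$ being strictly dominated in the base game, so neither enters $\NE(\game)$. Second, and more delicately, when some $G$-equilibrium has $\utility_1\geq v$ I must show that \emph{no} new simulation equilibrium dominates all of $\NE(\game)$: any such dominator would have to weakly dominate that high equilibrium and hence satisfy $\utility_1\geq v$. The clean way to preclude this is to prove an ``orthogonality'' property --- that every equilibrium with $\policy_1(\simulate)>0$ is supported entirely on gadget actions, so its \Plone{}-payoff is capped at $v-\tfrac{\delta}{2}<v$. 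I expect establishing this to require bounding \Plone{}'s simulation payoff via \Cref{lem:VoIandSim} together with the indifference conditions, using a sufficiently large separation between $\utility_2^\ast$ and the block's \Pltwo{}-payoffs to force \Pltwo{}'s ``defect'' partner of $\actionOpp_T$ (and \Plone{}'s baseline) to stay within the gadget. Combining the two directions then identifies our problem with the complement of the \textsf{NP}-complete equilibrium-value problem, proving \textsf{co-NP}-hardness.
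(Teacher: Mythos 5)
Your reduction shares the paper's skeleton---reduce from the complement of the NP-hard threshold problem ``does $\game$ have a NE with $\utility_1$ above a given value?'', wrapping $\game$ so that simulation creates a new equilibrium whose \Plone{}-payoff sits exactly at the threshold and whose \Pltwo{}-payoff beats everything in $\game$---but the step you defer is the step that fails. The ``orthogonality'' property you propose to prove (every equilibrium with $\policy_1(\simulate)>0$ is supported entirely on gadget actions) is \emph{false} for your construction, no matter how large $\utility_2^\ast$ is. Concretely, if the block has a pure NE $(\action^*,\actionOpp^*)$ with $\action^*=\br(\actionOpp^*)$, then for small $\simcost$ the wrapper has an equilibrium in which \Pltwo{} mixes $\actionOpp^*$ with $\actionOpp_T$ (with weight of order $\simcost$, chosen to make \Plone{} indifferent between $\action^*$ and $\simulate$) and \Plone{} mixes $\action^*$ with $\simulate$ (with weight $\approx \utility_2(\action^*,\actionOpp^*)/\utility_2^\ast$, chosen to make \Pltwo{} indifferent between $\actionOpp^*$ and $\actionOpp_T$); increasing $\utility_2^\ast$ only shrinks the simulation probability, it does not destroy this equilibrium. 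So in the hard direction (some NE of $G$ has $\utility_1\geq v$, answer must be NO) you cannot argue via supports; you would instead have to show that every such mixed block/gadget equilibrium fails to Pareto-dominate some block NE. That is plausible (in the example above \Plone{} loses an $O(\simcost)$ margin relative to $(\action^*,\actionOpp^*)$), but proving it for all supports requires comparing new-equilibrium payoffs to \emph{approximate}-NE payoffs of $G$ and bounding the approximation error by $\simcost$ with constants of polynomial bit-length---essentially the whole difficulty of the theorem, and none of it is in the proposal. Separately, block-internal simulation equilibria have their simulation probability pinned by \Pltwo{}'s indifference ratios inside the block, which are invariant under your rescaling by $\epsilon$, so excluding them forces $\utility_2^\ast>\epsilon/p_{\min}$ with $p_{\min}$ a minimum over exponentially many supports, again needing an unstated LP/bit-length argument. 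Finally, there is a specification bug: without a gadget ``defect'' action $\deviation$ satisfying $\utility_2(\action_r,\deviation)>\utility_2^\ast$ (you use one later but never define it), $(\action_r,\actionOpp_T)$ is a pure NE of your wrapper, so its equilibrium set is not $\NE(G)$ and the quantifier in the statement ranges over the wrong set.

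The paper avoids this morass by never working in the regime $\simcost\in(0,\exceptionPoint_1)$: its wrapper forces $\policy_1(\simulate)=1$ in \emph{every} NE of the simulation game---for $\simcost<0$ this is immediate from Proposition~\ref{prop:NE-for-extreme-simcost}\,(i), and for $\simcost\geq 0$ it plays, in parallel with an opt-in/opt-out wrapper around $\game$, a Rock--Paper--Scissors side game whose \Plone{}-stakes are $\simcost+1$, so that simulating always nets a strict gain. Once simulation has probability one, \Pltwo{} simply plays their unique optimal commitment, which by design is the opt-out action; the set of new equilibria collapses to a single profile with \Plone{}-payoff exactly the threshold and \Pltwo{}-payoff strictly above \Pltwo{}'s best payoff in $\game$, and the equivalence with ``no NE of $\game$ exceeds the threshold'' is immediate, with no interference analysis needed. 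Note also that the paper's argument works for every \emph{fixed} $\simcost\in\R$, which is what the theorem asserts, whereas your reduction chooses $\simcost$ as part of the construction and needs $0<\simcost<\exceptionPoint_1$ of the constructed game---a strictly weaker statement.
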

}

\section{Effects on Players' Welfare}\label{sec:theory:social-welfare}
    \edit{
As we saw in \Cref{thm:determining-usefulness-is-hard},
    there is no simple method for determining whether introducing simulation into a general game will be socially beneficial.
However, this does not rule out the possibility of identifying particular sub-classes of games where simulation is useful or harmful.
}
We now first
    confirm the hypothesis that
    simulation is beneficial in settings where the only obstacle to cooperation is the missing trust in the simulated player.
We then \edit{give specific examples to illustrate that} in general games,
    simulation can also benefit either player at the cost of the other,
    or even be harmful to both.

% \medskip

\subsection{Simulation in Generalized Trust Games}

We now show that
    when the \textit{only} obstacle to cooperation is the lack of trust in the possibly-simulated player,
    simulation enables equilibria that improve the outcome for both players.
% The following definition aims to capture the settings where the \textit{only} obstacle to cooperation is the lack of trust in the possibly-simulated player:

\begin{definition}[Generalized trust games]\label{def:generalised-TG}
A game $\game$ is said to be a \defword{generalized trust game}
    if any pure-commitment equilibrium (where \Pltwo{} is the leader) is a strict Pareto improvement over any $\policy \in \NE(\game)$.
% A generalized trust game is said to be \defword{well-behaved} if it admits no best-response tie-breaking by \Plone{}.
\end{definition}

\edit{
\begin{restatable}[Simulation in trust games helps]{theorem}{generalisedTGnoTiebreaking}\label{thm:generealised-TG-no-tiebreaking}
Let $\game$ be a generalized trust game
    that admits no best-response utility tiebreaking by \Plone{}.
Then for all sufficiently low $\simcost$,
    $\simgame^\simcost$ admits a Nash equilibrium with $\policy_1(\simulate) > 0$
    that is a strict Pareto improvement over any NE of $\game$.
\end{restatable}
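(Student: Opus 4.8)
\textit{Proof sketch.}
The plan is to anchor the argument at $\simcost = 0$ and then continue a suitable equilibrium to small positive costs. Let $b^*$ be an optimal commitment of \Pltwo{} (maximizing $\utility_2(\br, \cdot)$). Under the pure-commitment outcome, \Plone{} best-responds and obtains $\utility_1(\br, b^*)$ while \Pltwo{} obtains $\utility_2(\br, b^*)$ (well-defined by the no-tiebreaking assumption), and by the definition of a generalized trust game (Def.~\ref{def:generalised-TG}) this pair strictly Pareto-dominates every $\rho \in \NE(\game)$. At $\simcost = 0$ the profile $(\simulate, b^*)$ is already a NE of $\simgame^0$ realizing exactly this outcome: simulating is a best response since it yields $\utility_1(\br, b^*)$, and $b^*$ is a best response because under simulation \Pltwo{} receives $\utility_2(\br, b)$, which $b^*$ maximizes. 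The whole task is therefore to show this Pareto-dominant outcome persists as a genuine equilibrium outcome for all small $\simcost > 0$.

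To do so I would build the ``baseline $+$ deviation'' equilibrium of Lemma~\ref{lem:limit-NE-structure}. Take the baseline $\policy^\baseline_1$ to be \Plone{}'s part of a Nash equilibrium of the game in which \Plone{} is restricted to $\br(b^*)$, and let $d^*$ be an associated best response of \Pltwo{}, re-selected if necessary among tempting deviations to maximize the attractiveness ratio $\deviationRatio_{d^*}$ (condition (D3)). For small $\simcost > 0$ I would then set $\policy_2 = (1 - \epsilon)\, b^* + \epsilon\, d^*$ with $\epsilon$ chosen so that $\VoI(\policy_2) = \simcost$, and $\policy_1 = (1 - p)\, \policy^\baseline_1 + p\,\simulate$ with $p$ determined by \Pltwo's indifference between $b^*$ and $d^*$. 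By Lemma~\ref{lem:VoI} the first choice makes $\simulate$ tie with the baseline for \Plone{}, and the attractiveness-ratio selection makes $d^*$ and $b^*$ tie for \Pltwo{} while all other \Pltwo{}-actions stay weakly worse, so this is a NE with $\policy_1(\simulate) = p > 0$.

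The main obstacle is establishing that the two inequalities the construction relies on actually hold, namely strict temptation $\utility_2(\policy^\baseline_1, d^*) > \utility_2(\br, b^*)$ (giving $p > 0$) and disjoint best responses $\br(b^*) \cap \br(d^*) = \emptyset$ (giving a well-defined, positive, linearly-vanishing $\epsilon$). Both should follow from the generalized-trust-game property by the same device: if either failed, then $(\policy^\baseline_1, d^*)$ — or, in the boundary case, $(\policy^\baseline_1, b^*)$ — would itself be a Nash equilibrium of $\game$ whose \Pltwo{}-utility is at least $\utility_2(\br, b^*)$, so the commitment outcome could not strictly Pareto-dominate it, contradicting Def.~\ref{def:generalised-TG}. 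The genuinely delicate part is keeping the choices of $\policy^\baseline_1$, $d^*$ and $p$ mutually consistent (the baseline must best-respond to $d^*$ within $\br(b^*)$ \emph{and} $d^*$ must be \Pltwo's best response against the realized mixture), which is the coupled fixed point that the interlocking conditions (B1)--(B2) and (D1)--(D3) of Lemma~\ref{lem:limit-NE-structure} encode; I would also treat the exceptional configuration of multiple optimal commitments without a common best response (the case flagged in Lemma~\ref{lem:VoI}(2)) separately, where the identical construction degenerates to $\policy_1(\simulate) = 1$.

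Finally I would close the Pareto claim by a routine continuity and compactness step. Along the constructed family, \Pltwo's equilibrium utility equals $\utility_2(\br, b^*)$ exactly (independent of $\simcost$), while \Plone's equals $\max_{a} \utility_1(a, \policy_2) \to \utility_1(\br, b^*)$ as $\simcost \to 0_+$. Since $\NE(\game)$ is compact and its utilities are continuous, the maxima $\max_{\rho} \utility_1(\rho)$ and $\max_{\rho} \utility_2(\rho)$ are attained and lie strictly below the respective commitment values; hence there is a positive gap, and for all sufficiently small $\simcost$ the constructed equilibrium strictly Pareto-dominates every NE of $\game$, as required. \qed
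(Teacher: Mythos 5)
Your proposal correctly identifies the equilibrium structure the paper also targets --- the ``baseline $+$ deviation'' form of Lemma~\ref{lem:limit-NE-structure}, the degenerate case of multiple optimal commitments without a common best response handled by $\policy_1(\simulate)=1$, and the closing compactness argument over $\NE(\game)$ --- but the step you yourself flag as ``the genuinely delicate part'' is precisely the step you never carry out, and it is the heart of the theorem. Note that Lemma~\ref{lem:limit-NE-structure} only \emph{characterizes} simulation equilibria that are assumed to exist; it is not an existence result, so appealing to its conditions (B1)--(B2), (D1)--(D3) cannot close the fixed point. Your concrete recipe does not close it either: a Nash equilibrium of $\game$ with \Plone{} restricted to $\br(b^*)$ makes \Pltwo{} optimize \emph{raw utility} $\utility_2(\policy^\baseline_1,\symbolPlaceholder)$, whereas the constructed profile requires every supported deviation to maximize the \emph{attractiveness ratio} $\deviationRatio_\deviation$ against $\policy^\baseline_1$ (condition (D3)) --- a different objective --- and re-selecting $d^*$ by that ratio after fixing $\policy^\baseline_1$ destroys the property that made $\policy^\baseline_1$ optimal in the first place. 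Symmetrically, \Plone{}'s indifference between each baseline action and $\simulate$ forces every $\action\in\supp(\policy^\baseline_1)$ to have the same, minimal regret $\utility_1(\br,\policy^\deviate_2)-\utility_1(\action,\policy^\deviate_2)$ against the deviation policy; a single pure $d^*$ generically cannot equalize these regrets across a mixed baseline, so the deviation policy must in general be mixed. There is also a secondary gap: with a single $b^*$, a \emph{second} optimal commitment sharing a best response with $b^*$ can itself be a profitable ``likes-baseline'' deviation, which your construction never rules out; this is why the paper's baseline for \Pltwo{} is the uniform mixture over \emph{all} optimal commitments.

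The paper resolves the coupled fixed point with a device absent from your sketch: an auxiliary game $\gameAlt$ in which \Plone{}'s actions are the universal best responses to all optimal commitments, \Pltwo{}'s actions are the non-optimal commitments, \Plone{}'s payoff is \emph{minus the regret} $-\left(\utility_1(\br,\deviation)-\utility_1(\action,\deviation)\right)$, and \Pltwo{}'s payoff is the attractiveness ratio $\deviationRatio_\deviation(\action)$. A Nash equilibrium $(\rho_1,\rho_2)$ of $\gameAlt$ exists by Nash's theorem, and by construction it satisfies both consistency requirements simultaneously: $\rho_1$ equalizes (and minimizes) \Plone{}'s regrets against $\rho_2$, and $\rho_2$ maximizes the attractiveness ratio against $\rho_1$. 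Setting $\policy^\baseline_1:=\rho_1$, $\policy^\deviate_2:=\rho_2$, and $\policy^\baseline_2:=$ the uniform distribution over optimal commitments, the paper then verifies the indifference conditions, the $\slope\simcost$ form of \Pltwo{}'s deviation probability, and the absence of profitable deviations for either player. Your outline describes the equilibrium one hopes to find, and correctly anticipates where the difficulty lies, but without this auxiliary-game existence argument (or an equivalent one) it does not establish that the equilibrium exists.
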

}

% VOJTA: COMMENTED OUT WEAKER VERSION BELOW
% \begin{restatable}[Simulation in trust games helps]{theorem}{generalisedTG}\label{thm:generealised-TG}
% Let $\game$ be a generalized trust game
%     with generic payoffs.
% Then for all sufficiently low $\simcost$,
%     $\simgame^\simcost$ admits a Nash equilibrium with $\policy_1(\simulate) > 0$
%     that is a strict Pareto improvement over any NE of $\game$.
% \end{restatable}

% VOJTA: hopefully this proof sketch also works for the newer version...well, hopefully it gives a good intuition anyway
\begin{proof}[Proof sketch]
We construct a NE where 
    \Pltwo{} mixes between their optimal commitment $\actionOpp$ (from the pure-commitment equilibrium corresponding to $\game$) and some deviation $\deviation$
    while \Plone{} mixes between their best-response to $\actionOpp$ and simulating.
We show that $(\action, \actionOpp)$ forms the baseline policy of this simulation equilibrium, which implies that as $\simcost \to 0_+$, this NE eventually strictly Pareto-improves any NE of $\game$.
(And the fact that $(\action, \actionOpp)$ cannot be a NE of $\game$ ensures the existence of a suitable $\deviation$.)
\end{proof}

% \medskip
\subsection{Simulation in General Games}

We now investigate the relationship between simulation cost and the players' payoffs in \textit{general} games.
We start by listing the two general trends that we are aware of.
% \vc{could drop the rest of this paragraph for space}
% However, as we already hinted in Section~\ref{sec:illustrative-examples},
%     introducing simulation can in general have very disparate effects
%     and there is not much we can say beyond the two trends.
% Rather than attempting to exhaustively list all possibilities of what can happen,
%     we give several examples that illustrate the underlying mechanisms
%         and which could be used to generate further results.

The first of the general results is that for the extreme values of $\simcost$, the situation is always predictable:
For $\simcost < 0$,
    \Plone{} always simulates (Prop.\,\ref{prop:NE-for-extreme-simcost})
    and making simulation cheaper will increase their utility without otherwise affecting the outcome.
Similarly, when $\simcost$ is already so high that \Plone{} never simulates,
    any further increase of $\simcost$ makes no additional difference.

Second, if \Pltwo{} could choose the value of $\simcost$, they would generally be indifferent between all the values within a specific interval $(\exceptionPoint_i, \exceptionPoint_{i+1})$.
    Indeed, this follows from Proposition~\ref{prop:piecewise_constant_linear},
        which implies that \Pltwo{}'s utility remains constant between any two breakpoints of $\simgame$.

\medskip

The Examples~\ref{ex:cheap-cheap}-\ref{ex:exp-exp} illustrate that
    the players might both agree and disagree about their preferred value of $\simcost$,
    and this value might be both low and high.

\begin{example}[Both players prefer cheap simulation]\label{ex:cheap-cheap}
    In the Alice and Bobble game from Figure~\ref{fig:trust-game-nfg}, each player's favoured NE exists for $\simcost = 0$.
\end{example}

\begin{example}[Only simulator prefers cheap simulation]\label{ex:cheap-exp}
    Consider the ``unfair guess-the-number game'' where
        each player picks an integer between $1$ and $\guessTheNumberConstant$.
        If the numbers match, \Pltwo{} pays $1$ to \Plone{}.
        Otherwise, \Plone{} pays $1$ to \Pltwo{}.
    In this game, \Pltwo{} clearly prefers simulation to be prohibitively costly while \Plone{} prefers as low $\simcost$ as possible.
\end{example}

\edit{
In fact, \Cref{ex:cheap-exp} extends to all zero-sum games:

\begin{restatable}{proposition}{simInZeroSumGames}\label{prop:sim-in-zero-sum-games}
If a zero-sum $\game$ has NE utilities $(v, -v)$,
    then
    $\forall \simcost$ $\forall \policy \in \NE(\simgame^\simcost)$:
    $\utility_1(\policy) \geq v$, $\utility_2(\policy) \leq -v$.    
\end{restatable}
}

\begin{example}[Only simulator prefers expensive simulation]\label{ex:commitment-game-continued}\label{ex:exp-cheap}
    In the commitment game (Figure~\ref{fig:commitment-game}),
        introducing free simulation
        creates a second NE in which \Plone{} is strictly worse off
        and stops the original NE from being trembling-hand perfect.
    If simulation were subsidized, the original simulator-preferred NE would disappear completely.
    (In fact, with $\simcost > 0$ that is not prohibitively costly,
        the situation is similar to the $\simcost = 0$ case.)
    In summary, this shows that
        simulation can hurt the simulator, even when using it is free (or even subsidized) and voluntary.
\end{example}

\begin{example}[Both players prefer expensive simulation]\label{ex:exp-exp}
    Consider a Joint Project game where
        \Plone{} proposes that \Pltwo{} collaborates with them on a startup.
    If \Pltwo{} accepts, their business will be successful,
        yielding utilities $\utility_1 = \utility_2 = 100$.
        \Pltwo{} then picks a secure password ($pw \in \{1, \dots, 26 \}^{4}$) and puts their profit in a savings account protected by that password.
    Finally, \Plone{} can either do nothing or try to guess \Pltwo{}'s password ($g \in \{1, \dots, 26\}^{4}$) and steal their money.
        Successfully guessing the password would result in utilities $\utility_1 = 200$, $\utility_2 = - 10$,
            where the $-10$ comes from opportunity costs.
        However, if \Plone{} guesses wrong, they will be caught and sent to jail, yielding utilities $\utility_1 = -999$, $\utility_2 = 123$ \cite{smith2009exploring}.
    
    Without simulation, the NE of this game is for the players to collaborate and for \Plone{} to not attempt to guess the password.
    However, with cheap enough simulation, \Plone{} would simulate \Pltwo{}'s choice of password and steal their money --- and \Pltwo{}, expecting this, would not agree to the collaboration in the first place.
    As a result, both players would prefer simulation to be prohibitively expensive.
\end{example}

\begin{example}[The preferences depend on equilibrium selection]\label{ex:eq-selection}
    Consider various mixed-motive games
        such as the Threat Game (e.g., \cite[Sec.\,3-4]{clifton2020cooperation}),
        % Bach or Stravinsky/
        Battle of the Sexes, or Chicken (e.g., \cite{MAS}).
    Generally, these games have
        one pure NE that favours \Plone{},
        a second pure NE that favours \Pltwo{},
        and a mixed NE that is strictly worse than either of the pure equilibria for both \Plone{} and \Pltwo{}.
    By introducing subsidized simulation into such a game,
        we eliminate both the simulator-favoured pure NE and the dispreferred mixed NE.
    This can be bad, neutral, or even good news for the simulator,
        depending on which of the NE would have been selected in the original game.
    Somewhat relatedly, introducing subsidized simulation destroys the suboptimal equilibria in Stag Hunt and Coordination Game (e.g., \cite{MAS}).
\end{example}

\medskip

Beyond the examples above, players might even prefer neither $\simcost = 0$ nor $\simcost = \infty$ but rather something inbetween:

\begin{example}[The preferred $\simcost$ is non-extreme]\label{ex:non-extreme}
    Informally, the underlying idea behind the example is that
        the game should have the potential for a positive-sum interaction,
        but also be unfair towards \Plone{} if they never simulate
        and unfair towards \Pltwo{} if \Plone{} always simulates.
    If we then give each player the option to opt out,
        the only way either of the players can profit is if simulation is neither free nor prohibitively expensive.
    For a detailed proof, see \Cref{ex:nontrivial-simcost} in \Cref{sec:app:proofs}.
\end{example}

    \section{Related Work}\label{sec:related-work}

In terms of the formal framework, our work is closest to the literature on games with commitment \cite{conitzer2006computing,von2010leadership}.
    This is typically modelled as a Stackelberg game \cite{von1934marktform},
        where one player commits to a strategy
        while the other player only selects their strategy after seeing the commitment.
    In particular, \cite{letchford2014value} investigates how much the committing player can gain from committing.
    Commitment in a Stackelberg game is always observed.
        (An exception is \cite{korzhyk2011solving}, which assumes a fixed chance of commitment observation.)
        In contrast, the simulation considered in this paper would correspond to a setting where
            (1) one player pays for having \textit{the other player} make a (pure) commitment and
            (2) the latter player does not know whether their commitment is observed, as the probability of it being observed is a parameter controlled by the observer.
    Ultimately, these differences imply that
        the Stackelberg game results are highly relevant as inspiration, but they are unlikely to have immediate technical implications for our setting
            (except for when $\simcost < 0$).

In terms of motivation, the setting that is the closest to our paper is open-source game theory and program equilibria~\cite{McAfee1984,Howard1988,Rubinstein1998}, \cite[Sec.\,10.4]{Tennenholtz2004}.
In program games, two (or more) players each choose a program that will play on their behalf in the game, and these programs can read each other.
    To highlight the connection to the present paper, note that one approach to attaining cooperative play in this formalism is to have the programs simulate each other~\cite{RobustProgramEquilibrium}.
The setting of the program equilibrium literature differs from ours in two important ways.
    First, the program equilibrium literature assumes that both players have access to the other player's strategy.
        (Much of the literature addresses the difficulties of mutual simulation or analysis, e.g., see \cite{barasz2014robust,Critch2019,critch2022cooperative,oesterheld2022note} in addition to the above.)
    Second, with the exception of time discounting \cite{fortnow2009program}, the program equilibrium formalism assumes that access to the other player's code is without cost.

Another approach to simulation is game theory with translucent players \cite{halpern2018translucent}.
    This framework assumes that the players tentatively settle on some strategy from which they can deviate, but doing so has some chance of being visible to the other player.
    In our terminology, this corresponds to a setting where each player always performs free but unreliable simulation of the other player.

The literature on Newcomb's problem \cite{Nozick1969} in decision theory also studies problems in which one agent can predict another's choices. For example, Newcomb's problem itself and \textit{Parfit's hitchhiker} Parfit's hitchhiker \cite{Parfit1984,Barnes1997}]can be viewed as a variant of the trust game with simulation and various other scenarios in that literature can be viewed as zero-sum games with the ability to predict similar to Example \ref{ex:cheap-exp} \cite[Section 11]{Gibbard1981}\cite{Spencer2017,ExtractingMoneyFromCDT}]. However, in this literature the predictor is generally not strategic. Instead, the predictor's behavior is assumed to be fixed (e.g, always simulate and best respond to the observed strategy). The game-theoretic, mutually strategic nature of our simulation games is essential to the present paper. The literature on Newcomb's problem instead focuses on more philosophical issues of how an agent should choose when being the subject of prediction.

% \co{How important is it to have such a paragraph? Probably not \textit{that} important, so for IJCAI it is probably close to the top on the list of things to cut. I do think that Newcomb's problem is somewhat widely known nowadays among CS/game theory people. My guess is that one out of our 3-4 reviews will know this problem and might bring it up.}
\section{Discussion}\label{sec:discussion}
    % PLACEHOLDER: Summary
\paragraph{Summary}
In this paper,
    we considered how the traditional game-theoretic setting changes
        when one player obtains the ability to run an accurate but costly simulation of the other.
    We established some basic properties of the resulting simulation games.
        We saw that
            (between breakpoint values of which there can be only finitely many),
            their equilibria change piecewise constantly/linearly (for \Plone{}/\Pltwo{}) with the simulation cost.
        Additionally, the value of information of simulating is often equal to the simulation cost.
        These properties had strong implications for the equilibria of games with cheap simulation and allowed us to prove several deeper results.
    Our initial hope was that simulation could counter a lack of trust --- and this turned out to be true.
    However, we also saw that the effects of simulation can be ambiguous, or even harmful to both players.
        This suggests that before introducing simulation to a new setting (or changing its cost), one should determine whether doing so is likely to be beneficial or not.
    Fortunately, our analysis revealed that for the very general class of normal-form games with generic payoffs, this can be done cheaply.

\paragraph{Future Work}
The future work directions we find particularly promising are the following:
    First, the results on generic-payoff NFGs cover the normal-form representations of some, but not all, extensive-form games.
        Extending these results to EFGs thus constitutes a natural next step.
    Second, we saw that the cost of simulation that results in the socially-optimal outcome varies between games.
        It might therefore be beneficial to learn how to tailor the simulation cost to the specific game, and to what value.
    Third, we assumed that simulation predicts not only the simulated agent's policy, but also the result of any of their randomization --- i.e., their precise action.
        Whether this assumption makes sense depends on the precise setting,
            but in any case,
            by considering mixtures over \textit{behavioral} strategies \cite{halpern2021sequential},
        it might be possible to go beyond this assumption while recovering most of our results.
    Finally, our work assumes that simulation is perfectly reliable, captures all parts of the other agent, and is only available to one agent but not the other.
        Ultimately, it will be necessary to go beyond these assumptions.
        We hope that progress in this direction can be made by developing a framework that encompasses both our work and some of the formalisms discussed in \Cref{sec:related-work} (and in particular the work on program equilibria).

\edit{
\paragraph{Limitations}
The simulation approach to cooperation has various limitations.
Apart from the obstacles implied by the future work above,
there is the issue of making sure that the agent we are simulating is the same as the agent we end up interacting with
    --- for example, the other party might try to feed us fake source code, or change it after sharing it.
Moreover, the simulated party needs to be willing to its policy, which might be in tension with retaining privacy, trade secrets, etc.
Finally, the simulation approach relies on the simulated agents being unable to differentiate between simulation and reality.
    This might be difficult to achieve as the relevant situations become more complicated and AI agents grow in capability.
}

\anonymise{
    % \journal{\input{ethical_statement}}
    \section*{Acknowledgments}

We are grateful to
    Emanuel Tewolde for pointing out the connection between \Cref{prop:piecewise_constant_linear} and parametrized linear programming,
    Zuzana Kovarikova for help with \Cref{lem:limit-eq-alternative-def},
    Lewis Hammond for discussions and feedback on an earlier version of the text,
    \edit{and Emin Berker for feedback on the camera-ready version.
We would also like to thank an anonymous IJCAI reviewer for their suggestions regarding the definition of simulation games and inspiring questions regarding the difficulty of determining the usefulness of simulation.}
%This research was funded by Cooperative AI Foundation.
We thank the Cooperative AI Foundation, Polaris Ventures (formerly the Center for Emerging Risk Research) and Jaan Tallinn's donor-advised fund at Founders Pledge for financial support.
    % \edit{\input{author_contributions}}
}{}

\bibliographystyle{named}
    \bibliography{main}

\clearpage
\appendix

\draftOnly{
    \input{todos.tex}
}

\section{Proofs}\label{sec:app:proofs}

\edit{
\simgameDefEquivalence*

\begin{proof}
This immediately follows from \Cref{def:no-br-tiebreaking} and \Cref{def:simgame}.
\end{proof}

\reductionToFixedBRpolicy*

\begin{proof}
Let $\game = (\actions, \utility)$
    and suppose that $\policy \in \NE(\simgame^{\simcost,\all})$ is not in $\NE(\game)$.
This means that $\policy_1$ must put a non-zero probability on at least one of the simulate actions and we have
    $
        \lambda
        :=
        \sum_{\brSelectorPure \in \brSelectorsPure}
            \policy_1(\simulate_{\brSelectorPure})
        \in (0, 1]
        .
    $
As a result, we can define a (possibly stochastic) best-response policy
    $
        \brSelector
        :=
        \lambda^{-1}
        \sum_{\brSelectorPure \in \brSelectorsPure}
                \policy_1(\simulate_{\brSelectorPure})
            \brSelectorPure
        \in \brSelectors
    $
and use it to construct a policy $\rho$ for $\simgame^{\simcost, \brSelector}$ as
    $\rho_2 := \policy_2$,
    $\rho_1(\action) := \policy_1(\action)$ for $\action \in \actions_1$,
    $\rho_1(\simulate_{\brSelector}) := \lambda$.
By definition of $\simgame^{\simcost, \all}$ and $\simgame^{\simcost, \brSelector}$,
    the utility of any potential deviation from $\rho$ in $\simgame^{\simcost, \brSelector}$ is the same as the utility of the corresponding deviation from $\policy$ in $\simgame^{\simcost, \all}$.
This implies that $\rho$ is a $\NE$ in $\simgame^{\simcost, \brSelector}$.

In the opposite direction, any $\rho \in \NE(\simgame^{\simcost, \brSelector})$,
    corresponds to a unique $\policy$ in $\simgame^{\simcost, \all}$
--- this is because every stochastic best-response policy can be uniquely written as a convex combination of deterministic best-response policies.
Since $\rho$ is an equilibrium, \Pltwo{} will have no incentive to deviate from $\policy$ either
    (because the utilities in $\simgame^{\simcost, \all}$ would be identical to the utilities in $\simgame^{\simcost, \brSelector}$).
The same will be true for \Plone{}'s potential deviations to $\action \in \actions_1$.
Finally, while replacing $\simulate_{\brSelector}$ by some deterministic best-response policy could change the utilities of \Pltwo{},
    it will not make a difference to \Plone{}'s utilities over $\simulate_{\brSelector}$,
    so such actions cannot lead to a profitable deviation either.
This shows that $\policy$ will be a $\NE$ of $\simgame^{\simcost, \all}$.
\end{proof}
}

\basicNEresults*

\begin{proof}
In (i), the dominance claim hold because $\utility_1(\simulate, \actionOpp) = \utility_1(\br, \actionOpp) - \simcost$ for every $\actionOpp \in \actions_2$.
    As a result, when \Plone{} simulates with probability $1$,
        \Pltwo{} gets utility $\utility_2(\simulate, \actionOpp) = \utility_2(\brSelector(\actionOpp), \actionOpp)$ for some best-response policy $\brSelector$.
    As a result, \Pltwo{} must select an action which maximises this value.
    And since \Plone{} realises the utility $\utility_2(\simulate, \actionOpp)$ by playing according to $\brSelector(\actionOpp)$,
    we can identify this equilibrium with some pure Stackleberg equilibrium of $\game$ where \Pltwo{} is the leader.

In (ii), $\simulate$ is strongly dominated by \Plone{}'s min-max policy.
    In particular, $\simulate$ cannot be played in any NE.
\end{proof}

\VoIElementary*

\begin{proof}
Let $\policy$ be a policy in $\game$.
Recall that $\VoI(\policy_2)$ is defined as
    the difference between
        $\sum_{\actionOpp \in \actions_2} \policy_2(\actionOpp) \utility_1(\br, \actionOpp)$
        and \Plone{}'s best-response utility against $\policy_2$.
In other words, we have
\begin{align*}
    & \utility_1(\simulate, \policy_2)
    \\
    \ & =  \sum_{\actionOpp \in \actions_2}
                \policy_2(\actionOpp)
                \utility_1(\br, \actionOpp)
        - \simcost
    \\
    \ & = \left(
            \sum_{\actionOpp \in \actions_2}
                \policy_2(\actionOpp)
                \utility_1(\br, \actionOpp)
            - \utility_1(\br, \policy_2)
        \right)
        + \utility_1(\br, \policy_2)
        - \simcost
    \\
    \ & = \VoI(\policy_2) + \utility_1(\br, \policy_2) - \simcost 
    \\
    \ & = \utility_1(\br, \policy_2) + (\VoI(\policy_2) - \simcost )
    .
    % \\
    % \ & = \utility_1(\policy_1, \policy_2) + (\VoI(\policy_2) - \simcost )
\end{align*}
% (where the last line holds because $\policy$ is a Nash equilibrium).
\end{proof}

This result immediately yields the following:

\begin{restatable}{corollary}{tremblingHandPerfect}\label{cor:trembling-hand-perfect}
Let $\game$ be a game in which $\bigcap_{\actionOpp \in \actions_2} \br(\actionOpp) = \emptyset$.
Then all trembling-hand-perfect NE of $\simgame^0$ satisfy $\policy_1(\simulate) = 1$.
In particular, the set of trembling-hand-perfect NE of $\simgame^0$ can be identified with the set of pure Stackelberg equilibria of $\game$.
\end{restatable}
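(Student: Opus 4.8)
The plan is to combine the value-of-information identity (\Cref{lem:VoIandSim}) with the definition of trembling-hand perfection. First I would record the elementary fact, read off directly from \Cref{def:VoI}, that $\VoI(\policy_2) \ge 0$ with equality \emph{precisely} when the actions carrying positive probability under $\policy_2$ admit a common best response, i.e. $\bigcap_{\actionOpp \in \supp(\policy_2)} \br(\actionOpp) \neq \emptyset$. This is just the observation that exchanging the max and the sum in $\VoI(\policy_2)$ loses nothing exactly when a single pure action simultaneously best-responds to every $\actionOpp$ with $\policy_2(\actionOpp) > 0$.

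The core step is then immediate: under the hypothesis $\bigcap_{\actionOpp \in \actions_2}\br(\actionOpp) = \emptyset$, any totally mixed $\policy_2$ has full support $\supp(\policy_2) = \actions_2$, so $\VoI(\policy_2) > 0$ strictly. Applying \Cref{lem:VoIandSim} at $\simcost = 0$ gives $\utility_1(\simulate, \policy_2) = \utility_1(\br, \policy_2) + \VoI(\policy_2) > \utility_1(\br, \policy_2) \ge \utility_1(\action, \policy_2)$ for every $\action \in \actions_1$. Hence against any totally mixed $\policy_2$ the action $\simulate$ is the \emph{unique} best response of \Plone{} in $\simgame^0$.

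Now I would invoke the standard characterization of trembling-hand perfection: a THP equilibrium $\policy$ is a limit of totally mixed profiles $\policy^n \to \policy$ for which $\policy_\pl$ best-responds to $\policy^n_{\opp}$ for every player $\pl$ and every $n$. Applying this to \Plone{} against the totally mixed $\policy^n_2$ and using the previous paragraph, the only best response is $\simulate$; therefore $\policy_1 = \simulate$, i.e. $\policy_1(\simulate) = 1$, which is the first assertion.

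For the identification with pure Stackelberg equilibria I would argue both inclusions. Any NE of $\simgame^0$ with $\policy_1(\simulate) = 1$ gives \Pltwo{} utility $\utility_2(\simulate, \actionOpp) = \E_{\action \sim \brSelector(\actionOpp)}\utility_2(\action, \actionOpp)$, so best-responding forces \Pltwo{} onto an optimal commitment, and reading off $(\brSelector, \actionOpp)$ recovers a pure-commitment equilibrium exactly as in the proof of \Cref{prop:NE-for-extreme-simcost}(i). Conversely, each such equilibrium (viewed with $\policy_1(\simulate) = 1$) should be exhibited as THP by constructing a tremble sequence: the core observation already guarantees that $\simulate$ stays strictly optimal for \Plone{} against any totally mixed $\policy^n_2$, so the only thing left is to choose the perturbations so that \Pltwo{}'s optimal commitment $\actionOpp$ remains a best response to the perturbed $\policy^n_1$. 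I expect this final construction to be the main obstacle: when \Pltwo{} has several optimal commitments, or when \Plone{}'s trembles break ties in \Pltwo{}'s payoffs, one must design $\policy^n_1$ to tilt \Pltwo{}'s indifference toward $\actionOpp$, and this tie-breaking is the only place where the argument needs more than a one-line justification.
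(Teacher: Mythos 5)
Your proof of the first assertion is correct and follows exactly the route the paper intends: the paper states this corollary with no argument beyond ``\Cref{lem:VoIandSim} immediately yields it,'' and your two steps make that precise --- (a) $\VoI(\policy_2) = 0$ if and only if the support of $\policy_2$ admits a common best response, so the hypothesis $\bigcap_{\actionOpp \in \actions_2}\br(\actionOpp) = \emptyset$ forces $\VoI(\policy_2) > 0$ for every totally mixed $\policy_2$; and (b) by \Cref{lem:VoIandSim} at $\simcost = 0$, $\simulate$ is then the \emph{unique} best response to every totally mixed $\policy_2$, so Selten's characterization of trembling-hand perfection forces $\policy_1 = \simulate$. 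The inclusion of trembling-hand-perfect NE into $\pureSE(\game)$ also goes through as you describe, via the argument of \Cref{prop:NE-for-extreme-simcost}(i).

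The obstacle you flag in the converse inclusion is a genuine gap, and in fact it cannot be closed in the stated generality: the tie-breaking problem you identify is fatal, not merely technical. Take $\actions_1 = \{\action_1,\action_2\}$, $\actions_2 = \{\actionOpp_1,\actionOpp_2\}$, with $\utility_1(\action_i,\actionOpp_j) = 1$ if $i=j$ and $0$ otherwise (so $\br(\actionOpp_1)\cap\br(\actionOpp_2) = \emptyset$ and $\brSelector$ is forced), and $\utility_2(\action_1,\actionOpp_1) = 1$, $\utility_2(\action_2,\actionOpp_1) = 0$, $\utility_2(\action_1,\actionOpp_2) = 2$, $\utility_2(\action_2,\actionOpp_2) = 1$. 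Both $\actionOpp_1$ and $\actionOpp_2$ are optimal commitments (value $1$), so $(\brSelector,\actionOpp_1) \in \pureSE(\game)$ and $(\simulate,\actionOpp_1)$ is a NE of $\simgame^0$. But against any totally mixed $\policy_1$ placing masses $\mu_1, \mu_2 > 0$ on $\action_1, \action_2$, the $\simulate$ terms cancel (the commitment values are tied) and $\utility_2(\policy_1,\actionOpp_2) - \utility_2(\policy_1,\actionOpp_1) = \mu_1 + \mu_2 > 0$, so $\actionOpp_1$ is never a best response to \emph{any} tremble of \Plone{}, and $(\simulate,\actionOpp_1)$ is not trembling-hand perfect. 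Hence the ``identification'' in the corollary holds in general only as the inclusion you proved (together with Selten's existence theorem guaranteeing the left-hand side is nonempty); the full identification requires an extra hypothesis, e.g.\ a unique optimal commitment, which generic payoffs would supply. Your instinct that the tremble construction ``needs more than a one-line justification'' is thus exactly right --- the paper, which offers no proof at all for this corollary, glosses over precisely this point.
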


\VOIequalsSIMCOST*

\begin{proof}
% Let $\simcost \in \R$ and $\policy \in \NE(\simgame^\simcost)$.
% \Plone's policy $\policy_1$ is equivalent to either simulating and best-responding to $\policy_2$, with probability $\policy_1(\simcost)$, or otherwise playing some ``baseline'' policy $\baselinePolicy_1$ (that puts no probability on $\simulate$), with probability $1 - \policy_1(\simulate)$.
% The first option yields utility
%     $\sum\nolimits_{\actionOpp \in \actions_2} \policy_2(\actionOpp) \utility_1(\br, \actionOpp)$
%     minus the simulation cost $\simcost$ (by definition of $\simulate$)
%     while the second option yields
%     $\utility_1(\br, \policy_2)$
%         (since otherwise \Plone{} could unilaterally increase their utility by replacing $\baselinePolicy_1$ with $\br(\policy_2)$).
% It follows that if $\VoI(\policy_2)$ was strictly more than $\simcost$, \Plone{} would deviate to only playing $\simulate$.
% Conversely, if it was strictly less than $\simcost$, they would deviate to only playing $\baselinePolicy_1$.
% This gives the conclusion of the proposition.
(1) This proposition straightforwardly follows from Lemma~\ref{lem:VoIandSim}.
Indeed, if $\VoI(\policy_2) < \simcost$, the equation implies that deviating to $\simulate$ would decrease \Plone{}'s utility, and thus $\simulate$ cannot be in the support of $\policy_1$.
If $\VoI(\policy_2) > \simcost$, simulation would give strictly higher utility (against $\policy_2$) than any action from the original game, so simulation would have to be the \textit{only} action in the support of $\policy_1$.
Consequently, the only case when the support of $\policy_1$ can include both $\simulate$ and some other action is when $\VoI(\policy_2) = \simcost$.

(2) Suppose that
    $\policy$ is a NE of $\simgame^0$
    with $\VoI(\policy_2) > 0$.
By Lemma~\ref{lem:VoIandSim}, this means that simulating is a strongly dominant action for \Plone{} and $\policy_1(\simulate) = 1$.
Subsequently, any action from the support of $\policy_2$ must be an optimal commitment against $\brSelector$.
However, the definition of $\VoI$ implies that there can be no single action of \Plone{} which would give maximum utility against all actions $\actionOpp$ from the support of $\policy_2$.
    In other words, $\game$ must have optimal commitments of \Pltwo{} that do not share a best response.
This concludes the proof.
\end{proof}

\VoIandSpecNE*

\begin{proof}
Let $\policy$ be a NE of $\simgame$.
Proposition~\ref{lem:VoI} implies that when $\simcost < \VoI(\policy_2)$, $\policy$ cannot be a NE of $\simgame^\simcost$ (since $\simulate$ is not in the support of $\policy_1$).
Conversely, when $\simcost \geq \VoI(\policy_2)$,
    Lemma~\ref{lem:VoIandSim} implies that \Plone{} isn't incentivised to unilaterally switch to $\simulate$.
    Moreover, since $\policy$ is a NE of $\game$, no player is incentivised to switch to any other actions.
    As a result, $\policy$ is a NE of $\simgame^\simcost$ for any $\simcost \geq 0$.
\end{proof}

\limitEquilibriaWLOG*
\begin{proof}
Let $\exceptionPoint_1$ be the first breakpoint of $\simgame$ that is higher than $0$.
Let  $\policy^0$ be a limit equilibrium of $\simgame$
    and let $\policy^n_2$ be a sequence of strategies for which $\policy^0_2 = \lim_n \policy^n_2$, $(\policy^0_1, \policy^n_2) \in \NE(\simgame^{\simcost_n})$, $\simcost_n \to 0_+$.
By Proposition~\ref{prop:piecewise_constant_linear},
    each $\policy^n_2$ lies on some line segment $\NEtrajectory_n : \simcost \in [0, \exceptionPoint_1] \mapsto \policy^n_2 + \direction^n (\simcost - \simcost_n)$,
    where $\direction^n \in \R^{\actions_2}$ is the direction the line goes in
    and $(\policy^0_1, \NEtrajectory_n (\simcost)) \in \NE(\simgame^\simcost)$ for each $\simcost \in [0, \exceptionPoint_1]$.
The set $\{ \direction_n \mid n \in \N \}$ is necessarily bounded in $\R^{\actions_2}$
    (otherwise $\NEtrajectory_n(\exceptionPoint)$ would be unbounded in $\R^{\actions_2}$ --- i.e., it wouldn't lie $\policies_2$).
Using a compactness argument, we can assume that $\direction_n$ converges to some $\direction_0 \in \R^{\actions_2}$.
Denote by $\NEtrajectory_0$
    the line segment $\NEtrajectory_0 : \simcost \in [0, \exceptionPoint_1] \mapsto \policy^0_2 + \direction_0 (\simcost - 0)$.
Since the set
    $\{
        (\simcost, \policy)
        \mid
            \simcost \in [0, \exceptionPoint_1],
            \,
            \policy \in \NE(\simgame^\simcost)
    \}$
is closed,
$\NEtrajectory_0$ satisfies
    $(\policy^0_1, \NEtrajectory_0(\simcost)) \in \NE(\simgame^\simcost)$
    for every $\simcost \in [0, \exceptionPoint_1]$.
Denoting $\exceptionPoint := \exceptionPoint_1$ and $\policy^\exceptionPoint_2 := \NEtrajectory_0(\exceptionPoint)$ concludes the proof.
\end{proof}

\limitEquilibriaVoI*

% Notation around baseline and deviation strategies
\newcommand{\likesBaseline}{\actions_2^\textnormal{LB}}
\newcommand{\likesSimulation}{\actions_2^\textnormal{LS}}
\newcommand{\simProbUpper}{\bar \simProb_\deviation}
\newcommand{\simProbLower}{\underline \simProb^\deviation}
\newcommand{\simProbMax}{\bar \simProb_*}
\newcommand{\simProbMin}{\underline \simProb^*}

\begin{proof}
Let $\game$, $\simcost_0$, and $\policy$ be as in the assumptions of the lemma.
To prove the statement, we first identify $\slope$ and the policies $\policy^\baseline$ and $\policy^\deviate_2$, and then we show that they have the desired properties.

\smallskip

Finding $\policy^\baseline_1$ is trivial
    --- we simply $\policy_1 =: (1 - \policy_1(\simulate) ) \cdot \policy^\baseline_1 + \policy_1(\simulate) \cdot \simulate$ and observe that $\policy^\baseline_1$ must be a valid policy for \Plone{}.
To find $\slope$, $\policy^\baseline_2$, and $\policy^\deviate_2$, we can use \Cref{prop:piecewise_constant_linear}.
    Indeed, this proposition implies that there is some linear function $\simcost \in [0, \exceptionPoint_1] \mapsto \policy^\simcost_2 \in \policies_2$
    for which $(\policy_1, \policy^\simcost_2) \in \NE(\simgame^\simcost)$.
        (Once we define the desired properties, this proves the condition (i).)
    To define the baseline policy of \Pltwo{}, we simply set $\policy^\baseline_2 := \policy^0_2$.
    To define the deviation policy,
        we first project $\policy^{\exceptionPoint_1}_2$ onto $\policy^\baseline_2$ by setting
        $\beta := \max \{ \beta' \mid \policy^{\exceptionPoint_1}_2 - \beta' \policy^\beta_2 \geq 0 \}$ (where the inequality $\geq$ holds pointwise).
    We then set
        $\tilde \policy^\deviate_2 := \policy^{\exceptionPoint_1}_2 - \beta \policy^\baseline_2$
        and define $\policy^\deviate_2 := \tilde \policy^\deviate_2 / \lVert \tilde \policy^\deviate_2 \rVert$.
    Finally, by setting $\slope := \beta / \exceptionPoint_1$,
        we get the desired ``slope'' for which
        $(1 - \slope \simcost) \cdot \policy^\baseline_2 + \slope \simcost \cdot \policy^\deviate_2 = \policy^\simcost_2$.
        (This holds because the functions on both sides of this equation are linear and they coincide at $\simcost = 0$ and $\simcost = \exceptionPoint_1$.)

\smallskip

As a side-product of the previous paragraph, we already have (i).
To prove the lemma, it remains to prove that $\policy^\baseline$ satisfies (B1-2) and $\policy^\deviate_2$ satisfies (D1-3).

\medskip

\textbf{(ii)}
\textbf{(B1):}
    By Lemma~\ref{lem:VoI}, $\VoI(\policy^\baseline_2)$ must be $0$.
    This implies that any action in the support of $\policy_1$ must be a best-response to any action from the support of $\policy^\baseline_2$ --- i.e., we have (B1).

\textbf{(B2):}
    Since $\game$ admits no best-response tie-breaking by \Plone{},
    (B1) implies that playing any $\actionOpp$ with $\supp(\policy^\baseline_1) \subseteq \br(\actionOpp)$ is guaranteed to yield the same utility
    \begin{align*}
        \utility_2(\policy_1, \actionOpp)
        & = (1 - \policy_1(\simulate) ) \utility_2(\policy^\baseline_1, \actionOpp) + \policy_1(\simulate) \utility_2(\simulate, \actionOpp) \\
        & = (1 - \policy_1(\simulate) ) \utility_2(\br, \actionOpp) + \policy_1(\simulate) \utility_2(\br, \actionOpp) \\
        & = \utility_2(\br, \actionOpp)
        .
    \end{align*}
As a result, any $\actionOpp$ from the support of $\policy^\baseline$ must satisfy
    \begin{align*}
        \utility_2(\br, \actionOpp) = \max \{ \utility_2(\br, \actionOppAlt) \mid \br(\actionOppAlt) \supseteq \supp(\policy^\baseline_1) \}.
    \end{align*}
    Indeed, if it did not, \Pltwo{} could increase their utility by switching to an action that does satisfy this equality,
        thus contradicting the fact that $\policy^\baseline$ is a NE of $\simgame^0$.
    This shows that $\policy^\baseline$ satisfies (B2).

\medskip

\textbf{(iii):} To prove this part, consider $\policy^\simcost$ for some $\simcost \in (0, \exceptionPoint_1)$.

\textbf{(D1):} Suppose there was a single action of \Plone{} that was a best response to both all actions from $\supp(\policy^\baseline_2)$ and all actions from $\supp(\policy^\deviate_2)$.
    Then \Plone{} could gain utility by unilaterally switching to that action (since $\policy_1(\simulate) > 0$ and $\simcost > 0$).
    This in particular implies that no action from $\supp(\policy^\baseline_1)$ can have this property.

\smallskip

\textbf{(D2):}
    First, suppose some $\deviation \in \action_2$ satisfied
        both $\utility_2(\policy^\baseline_1, \deviation) > \utility_2(\policy^\baseline)$
        and $\utility_2(\br, \deviation) > \utility_2(\policy^\baseline)$.
        Then \Pltwo{} could gain utility by unilaterally deviating to $\deviation$, contradicting the fact that $\policy^\simcost$ is a Nash equilibrium.
    The same would be true if some $\deviation$ satisfied these formulas with $=$ and $>$ or with $>$ and $=$.
    Conversely, any action of \Pltwo{} that
        satisfies these formulas with $<$ and $<$, $<$ and $=$, or $=$ and $<$
        is dominated (against $\policy_1$) by playing $\policy^\baseline_2$
        and therefore cannot be played in equilibrium.

\smallskip

\textbf{(D3):}
    Denote by $\likesSimulation$, resp. $\likesBaseline$ the sets of actions that satisfy $\likesSimLabel$, resp. $\likesBaselineLabel$;
        these are the actions for which \Pltwo{} would \textbf{L}ike \Plone{} to \textbf{S}imulate, resp. would \textbf{L}ike them to play their \textbf{B}aseline strategy.
    
    First, we will consider some $\deviation \in \likesBaseline$
        and determine the simulation probability $\simProb$ that would make \Plone{} indifferent between $\policy^\baseline$ and $\deviation$.
        To determine $\simProb$, we first observe that
            $
                \utility_2(\policy_1, \policy^\baseline_2)
                    = \utility_2(\policy^\baseline_1, \policy^\baseline_2)
                    = \utility_2(\policy^\baseline)
            $
            and
            $
                \utility_2(\policy_1, \deviation)
                    = (1 - \simProb) \utility_2(\policy^\baseline_1, \deviation) + \simProb \utility_2(\br, \deviation)
            $.
        This yields
            \begin{align*}
                & \utility_2(\policy_1, \policy^\baseline_2) = \utility_2(\policy_1, \deviation) \\
                & \iff
                    \utility_2(\policy^\baseline)
                    = (1 - \simProb) \utility_2(\policy^\baseline_1, \deviation) + \simProb \utility_2(\br, \deviation) \\
                & \iff
                    \utility_2(\policy^\baseline)
                    = \utility_2(\policy^\baseline_1, \deviation)
                        - \simProb \left(
                            \utility_2(\policy^\baseline_1, \deviation) - \utility_2(\br, \deviation)
                        \right) \\
                & \iff
                    \simProb
                    = \frac{
                        \utility_2(\policy^\baseline_1, \deviation) - \utility_2(\policy^\baseline)
                    }{
                        \utility_2(\policy^\baseline_1, \deviation) - \utility_2(\br, \deviation)
                    } \\
                & \iff
                    \simProb
                    = \frac{
                        \utility_2(\policy^\baseline_1, \deviation) - \utility_2(\policy^\baseline)
                    }{
                        \left( \utility_2(\policy^\baseline_1, \deviation) - \utility_2(\policy^\baseline) \right)
                        + \left( \utility_2(\policy^\baseline) - \utility_2(\br, \deviation) \right)
                    }
                .
            \end{align*}
        Denote the right-hand side of the last line as
        \begin{align}\label{eq:sim-prob-upper-bound}
                \simProbUpper
                := 
                \frac{
                    \utility_2(\policy^\baseline_1, \deviation) - \utility_2(\policy^\baseline)
                }{
                    \left( \utility_2(\policy^\baseline_1, \deviation) - \utility_2(\policy^\baseline) \right)
                    + \left( \utility_2(\policy^\baseline) - \utility_2(\br, \deviation) \right)
                }
            .
        \end{align}
        Clearly, $\simProbUpper$ is a strictly increasing function of the deviation attractiveness ratio
        \begin{align*}
            \deviationRatio_\deviation
            = \frac{
                \utility_2(\policy^\baseline_1, \deviation) - \utility_2(\policy^\baseline)
            }{
                \utility_2(\policy^\baseline) - \utility_2(\br, \deviation
            }
            .
        \end{align*}
            (Intuitively, $\deviationRatio_\deviation$ captures the tradeoffs \Pltwo{} faces when deviating and hoping they will not be caught by the simulator.)
        Since $\deviation$ is of the type that causes \Pltwo{} to prefer $\policy^\baseline_1$ over simulation, this implies that \Pltwo{} would
            deviate to $\deviation$ for $\policy_1(\simulate) < \simProbUpper$,
            be indifferent for $\policy_1(\simulate) = \simProbUpper$,
            and switch to $\deviate$ for $\policy_1(\simulate) > \simProbUpper$.
        Finally, denote $\simProbMax := \max \{ \simProbUpper \mid \deviation \in \likesBaseline \}$.
    
    Considering the same equation for $\deviation \in \likesSimulation$, we get
        \begin{align}
              & \utility_2(\policy_1, \policy^\baseline_2) = \utility_2(\policy_1, \deviation)
                \iff \dots
                \iff \simProb = \simProbLower, \textnormal{ where}
                \nonumber \\
                % & \iff
                %     \utility_2(\policy^\baseline)
                %     = (1 - \simProb) \utility_2(\policy^\baseline_1, \deviation) + \simProb \utility_2(\br, \deviation) \\
                % & \iff
                %     \utility_2(\policy^\baseline)
                %     = \utility_2(\policy^\baseline_1, \deviation)
                %         + \simProb \left(
                %             \utility_2(\br, \deviation) - \utility_2(\policy^\baseline_1, \deviation)
                %         \right) \\
                % & \iff
                %     \frac{
                %         \utility_2(\policy^\baseline) - \utility_2(\policy^\baseline_1, \deviation)
                %     }{
                %         \utility_2(\policy^\baseline_1, \deviation) - \utility_2(\br, \deviation)
                %     }
                %     = \simProb \\
                & \simProbLower
                    := \frac{
                        \utility_2(\policy^\baseline) - \utility_2(\policy^\baseline_1, \deviation)
                    }{
                        \left( \utility_2(\policy^\baseline) - \utility_2(\policy^\baseline_1, \deviation) \right)
                        + \left( \utility_2(\br, \deviation) - \utility_2(\policy^\baseline) \right)
                    }
                    \label{eq:sim-prob-lower-bound}
                .
        \end{align}
        Clearly $\simProbLower$
             is a strictly decreasing function of inverse ratio
             \begin{align*}
                \deviationRatioInverse_\deviation
                = \frac{
                    \utility_2(\br, \deviation - \utility_2(\policy^\baseline))
                }{
                    \utility_2(\policy^\baseline) - \utility_2(\policy^\baseline_1, \deviation)
                }
                .
             \end{align*}
            (In contrast to $\deviationRatio_\deviation$, this ratio captures the tradeoffs \Pltwo{} faces when deviating and hoping they \textit{will} be caught by the simulator.)
        Finally, denote $\simProbMin := \min \{ \simProbLower \mid \deviation \in \likesSimulation \}$.

\smallskip

    Using these calculations, we are not only able to conclude the proof, but we have in fact also determined the values of $\policy_1(\simulate)$ that are compatible with $\policy^\baseline$:
        If $\policy_1(\simulate)$ was strictly lower than $\simProbMax$,
            \Pltwo{} would deviate towards some $\deviation \in \likesBaseline$ for which $\simProbUpper > \policy_1(\simulate)$.
        If it was strictly higher than $\simProbMin$,
            \Pltwo{} would deviate towards some $\deviation \in \likesSimulation$ for which $\simProbUpper < \policy_1(\simulate)$.
        (In particular, we must have $\simProbMax \leq \simProbMin$ --- otherwise, $\policy^\baseline$ could not be a limit equilibrium in $\simgame$.)
        This shows that if there is some action that satisfies $\indiffLabel$, $\policy_1(\simulate)$ can take any value from $[\simProbMax, \simProbMin]$.
        For $\supp(\policy^\deviate_2)$ to contain some action $\deviation \in \likesBaseline$,
            $\policy_1(\simulate)$ must be equal to $\simProbMax$ and $\deviation$ must satisfy $\simProbUpper = \simProbMax$.
            (Which gives the ``$>$'' part of (D3).)
        And analogously, for $\supp(\policy^\deviate_2)$ to contain some action $\deviation \in \likesSimulation$,
            $\policy_1(\simulate)$ must be equal to $\simProbMin$ and $\deviation$ must satisfy $\simProbLower = \simProbMin$.
            (Which gives the ``$<$'' part of (D3).)    
    This concludes the whole proof.
\end{proof}

\supportSizeTwo*

\begin{proof}
First, we observe several implications of the assumption that $\game$ has generic payoffs.
\begin{itemize}
    \item[(0)] In a generic game, no two payoffs are the same.
    \item[(1)] For every action $\actionOpp$ of \Plone{},
        \Plone{} only has a single best response.
        With a slight abuse of notation, we denote this action as $\br(\actionOpp)$.
    \item[(1')] The same applies for \Pltwo{}.
    \item[(2)] By (1), there is a unique action of \Pltwo{} for which
        \begin{align}
            \utility_2(\br(\actionOpp), \actionOpp)
            = \max_{\actionOppAlt \in \actions_2}
                \utility_2(\br(\actionOppAlt), \actionOppAlt)
            .
            \label{eq:opt-commitment}
        \end{align}
    \item[(3)] Any two distinct actions $\deviation_1$, $\deviation_2$ of \Pltwo{}
        must also have distinct attractiveness ratios $\deviationRatio_\deviation$, $\deviationRatio_{\deviation'}$ from (D3) of \Cref{lem:limit-NE-structure}.
        (Indeed, if the payoffs of $\game$ are i.i.d. samples from the uniform distribution over $[0, 1]$, the probability two of these ratios coinciding is $0$.)
    \item[(3')] From (3), it further follows that the variables $\simProbUpper$ and $\simProbLower$,
        defined in equations \eqref{eq:sim-prob-upper-bound} and \eqref{eq:sim-prob-lower-bound},
        will also differ for different actions.
\end{itemize}

We now proceed with the proof by separately considering the cases $\policy_1(\simulate) = 1$, $\policy_1(\simulate) = 0$, and $\policy_1(\simulate \in (0, 1)$.

\smallskip

$\policy_1(\simulate) = 1$:
    If \Plone{} simulated with probability $1$, \Pltwo{} could respond be playing actions that satisfy \eqref{eq:opt-commitment}.
    By (2), there is only one such action; call it $\actionOpp$.
    However, this would mean that \Plone{} could gain additional $\simcost$ utility by switching from $\simulate$ to $\br(\actionOpp)$,
        contradiction the assumption that $\policy$ is an equilibrium.
    As a result, a generic game with cheap simulation will never have an equilibrium where \Plone{} simulates with probability $1$.

\smallskip

$\policy_1(\simulate) = 0$:
    Suppose that $\policy$ is a NE of $\simgame^{\simcost_0}$ for some $\simcost_0 \in (0, \exceptionPoint_1)$.
    We will show that $\policy$ must be pure.
    
    Let $\simcost \in [0, \exceptionPoint_1] \mapsto \policy^\simcost_2$ be some linear function (given by \Cref{prop:piecewise_constant_linear})
        for which $(\policy_1, \policy^\simcost_2) \in \NE(\simgame^\simcost)$ holds for every $\simcost$.
    Since the $\simulate$ is not in the support of $\policy_1$, $\VoI(\policy^0_2)$ must be equal to $0$
        (otherwise \Plone{} could gain by deviating to $\simulate$ for $\simcost = 0$, and $(\policy_1, \policy^0_2)$ would not be a NE of $\simgame^0$).
    This means that there must exist some $\action \in \actions_1$ that is a best-response to every $\actionOpp$ from $\supp (\policy^0_2)$.
        However, recall that (1) implies that \Plone{} only has a single best response for every action of \Pltwo{}.
        As a result, $\action$ is the \textit{only} action in the support of $\policy_1$.
    By (1'), this means that for every $\simcost \in [0, \exceptionPoint_1]$ --- and for $\simcost_0$ in particular --- $\policy^\simcost$ must also be pure.

\smallskip

$\policy_1(\simulate) \in (0, 1)$:
    Let $\policy$ be a NE of $\simgame^{\simcost_0}$ for some $\simcost_0 \in (0, \exceptionPoint_1)$.
    By \Cref{lem:limit-NE-structure}, $\policy$ can be expressed as a convex combination of
        some baseline policy $\policy^\baseline_1$ and simulation (for \Plone{}),
        resp. of $\policy^\baseline_2$ and some deviation policy $\policy^\deviate_2$ (for \Pltwo{}).
    Combining the condition (B1-2) from \Cref{lem:limit-NE-structure} with (1) and (2), we get that $\policy^\baseline$ must be pure.

    Let $\deviation$ be some element of $\supp(\policy^\deviate_2)$ and consider the three cases listed in (D2).
        If $\deviation$ satisfies $\indiffLabel$, (0) implies that it must be equal to $\policy^\baseline_2$,
            and thus not count against the size of $\supp(\policy_2)$.
            Moreover, to avoid contradicting (D1), $\supp(\policy_2)$ must also contain some other action that does not satisfy $\indiffLabel$.
        If $\deviation$ satisfies $\likesBaselineLabel$ or $\likesSimLabel$,
            the probability $\policy_1(\simulate)$ must be equal to $\simProbUpper$, resp. $\simProbLower$.
            (We observed this in the last paragraph of the proof of \Cref{lem:limit-NE-structure}.)
        By (3'), it is impossible for this to be true for two different actions $\deviation' \neq \deviation$ at the same time.
    Together with $\policy^\baseline$ being pure, this shows that $| \supp(\policy_2) | = 2$ and concludes the proof.
\end{proof}

\journal{
\newcommand{\harmonicMean}{H}
\newcommand{\indexSubset}{I}
\newcommand{\payoffA}{x}
\newcommand{\payoffB}{y}
\newcommand{\payoffsA}{\vec \payoffA}
\newcommand{\payoffsB}{\vec \payoffB}
\newcommand{\payoffSubsetA}{\payoffsA_\indexSubset}
\newcommand{\payoffSubsetB}{\payoffsB_\indexSubset}
\newcommand{\prodA}{\mathbf{\payoffA}}
\newcommand{\prodB}{\mathbf{\payoffB}}

\begin{figure*}[!tb]
    \centering
    \includegraphics[width=0.24\textwidth]{figures/cafes_in_paris.pdf}
    \caption{
        Cafés in Paris: Alice and Bob want to meet but they need to coordinate on which café to go to.
        We assume that $\payoffA_i, \payoffB_i > 0$ for every $i$.
        The actual game has $n \in \N$ actions.
    }
\end{figure*}

\exponentialBreakpoints*

\begin{proof}
Let
    $n \in \N$ and
    let $\game$ be the Cafés in Paris coordination game with payoffs $\payoffsA = (\payoffA_1, \dots, \payoffA_n)$, $\payoffsB = (\payoffB_1, \dots, \payoffB_n)$, $\payoffA, \payoffB > 0$
        (\Cref{fig:cafes-in-paris}).
We will show that for every $\indexSubset \subseteq \{1, \dots, n\}$,
    $\game$ has a Nash equilibrium with $\VoI(\policy_2) = (1 - \nicefrac{1}{|\indexSubset|}) \harmonicMean(\payoffSubsetA)$,
    where $\harmonicMean( \symbolPlaceholder )$ denotes the harmonic mean
        \begin{align*}
            \harmonicMean(a_1, \dots, a_k)
            :=
            \left(
            \frac{\frac{1}{a_1} + \dots + \frac{1}{a_k}}{k}
            \right)^{-1}
        \end{align*}
        and $\payoffSubsetA := \left( \payoffA_i \mid i \in \indexSubset \right)$.
Once we have proved this result, the conclusion immediately follows from \Cref{prop:VoI-and-specific-NE},
    since for a generic choice of $\payoffsA$, different subsets $\indexSubset$ yield different values $(1 - \nicefrac{1}{|\indexSubset|}) \harmonicMean(\payoffSubsetA)$.

\smallskip

Let $\indexSubset \subseteq \{1, \dots, n \}$ and denote
\begin{align*}
    \prodA & := \ \ \prod_{j \in \indexSubset} \payoffA_j
    & \prodB & := \ \ \prod_{j \in \indexSubset} \payoffB_j
    .
\end{align*}
First, note that the following formula defines a Nash equilibrium with $\supp(\policy_1) = \supp(\policy_2) = \indexSubset$:
\begin{align*}
    \prodA_{-i} & := \prod_{j \in \indexSubset, \, j\neq i} \payoffA_j
    & \prodB_{-i} & := \prod_{j \in \indexSubset, \, j\neq i} \payoffB_j
    \\
    \policy_1(i) & \ \propto \ \ \prodB_{-i}
    & \policy_2(i) & \ \propto \ \ \prodA_{-i}
    .
\end{align*}
Indeed, this formula ensures that \Plone{}'s payoff for playing any $i \in \indexSubset$ is $\payoffA_i \prodA_{-i} = \prodA$, and similarly for \Pltwo{}.
Calculating the expected utility reveals that this strategy yields
    \begin{align*}
        \utility_1(\policy)
        & = \sum_{i \in \indexSubset}
                \sum_{j \in \indexSubset}
                    \policy_1(i) \policy_2(j) \utility_1(i,j)
        \\
        & = \sum_{i \in \indexSubset}
                    \policy_1(i) \policy_2(i) \utility_1(i,i)
        \\
        & = \sum_{i \in \indexSubset}
                \frac{\prodB_{-i}}{\sum_{i' \in \indexSubset} \prodB_{-i'}}
                \frac{\prodA_{-i}}{\sum_{i' \in \indexSubset} \prodA_{-i'}}
                \payoffA_i
        \\
        & = \sum_{i \in \indexSubset}
                \frac{\prodB_{-i}}{\sum_{i' \in \indexSubset} \prodB_{-i'}}
                \frac{\prodA}{\sum_{i' \in \indexSubset} \prodA_{-i'}}
        \\
        & = \sum_{i \in \indexSubset}
                \frac{1}{\sum_{i' \in \indexSubset} \frac{\prodB_{-i'}}{\prodB_{-i}}}
                \frac{1}{\sum_{i' \in \indexSubset} \frac{\prodA_{-i'}}{\prodA}}
        \\
        & =
            \sum_{i \in \indexSubset}
                \frac{1}{
                    \frac{|\indexSubset|}{|\indexSubset|}
                    \sum_{i' \in \indexSubset} \frac{\payoffB_i}{\payoffB_{-i'}}
                }
            \frac{1}{
                \frac{|\indexSubset|}{|\indexSubset|}
                \sum_{i' \in \indexSubset} \frac{1}{\payoffA_{i'}}
            }
        \\
        & =
            \frac{1}{|\indexSubset|}
            \frac{1}{
                \frac{1}{|\indexSubset|}
                \sum_{i' \in \indexSubset} \frac{1}{\payoffA_{i'}}
            }
            \sum_{i \in \indexSubset}
                \frac{1}{
                    \payoffB_i
                    \frac{|\indexSubset|}{|\indexSubset|}
                    \sum_{i' \in \indexSubset} \frac{1}{\payoffB_{-i'}}
                }
        \\
        & =
            \frac{1}{|\indexSubset|} \harmonicMean(\payoffSubsetA)
            \frac{1}{
                \frac{1}{|\indexSubset|}
                \sum_{i' \in \indexSubset} \frac{1}{\payoffB_{-i'}}
            }
            \frac{1}{|\indexSubset|}
            \sum_{i \in \indexSubset}
                \frac{1}{\payoffB_i}
        \\
        & =
            \frac{1}{|\indexSubset|} \harmonicMean(\payoffSubsetA)
            \harmonicMean(\payoffSubsetB)
            \harmonicMean(\payoffSubsetB)^{-1}
        \\
        & = \frac{1}{|\indexSubset|} \harmonicMean(\payoffSubsetA)
        .
    \end{align*}
However, if \Plone{} was able to best-respond after seeing \Plone{}'s choice of action, they would instead get
    \begin{align*}
        \sum_{i \in \indexSubset}
            \policy_2(i) \utility_1(\br,i)
        & = \sum_{i \in \indexSubset}
                \policy_2(i) \utility_1(i,i)
        % \\
        = \sum_{i \in \indexSubset}
                \frac{\prodA_{-i}}{\sum_{i' \in \indexSubset} \prodA_{-i'}}
                \payoffA_i
        \\
        & = \sum_{i \in \indexSubset}
                \frac{\prodA}{\sum_{i' \in \indexSubset} \prodA_{-i'}}
        % \\
        = | \indexSubset |
                \frac{1}{\sum_{i' \in \indexSubset} \frac{\prodA_{-i'}}{\prodA}}
        \\
        & = \frac{1}{
                \frac{1}{| \indexSubset |}
                \sum_{i' \in \indexSubset} \frac{\prodA_{-i'}}{\prodA}
            }
        % \\
        = \harmonicMean(\payoffSubsetA)
        .
    \end{align*}
By definition of $\VoI$ (\Cref{def:VoI}), this shows that
    $
        \VoI(\policy)
        =
        \harmonicMean(\payoffSubsetA)
        (1 - \nicefrac{1}{|\indexSubset|})
    $.
This concludes the proof
\end{proof}
}

\GsimEquallyHard*
\begin{proof}
This trivially follows from the assumption that simulation games are modelled as the original normal-form game $\game$ with the added simulate action $\simulate$.
\end{proof}

\complexityForExtremeSimcost*
\begin{proof}
(i): First, suppose that $\game$ is a game with no best-response tie-breaking (i.e., \Plone{}'s choice of best response never affects \Pltwo{}'s utility).
By Proposition~\ref{prop:NE-for-extreme-simcost}(iii), simulation strongly dominates all other actions when $\simcost < 0$.
Consequently, 
    all that is needed to solve $\simgame^\simcost$ is
    for \Pltwo{} to search through $\action_2$ for the action $\actionOpp$ with the highest best-response value $\utility_2(\br, \actionOpp) = \utility_2(\simulate, \actionOpp)$.
    As a result, the complexity of solving $\simgame^\simcost$ is dominated by the complexity of determining the best-response utilities corresponding the simulate action (which is $\bigO(|\actions|)$).

If $\game$ allows best-response tie-breaking for \Plone{}, the complexity might be higher because \Plone{} could have multiple ways of responding after simulation.
However, for the purpose of the paper, we were assuming that this policy (for how to respond after simulation) is fixed. As a result, the argument from the previous paragraph applies to this case as well.

(ii): By Proposition~\ref{prop:NE-for-extreme-simcost}(i), $\simulate$ will never be played (in a NE) for high enough $\simcost$.
    As a result, solving $\simgame$ becomes equivalent to solving $\game$.
\end{proof}

\genericGames*

\begin{proof}
By \Cref{thm:generic-games-support-size}, all NE of $\simgame^\simcost$ are either pure or have $|\supp(\policy_1)| = | \supp(\policy_2)| = 2$.
    (This straightforwardly implies that we could find all NE of $\simgame^\simcost$ in $\bigO(|\actions|^2)$ time, by trying all possible supports of size one and two.
    The purpose of the theorem is, therefore, to show that the task can even be done in linear time.)

First, note that
    since $\game$ has generic payoffs,
    $\simgame^\simcost$ doesn't have any equilibria with $\policy_1(\simulate) = 1$
    and all of its equilibria with $\policy_1(\simulate) = 0$ are pure.
    (This is not hard to see directly. For a detailed argument, see the proof of \Cref{thm:generic-games-support-size}.)
As a result, these two cases can be handled in $\bigO(|\actions|)$ time.\footnotemark
    \footnotetext{
        Recall that to find all pure NE of an NFG in linear time, we can:
            First, find all best-responses of \Plone{} to every action of \Pltwo{}.
            Then find all best-responses of \Pltwo{} to every action of \Plone{}.
            And finally use these findings to identify all joint actions that form a mutual best response;
            these coincide with all pure NE.
    }

Second, consider the case when $\policy \in \NE(\simgame^\simcost)$ satisfies $\policy(\simulate) \in (0, 1)$.
    From \Cref{thm:generic-games-support-size}, we know that
        \Plone{} will be mixing between some $\action$ and $\simulate$ and
        \Pltwo{} will be mixing between some $\actionOpp$ and $\deviation$,
        where $(\action, \actionOpp)$ is the baseline strategy satisfying (B1-2) from \Cref{lem:limit-NE-structure}
        and $\deviation$ is the deviation strategy satisfying (D1-3) from \Cref{lem:limit-NE-structure}.
    If a triplet $(\action, \actionOpp, \deviation)$ satisfies (B1-2) and (D1-3), we will call it ``suitable''.

To find all NE of $\simgame^\simcost$, we can use the following procedure:
    (1a) For each $\action \in \actions_1$, find the (unique) best response of \Pltwo{}.
    (1b) For each $\actionOpp \in \actions_2$, find the (unique) best response of \Plone{}.
    (2) Find all suitable triplets $(\action, \actionOpp, \deviation)$.
    (3) For every suitable triplet from (2), find the unique NE with $\supp(\policy_1) = \{ \action, \simulate \}$ and $\supp (\policy_2) = \{ \actionOpp, \deviation \}$,
        or learn that no such NE exists.
        (The uniqueness follows from (D1) and (D2).)
To prove that the steps (1-3) can be performed in $\bigO(|\actions|)$ time, we will use the following claims:
    (I) Each of the steps (1a) and (1b) can be performed in $\bigO(|\actions|)$.
    (II) There are at most $2 \cdot \min \{ |\actions_1|, |\actions_2| \}$ suitable,
        and it is possible to find all of them in $\bigO(|\actions|)$ time.
    (III) Performing (3) for a single suitable triplet takes $\bigO(|\actions_1| + |\actions_2|)$ time.

Clearly, the combination of (I), (II), and (II) yields the conclusion of the theorem.
    Moreover,
        (I) is elementary and
        (III) follows from the fact that performing (2) only requires
            solving the 2-by-2 game with actions $\{\action, \simulate\} \times \{\actionOpp, \deviation \}$
            and checking that none of the remaining actions is a profitable deviation.    
To prove the theorem, it remains to prove the claim (II).
We do this in two steps:
    First, we show that there are at most $\min \{ |\actions_1|, |\actions_2| \}$ pairs $(\action, \actionOpp)$ that might be a part of some suitable triplet $(\action, \actionOpp, \deviation)$.
    Second, we show that for any pair $(\action, \actionOpp)$, there are at most two actions for which the triplet $(\action, \actionOpp, \deviation)$ is suitable.
        
For the first step, note that
    for every $\actionOpp$, the only pair $(\action, \actionOpp)$ that might satisfy the condition (B1) is $(\br(\actionOpp), \actionOpp)$.
    Therefore, there are at most $|\actions_2|$ pairs $(\action, \actionOpp)$ that might be a part of some suitable triplet $(\action, \actionOpp, \deviation)$.
    Moreover, for every $\action$, there will only be a single $\actionOpp$ that satisfies the condition (B2)
        (i.e., the condition that $\actionOpp$ maximizes $\utility_2(\action, \actionOppAlt)$ among the actions $\actionOppAlt$ for which $\action \in \br(\actionOppAlt)$).
        Therefore, there are at most $|\actions_1|$ pairs $(\action, \actionOpp)$ that might be a part of some suitable triplet $(\action, \actionOpp, \deviation)$.
    Combining the two bounds shows that the number of pairs that might be a part of a suitable triplet is $\min \{ |\actions_1|, |\actions_2| \}$.

For the second step,
    note that in a generic game, the only action that satisfies $\indiffLabel$ for $\policy^\baseline = (\action, \actionOpp)$ is $\actionOpp$.
    Moreover, the genericity of $\game$ implies that
        either the set of actions satisfying $\likesBaselineLabel$ is empty,
        or there is exactly one action $\deviation$ that satisfies $\likesBaselineLabel$ and maximizes the attractiveness $\deviationRatio_\deviation$ ratio from (D3).
    Analogously, there will be at most one action $\deviation$ that satisfies $\likesSimLabel$ and maximizes the inverse attractiveness ratio $\deviationRatioInverse_\deviation$ from (D3).
    This shows that for any $(\action, \actionOpp)$ from the previous step, there are at most two actions for which the triplet $(\action, \actionOpp, \deviation)$ is suitable.
    Since this proves completes the proof of (II), we have concluded the whole proof.
\end{proof}

\journal{
\newcommand{\deviationProb}{\alpha \simcost}
\newcommand{\gameModif}{\widetilde \game}
\newcommand{\actionsModif}{\widetilde \actions}
\newcommand{\simgameModif}{\widetilde \game_\simSubscript}

\simgameHardness*

\begin{proof}
Let $\game$ be a NFG.
We will assume that $\game$ has utilities in $[0, 1]$ --- otherwise, we can rescale the utilities without changing the sets of NE.
Let $\gameModif$ be as in \Cref{fig:trust-game-hardness} and fix some $\simcost \in (0, \nicefrac{1}{3})$.
We will show that a strategy $\policy$ in $\gameModif$ with $\policy_1(\simulate) > 0$ is a NE of $\simgameModif^\simcost$
    if and only if
        it has \Plone{} mixing between simulating and $(\trust, \policy^\game_1)$
        and \Pltwo{} mixing between cooperating and $(\defect, \policy^\game_2)$,
        where $\policy^\game \in \NE(\game)$
        and $\policy_1(\simulate) = TODO$,
        $\policy_2(\defect) = TODO$.
    ($\simgameModif^\simcost$ also has ``non-simulation'' equilibria where \Plone{} always walks out and \Pltwo{} defects with high probability. These will not be relevant for our proof.)

\smallskip

Before proceeding with the proof,
    observe that for any $\policy^\game$ in $\game$, we have
    \begin{align*}
        & \utility_2(\trust, \policy^\game_1, \defect, \policy^\game_2) > \utility_2(\trust, \cooperate) \\
        & \utility_1(\trust, \policy^\game_1, \defect, \policy^\game_2) < \utility_1(\walkOut)
        .
    \end{align*}
As a result,
    if \Pltwo{} attempts to defect while \Plone{} simulates, \Plone{} will best-respond by walking out,
    which results in the worst possible utility for \Pltwo{}.
This shows that for any $\policy^\game$ in $\game$, we have
    \begin{align*}
        & \utility_1(\simulate, \cooperate) < \utility_1(\trust, \cooperate)
            & \textnormal{\& }
            & \utility_1(\simulate, \defect) > \utility_1(\trust, \policy^\game_1, \defect, \policy^\game_2)
        \\
        & \utility_2(\simulate, \cooperate) > \utility_1(\simulate, \defect, \policy^\game_2)
            & \textnormal{\& }
            & \utility_2(\trust, \cooperate) < \utility_2(\trust, \policy^\game_1, \defect, \policy^\game_2)
        .
    \end{align*}
    (Indeed,
        the first inequality holds because $\trust$ is already a best-response to $\cooperate$ and $\simcost > 0$.
        The second holds because no matter what strategy is used in $\game$, \Plone{} would be willing to pay $1$ utility to walk out instead of playing $\game$.
        The third holds by our initial observation,
        and finally the fourth holds because no matter what strategy is used in $\game$, \Pltwo{} always prefers playing $\game$ to cooperating.)
These inequalities show that for any fixed $\policy^\game$,
    \Plone{} can make \Pltwo{} indifferent between $\cooperate$ and $(\defect, \policy^\game_2)$ by mixing between $\simulate$ and $(\trust, \policy^\game_1)$,
    and \Pltwo{} can make \Plone{} indifferent between $\simulate$ and $(\trust, \policy^\game_1)$ by mixing between $\cooperate$ and $(\defect, \policy^\game_2)$.
Moreover, the probabilities for which the players are indifferent will be unique
    --- we denote them as $\simProb := \simProb(\policy^\game)$ for \Plone{}
        and $\alpha \simcost := \slope(\policy^\game, \simcost) \cdot \simcost$ for \Pltwo{}.\footnote{
        The exact formulas can be derived as in the proof of \Cref{thm:generealised-TG-no-tiebreaking}.
        However, they only become relevant for the ``moreover'' part of the proposition, so we avoid deriving them for now.
    }

\medskip

To prove the main part of the proposition, we now show that $\NE(\simgameModif^\simcost)$ can be identified with $\NE(\game)$.

\smallskip

Let $\policy$ be a NE of $\simgameModif^\simcost$ for which $\policy_1(\simulate) > 0$.
We will show that $\policy$ is of the form
    \begin{align*}
        \policy_1 & = (1 - \simProb) \cdot (\trust, \policy^\game_1) + \simProb \cdot \simulate \\
        \policy_2 & = (1 - \alpha \simcost) \cdot \cooperate + \alpha \simcost \cdot (\defect, \policy^\game_2)
        ,
    \end{align*}
    where
        $\policy^\game$ is a NE of $\game$
        and $\simProb = \simProb(\policy^\game)$, $\alpha = \alpha(\policy^\game, \simcost)$.
(1) First, note that \Pltwo{} must play both $\cooperate$ and $\defect$.
    Indeed, if \Pltwo{} only cooperated, \Plone{} would strictly prefer $\trust$ over $\simulate$.
    Conversely, if \Pltwo{} only defected, \Plone{} would strictly prefer $\walkOut$ over $\simulate$.
(2) Second, we show that \Plone{} is indifferent between $\simulate$ and $\trust$ and strictly prefers both of these actions over $\walkOut$.
    Indeed, if $\policy_1$ was such that $\policy_1(\trust) = 0$ (and $\policy_1(\simulate) > 0)$, \Pltwo{} would strictly prefer $\cooperate$ over $\defect$, contradicting (1).
    If \Plone{} was indifferent between all three actions,
        $\policy$ would have to satisfy $\utility_1(\simulate, \policy_2) = \utility_1(\trust, \policy^\game_1, \policy_2) = \utility_1(\walkOut)$.
        However, solving these equations in particular gives $\simcost = \frac{2-\utility_1^\game(\policy^\game)}{3+\utility_1^\game(\policy^\game)}$, which is impossible for $\simcost < \nicefrac{1}{3}$.
(3) Third, by the initial observation about indifference, we know that $\policy$ must be of the desired form --- as a result, it only remains to show that $\policy^\game$ is a NE of $\game$.
(4) Finally, suppose that $\policy^\game$ was not a Nash equilibrium of $\game$.
    Since $\policy$ is of the desired form, we have
    \begin{align*}
        \utility_1(\policy)
        = & \ \policy_1(\simulate) \utility_1(\simulate, \policy_2)
            + \policy_1(\trust) \policy_2(\cooperate) \cdot 3
            \\
            & + \policy_1(\trust) \policy_2(\defect) \utility^\game_1(\policy^\game)
        .
    \end{align*}
    Since only the third term depends on $\policy^\game$,
        we see that if $\policy^\game_1$ was not a best response to $\policy^\game_2$ in $\game$,
        \Plone{} could unilaterally increase their utility by replacing $(\trust, \policy^\game_1)$ by $(\trust, \br^\game(\policy^\game_2)$.
    Analogously, we have
    \begin{align*}
        \utility_2(\policy)
        = & \ \policy_2(\simulate) \utility_2(\simulate, \policy_2)
            + \policy_1(\trust) \policy_2(\cooperate) \cdot (-1)
            \\
            & + \policy_1(\trust) \policy_2(\defect) \utility^\game_2(\policy^\game)
        ,
    \end{align*}
    where only the third term depends on $\policy^\game$,
    and therefore $\policy^\game_2$ must be a best-response to $\policy^\game_1$ in $\game$.
This shows that $\policy^\game$ must be a NE of $\game$
    and concludes the proof of ``$\NE(\simgameModif^\simcost) \subseteq \NE(\game)$''.

\smallskip

Conversely, let $\policy^\game \in \NE(\game)$.
To prove the ``$\NE(\game) \subseteq \NE(\simgameModif^\simcost)$'' part of the proposition, it suffices to show that
    \begin{align*}
        \policy_1 & := (1 - \simProb(\policy^\game)) \cdot (\trust, \policy^\game_1) + \simProb(\policy^\game) \cdot \simulate \\
        \policy_2 & := (1 - \alpha(\policy^\game, \simcost) \simcost) \cdot \cooperate + \alpha(\policy^\game, \simcost) \simcost \cdot (\defect, \policy^\game_2)
    \end{align*}
    is a NE of $\simgameModif^\simcost$.
    However, the calculations performed so far already imply that this choice of $\policy$
        makes \Plone{} indifferent between $\simulate$ and $\trust$ and \Pltwo{} between $\cooperate$ and $\defect$,
        and gives no player an incentive to replace $\policy^\game_\pl$ by any other policy in $\game$.
    To show that $\policy$ is a NE of $\simgameModif^\simcost$, it remains to show that \Plone{} has no incentive to deviate by playing $\walkOut$ --- in other words, that $\utility_1(\policy) \geq 2$.
        We already know that \Plone{} cannot be indifferent about $\walkOut$ (since this only happens for $\simcost = \frac{2-\utility_1^\game(\policy^\game)}{3+\utility_1^\game(\policy^\game)}$.
        Therefore, since $\simcost < \nicefrac{1}{3}$, it suffices to show that $\alpha(\policy^\game, \simcost)$ does not depend on $\simcost$ (and therefore $\policy_2(\defect) \to 0$ as $\simcost \to 0_+$).
        Since we will obtain this independence of $\alpha$ on $\simcost$ as a by-product of the last step of our proof,
            the proof of the main part of the proposition is finished for now.

\medskip

To conclude the whole proof, it remains to show the ``moreover'' part of the proposition.
We will do this by deriving the exact formulas for $\simProb(\policy^\game)$ and $\alpha(\policy^\game, \simcost)$.
(This part of the proof is essentially identical to the calculations from the proof of \Cref{thm:generealised-TG-no-tiebreaking}.)

\noindent
For $\policy_1(\simulate) = \simProb(\policy^\game) = \simProb$, we have
     \begin{align*}
        & \utility_2(\policy_1, \cooperate) = \utility_2(\policy_1, \defect, \policy^\game_2) \\
        & \iff
            -1
            = (1 - \simProb) \utility_2(\trust, \policy^\game_1, \defect, \policy^\game_2) + \simProb \utility_2(\simulate, \defect, \policy^\game_2) \\
        & \iff
            -1 
            = \utility^\game_2(\policy^\game)
                - \simProb \left(
                    \utility^\game_2(\policy^\game) - (-2)
                \right) \\
        & \iff
            \simProb
            = \frac{
                \utility^\game_2(\policy^\game) + 1
            }{
                \utility^\game_2(\policy^\game) + 2
            }
            \\
        & \iff
            \utility^\game_2(\policy^\game)
            = \frac{
                1
            }{
                \simProb + 1
            }
        .
    \end{align*}

\noindent
For $\policy_2(\defect) = \alpha(\policy^\game, \simcost) \simcost = \alpha \simcost$, we have
    \begin{align*}
        & \utility_1(\trust, \policy^\game_1, \policy_2) = \utility_1(\simulate, \policy_2) \\
        & \iff
            (1 - \deviationProb) \utility_1(\trust, \policy^\game_1, \cooperate) + \deviationProb \utility_1(\trust, \policy^\game_1, \defect, \policy^\game_2)
              \\ & \phantom{\iff} \ \ 
                 = (1 - \deviationProb) \utility_1(\trust, \policy^\game_1, \cooperate)
                     + \deviationProb \utility_1(\walkOut, \defect, \policy^\game_2) - \simcost \\
        & \iff
            \deviationProb \utility_1(\trust, \policy^\game_1, \defect, \policy^\game_2)
                = \deviationProb \utility_1(\walkOut, \defect, \policy^\game_2) - \simcost \\
        & \iff
            \alpha \utility^\game_1(\policy^\game_1)
                = \alpha \cdot 2 - 1 \\
        & \iff
            \alpha = \frac{1}{2 - \utility^\game_1(\policy^\game_1)} \\
        & \iff
            \utility^\game_1(\policy^\game_1) = 2 - \frac{1}{\alpha}
        .
    \end{align*}
This concludes the whole proof.
\end{proof}
}

\edit{
\newcommand{\NPproblem}{P}
\newcommand{\x}{x}
\newcommand{\y}{y}
\newcommand{\reduction}{\hat \game}
\newcommand{\preReduction}{\widetilde{\game}}
\newcommand{\modifiedRPS}{\widetilde{\textnormal{RPS}}}

\begin{figure*}[!tb]
    \centering
    \includegraphics[width=0.24\textwidth]{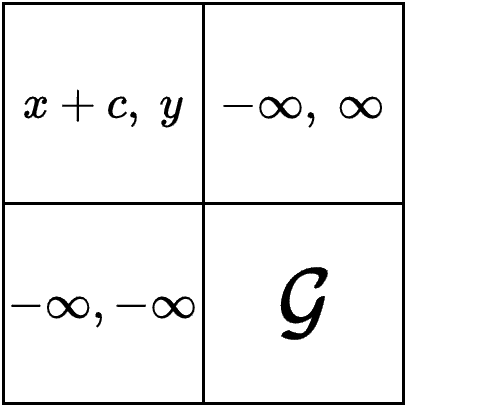}
    \caption{
        The game $\reduction$ which shows that
            determining whether $\game$ has an equilibrium with $\utility_1 \geq \x$
        can be reduced to
            determining whether simulation is guaranteed to lead to a Pareto-improvement.
        \\
        In $\reduction$, the players first simultaneously announce whether they wish to play $\game$ or not,
            with $\game$ being played only if both agree to do so.
        The top-left corner payoff for \Pltwo{}'s is defined as
            $\y := \max_{(\action, \actionOpp) \in \actions^\game} \utility_2(\action, \actionOpp)$,
            which causes voting for playing $\game$ to be a dominant action for \Pltwo{}.
        The top-left corner payoff for \Plone{} is a free parameter of $\reduction$.
    }
    \label{fig:usefulness-is-NP-hard}
\end{figure*}

\usfulnessIsNPhard*

\begin{proof}
First, note that the problem $\NPproblem(\x, \game)$ := ``given an NFG $\game$ and real number $\x$, decide whether $\game$ has a Nash equilibrium with $\utility_1 > \x$'' is known to be NP-hard \cite[Corollary 8]{conitzer2008new}.
To prove the theorem, it therefore suffices,
        for every $\simcost \in \R$,
    to reduce from the complement of $\NPproblem(\x,\game)$ to the problem
        ``given a game $\reduction$, decide whether the game $\reduction_\mathrm{sim}^\simcost$ has a NE with $\policy_1(\simulate) > 0$ that is a Pareto-improvement over every NE of $\reduction$''.
We first consider the case where (A) $\simcost < 0$
    and then (B) extend the solution to $\simcost \geq 0$.

\noindent
A) $\simcost < 0$:
We construct $\reduction$ as follows (\Cref{fig:usefulness-is-NP-hard}).
    First, the players simultaneously announce whether they wish to play $\game$.
    If both agree to play $\game$, they play $\game$.
    If neither agrees,
        \Plone{} receives payoff $\x + \simcost$ and
        \Pltwo{} receives payoff $\y := \max_{(\action, \actionOpp) \in \actions^\game} \utility_2(\action, \actionOpp) + 1$.
        If only \Plone{} agrees, both players receive a very large negative payoff
            (intuitively: $-\infty$;
            formally:
            $
                \utility_\pl
                :=
                \min\{
                    \min_{(\action, \actionOpp) \in \actions}
                        \utility_\pl (\action, \actionOpp),
                    \x + \simcost
                \} - 1
            $).
        If only \Pltwo{} agrees,
            \Plone{} receives a very large negative payoff
            while \Pltwo{} receives a very large positive payoff
            (intuitively: $\infty$;
            formally:
            $
                \utility_2
                :=
                \max_{(\action, \actionOpp) \in \actions}
                    \utility_2 (\action, \actionOpp) + 2 > \y
            $).

First, note that the dominant action in $\reduction$ is for \Pltwo{} to vote for $\game$, causing \Plone{} to do the same.
    Therefore, the equilibria of $\reduction$ can be identified with the equilibria of $\game$, and they yield the same utilities.
Second, note that in $\reduction^\simcost_{\mathrm{sim}}$, \Plone{} will always simulate
    (because $\simcost < 0$ makes this a dominant course of action).
This incentivizes \Pltwo{} to always vote against playing $\game$,
    resulting in utilities
        $\utility_1 = (\x + \simcost) - \simcost = \x$ and
        $\utility_2 = \y$ for \Pltwo{}.
\Pltwo{} will always prefer this over playing $\game$,
which shows that
    introducing simulation to $\reduction$ is guaranteed to lead to a Pareto-improvement
    if and only if $\game$ has no NE with $\utility_1 > \x$.

\noindent
B) $\simcost \geq 0$:
\co{Unimportant, but I don't like using textbf here because it changes the appearance of the mathematical symbold. In principle bold c could be something different from normal c. (Same above.) So maybe one could use underline or so?}
We start with a game $\preReduction$ which is defined identically to $\reduction$ from the case ``$\simcost < 0$'',
    except that when the players play the top and left strategy,
    \co{Wait, what does it mean for both players to simulate? I think this is probably just the payoff of the top left corner?}
    \Plone{} receives utility $\utility_1 := \x - 1$ (instead of $\x+\simcost$).
However, we also consider a second game, $\modifiedRPS$,
    which works as the traditional Rock-Paper-Scissors game with utilities $\{-1, 0, 1\}$,
    except that
        \Plone{}'s payoffs are multiplied by $\simcost + 1$ and
        \Pltwo{}'s payoffs are multiplied by $\nicefrac{1}{2}$.
Finally, game $\reduction$ is the game where the players play the games $\preReduction$ and $\modifiedRPS$ in parallel
    (or one after the other; this does not make a difference).

Because of the inclusion of $\modifiedRPS$, any NE of $\reduction$ will involve \Plone{} simulating with probability $1$.
    (Indeed, simulating causes them to
        lose $\simcost$ utility by paying the simulation cost and
        gaining $\simcost + 1 > \simcost$ utility for being able to always win the $\modifiedRPS$ game.
    Therefore, against any NE strategy of the opponent, \Plone{} will always simulate.)
As before,
    this incentives \Pltwo{} to always vote against playing $\game$,
    which in turn makes \Plone{}'s best response to also vote against playing $\game$.
This results in utilities
    $\utility_1 = (\x - 1) + (\simcost + 1) - \simcost = \x$ and
    \begin{align*}
        \utility_2
        & =
        \y - \nicefrac{1}{2}
        =
        \max_{(\action, \actionOpp) \in \actions^\game}
            \utility_2(\action, \actionOpp)
        + 1
        - \nicefrac{1}{2}
        \\
        & >
        \max_{(\action, \actionOpp) \in \actions^\game} \utility_2(\action, \actionOpp)
        .
    \end{align*}
As in the previous case, this implies that
    introducing simulation to $\reduction$ is guaranteed to lead to a Pareto-improvement
    if and only if $\game$ has no NE with $\utility_1 > \x$.
\end{proof}
}

\edit{
\generalisedTGnoTiebreaking*

\begin{proof}
\newcommand{\gv}{v_2}
\newcommand{\optimalCommitments}{\textnormal{OC}}
\newcommand{\Unif}{\textnormal{Unif}}
\newcommand{\devProb}{q_\simcost}
\newcommand{\ubr}{\textnormal{ubr}}
\newcommand{\utilityAlt}{w}
Let $\game$ be a generalised trust game that does not admit best-response tie-breaking by \Plone{},
    and suppose that $\simcost > 0$ is sufficiently low (to be specified later in the proof).
    We will prove that
        \begin{itemize}
            \item[(A)] either $\simgame^\simcost$ has a NE where $\policy_1(\simulate) = 1$ and \Pltwo{} randomizes over all of their optimal commitments
            \item[(B)] or $\simgame^\simcost$ has a NE where
                \Plone{} mixes between simulating and some ``baseline policy'' that only includes ``universal best responses'' to \Pltwo{}'s optimal commitments and
                \Pltwo{} mixes between (all of) their optimal commitments and some ``deviation policy''
            .
        \end{itemize}
    Note that (B) should be understood in the context of \Cref{lem:limit-NE-structure} --- in particular, the probability of \Pltwo{} deviating will go towards $0$ as $\simcost \to 0_+$.
    Since $\game$ is a generalized trust game,
        the outcome $\utility(\action, \actionOpp)$ corresponding to any optimal commitment $\actionOpp$ and $\action \in \br(\actionOpp)$
        is strict Pareto improvement over any NE of $\game$.
    Moreover (since the set $\NE(\game)$ is compact),
        this further implies that for low enough $\simcost$,
        the same holds for $\utility(\simulate, \actionOpp)$.
    This shows that once we prove that either (A) or (B) holds, we will have proven the conclusion of the whole theorem.

\medskip

\textbf{Notation for (A):}
During the proof, we will use the following notation:
    By $\gv := \max \{ \utility_2(\br, \actionOpp) \mid \actionOpp \in \actions_2 \}$ the \textit{value of the pure-commitment game} given by $\game$.
    By $\optimalCommitments := \{ \actionOpp \in \actions_2 \mid \utility_2(\br, \actionOpp) = \gv \}$, we denote the set of \textit{optimal commitments} of \Pltwo{}.
    For any set of actions $\actionSetOpp \subset \actions_2$, $\ubr(\actionSetOpp)$ will denote the set $\bigcap_{\actionOpp \in \actionSetOpp} \br(\actionOpp)$ of \Plone{}'s actions that work as a \textit{universal best response} to $\actionSetOpp$.
    For a finite set $X$, $\Unif(X)$ denotes the \textit{uniformly random distribution} over $X$

\textbf{Proof of (A):}
If there is no universal best-response for all optimal commitments
    (i.e., if $\ubr(\optimalCommitments) = \emptyset$),
    we define $\policy_1 := \simulate$ and $\policy_2 := \Unif(\optimalCommitments)$.
To see that $\policy$ is a Nash equilibrium, note that
    \Pltwo{} does not have any profitable deviation
    (since \Plone{} is simulating with probability $1$ and all actions from $\optimalCommitments$ give the maximum utility against $\simulate$).
Similarly, no single action of \Plone{} works as a best-response against every $\actionOpp \in \supp(\policy_2)$,
    so we have $\VoI(\Unif(\optimalCommitments)) > 0$.
    By \Cref{lem:VoIandSim}, this means that for $\simcost \leq \VoI(\Unif(\optimalCommitments))$,
        \Plone{} does not have any profitable way of deviating from $\policy$.
This shows that $\policy$ is a NE satisfying (A), and concludes the proof of this case.

\medskip

\textbf{Notation for (B):}
    For any $\policy_1 \in \policies_1$ and $\deviation \in \actions_2 \setminus \optimalCommitments$,
        we use $\deviationRatio_\deviation(\policy_1)$ to denote the \textit{attractiveness ratio} from \Cref{lem:limit-NE-structure}
        \begin{align*}
            \deviationRatio_\deviation(\policy_1)
            = \frac{
                \utility_2(\policy_1, \deviation) - \gv
            }{
                \gv - \utility_2(\br, \deviation)        
            }
        .
        \end{align*}
    We will also consider the following auxiliary game:
    \begin{align*}
        \gameAlt & := (\actions', \utilityAlt), \textnormal{ where } \\
        \actions'_1 & := \ubr(\optimalCommitments) \\
        \actions'_2 & := \actions_2 \setminus \optimalCommitments \\
        \utilityAlt_1(\action, \deviation) & := - \left( \utility_1(\br, \deviation) - \utility_1(\action, \deviation) \right) \\
        \utilityAlt_2(\action, \deviation) & :=
            \deviationRatio_\deviation(\action)
        .
    \end{align*}
    Intuitively,
        $\gameAlt$ is a game where \Pltwo{} aims to equalize \Plone{}'s regrets for playing various actions instead of simulating
        and \Plone{} aims to equalize the attractiveness ratios of \Pltwo{}'s deviations.
    
\smallskip

\textbf{Proof of (B):}
Let $\rho$ be some NE of $\gameAlt$
    and denote $\policy^\baseline_1 := \rho_1$, $\policy^\baseline_2 := \Unif(\optimalCommitments)$, $\policy^\deviate_2 := \rho_2$.
First, note that to prove (B),
    it suffices to show that for sufficiently low $\simcost$, $\simgame^\simcost$ has a NE of the form
    \begin{align*}
        \policy_1 & := (1 - \simProb) \cdot \policy^\baseline_1 + \simProb \cdot \simulate \\
        \policy_2 & := (1 - \alpha \simcost) \cdot \policy^\baseline_2 + \alpha \simcost \cdot \policy^\deviate_2
        ,
    \end{align*}
        where $\simProb \in (0, 1)$ and $\alpha > 0$ do not depend on $\simcost$.
Second, note that to prove the $\simgame^\simcost$ has a NE of this form, it suffices to prove the following claims:
\begin{enumerate}[label={(\roman*)}]
    \item \Plone{} can make \Pltwo{} indifferent between $\policy^\baseline_2$ and $\policy^\deviate_2$ by mixing between $\policy^\baseline_1$ and $\simulate$.
    \item For low enough $\simcost > 0$, \Pltwo{} can make \Plone{} indifferent between $\policy^\baseline_1$ and $\simulate$ by mixing between $\policy^\baseline_2$ and $\policy^\deviate_2$.
    \item The mixing probabilities from (i) and (ii) are uniquely determined by $(\policy^\baseline_1, \policy^\deviate_2)$ and $\simcost$,
        \Plone{}'s probability in fact does not depend on $\simcost$,
        and \Pltwo{}'s probability depends on $\simcost$ as some $\alpha \simcost$, $\alpha > 0$.
    \item For low enough $\simcost$, \Plone{} has no incentive to deviate from the strategy determined by (i-iii).
    \item \Pltwo{} has no incentive to deviate from the strategy determined by (i-iii).
\end{enumerate}
In the remainder of the proof, we focus on showing that (i-v) holds.

Proof of (i):
    To show (i), it suffices to prove that
        \begin{align*}
            \utility_2(\simulate, \policy^\baseline_2) > \utility_2(\simulate, \policy^\deviate_2)
            \ \textnormal{ \& } \ 
            \utility_2(\policy^\baseline_1, \policy^\deviate_2) > \utility_2(\policy^\baseline_1, \policy^\baseline_2)
            .
        \end{align*}
    To see the first inequality, recall first that $\supp(\policy^\baseline_2) = \optimalCommitments$,
        so $\utility_2(\simulate, \policy^\baseline_2) = \gv$.
        Conversely, $\supp(\policy^\deviate_2) \subseteq \actions_2 \setminus \optimalCommitments$,
            so $\utility_2(\simulate, \policy^\deviate_2) < \gv$.
    To see the second inequality,
        recall that $\game$ is a generalised trust game,
            so $\policy^\baseline$ cannot be a NE of $\game$
            (otherwise $\utility_2(\policy^\baseline) = \gv$ could not be a strict improvement to \Pltwo{}'s utility).
        As a result, \Pltwo{} must have \textit{some} $\deviation$ for which $\utility_2(\policy^\baseline_1, \deviation) > \utility_2(\policy^\baseline)$
            (since $\policy^\baseline_1$ is a best response to $\policy^\baseline_2$).
        By definition of $\utilityAlt_2$ and the fact that $(\policy^\baseline_1, \policy^\deviate_2)$, this implies that the same holds for $\policy^\deviate_2$.

Proof of (ii):
    To show (ii), it suffices to prove that for sufficiently low $\simcost > 0$, we have
        \begin{align*}
            \utility_1(\simulate, \policy^\baseline_2) < \utility_1(\policy^\baseline_1, \policy^\baseline_2)
            \ \textnormal{ \& } \ 
            \utility_1(\simulate, \policy^\deviate_2) > \utility_1(\policy^\baseline_1, \policy^\deviate_2)
            .
        \end{align*}
    The first inequality holds because
        every $\action \in \supp(\policy^\baseline_2) \subseteq \ubr(\optimalCommitments)$ is already a best response
        to every $\actionOpp \in \supp(\policy^\baseline_2) = \optimalCommitments$,
        so $\utility_1(\simulate, \policy^\baseline_2) = \utility_1(\policy^\baseline_1, \policy^\baseline_2) - \simcost < \utility_1(\policy^\baseline)$.
    To prove the second inequality,
        first note that any $\deviation \in \supp(\policy^\deviate_2)$ must have $\utility_2(\policy^\baseline_1, \deviation) > \gv$.
            (Indeed, this holds because $\policy^\baseline$ cannot be a NE of $\game$
                (otherwise $\game$ wouldn't be a generalized trust game),
                so one player must have \textit{some} profitable deviation
                --- and it cannot be \Plone{}, since $\policy^\baseline_1$ is already a best response to $\policy^\baseline_2$.
            As a result,
                if some $\deviation' \in \supp(\policy^\deviate_2)$ had $\utility_2(\policy^\baseline_1, \deviation') \leq \gv$,
                \Pltwo{} could unilaterally increase their utility in $\gameAlt$ by deviating to the action with $\utility_2(\policy^\baseline_1, \deviation) > \gv$,
                and $(\policy^\baseline_1, \policy^\deviate_2)$ would not be a NE of $\gameAlt$.)
        This shows that for any $\deviation \in \supp(\policy^\deviate_2)$, $\policy^\baseline_1$ cannot be a best-response to $\deviation$.
            (Indeed, suppose that $\policy^\baseline_1$ was a best-response to $\deviation$.
                This lead to a contradiction, since we would then have both
                    $
                        \utility_2(\br, \deviation)
                        \utility_2(\policy^\baseline_1, \deviation)
                        >
                        \gv
                    $
                    --- because $\game$ admits no best-response utility tiebreaking by \Plone{} ---
                    and
                    $
                        \utility_2(\br, \deviation)
                        < \gv
                    $
                    --- since $\deviation \notin \optimalCommitments$.)
        Consequently, the following quantity $\eta$ is necessarily strictly positive:
            \begin{align*}
                \eta
                :=
                \min \{
                    \utility_1(\br, \deviation) - \utility_1(\policy^\baseline_1, \deviation)
                    \mid
                    \deviation \in \supp(\policy^\deviate_2)
                \}
                .
            \end{align*}
        Finally, by applying the definition of $\eta$, we get the desired inequality for $\simcost < \eta$:
        \begin{align*}
            & \utility_1(\simulate, \policy^\deviate_2) =
                \\
            & =
                \! \! \! \! \sum_{\deviation \in \supp(\policy^\deviate_2)} \! \! \! \! 
                    \policy^\deviate_2(\deviation)
                    \utility_1(\br, \deviation)
                    - \simcost
                \\
            & \geq
                \! \! \! \! \sum_{\deviation \in \supp(\policy^\deviate_2)} \! \! \! \! 
                    \policy^\deviate_2(\deviation) \left(
                        \utility_1(\br, \deviation)
                        - \utility_1(\policy^\baseline_1, \deviation)
                        + \utility_1(\policy^\baseline_1, \deviation)
                        - \simcost
                    \right)
                \\
            & \geq
                \! \! \! \! \sum_{\deviation \in \supp(\policy^\deviate_2)} \! \! \! \! 
                    \policy^\deviate_2(\deviation) \left(
                        \utility_1(\policy^\baseline_1, \deviation)
                        + \eta
                        - \simcost
                    \right)
            \\
            & >
                \! \! \! \! \sum_{\deviation \in \supp(\policy^\deviate_2)} \! \! \! \! 
                    \policy^\deviate_2(\deviation) \utility_1(\policy^\baseline_1, \deviation)
            = \utility_1(\policy^\baseline_1, \policy^\deviate_2)
            .
        \end{align*}
        This concludes the proof of (ii).

Proof of (iii):
    The uniqueness holds because the inequalities in (i) and (ii) are strict.
    The fact that $\policy_1(\simulate)$ does not depend on $\simcost$ follows from the fact that \Pltwo{}'s utility does not depend on it.
    The fact that the probability mass that $\policy_2$ assigns to $\supp(\policy^\deviate_2)$ is of the form $\alpha \simcost$,
        where $\alpha > 0$ does not depend on $\simcost$,
        can be verified by solving the equation
        \begin{align*}
            & \utility_1(\policy^\baseline_1, (1 - \alpha \simcost) \cdot \policy^\baseline_2 + \alpha \simcost \cdot \policy^\deviate_2) \\
            & =
            \utility_1(\simulate, (1 - \alpha \simcost) \cdot \policy^\baseline_2 + \alpha \simcost \cdot \policy^\deviate_2)
            .
        \end{align*}
        Since
            this calculation is exactly the same as the analogous calculation in the proof of \Cref{lem:limit-NE-structure},
            we skip it.

Proof of (iv):
    By (iii), we have
        $\lim_{\simcost \to 0_+} \policy_2 = \Unif(\optimalCommitments)$,
        which implies that
        \begin{align*}
            \utility_1(\policy_1, \policy_2)
            =
                \utility_1(\simulate, \policy_2)
            \overset{\simcost \to 0_+}{\longrightarrow}
                \sum_{\actionOpp \in \optimalCommitments}
                    \frac{\utility_1(\br, \actionOpp)}{|\optimalCommitments|}
            .
        \end{align*}
    As a result, for low enough $\simcost$, any $\action \in \actions_1$ that aims to have
        $
            \utility_1(\action, \policy_2)
            \geq
            \utility_1(\policy_1, \policy_2)
        $
        must be a best response to each of the strategies $\actionOpp \in \optimalCommitments$.
    Since all such actions are contained in $\actions'_1 = \ubr(\optimalCommitments)$,
        it suffices --- to prove (iv) --- to check that \Plone{} has no incentive to deviate towards some $\action \in \ubr(\optimalCommitments)$.
    To prove this, note that for $\action \in \ubr(\optimalCommitments)$, we have
        \begin{align*}
            \utility_1(\action, \policy_2)
            & = (1 - \alpha \simcost) \utility_1(\action, \policy^\baseline_2)
                + \alpha \simcost \utility_1(\action, \policy^\deviate_2)
                \\
            & = (1 - \alpha \simcost) \utility_1(\br, \policy^\baseline_2)
                + \alpha \simcost \sum_{\deviation \in \actions_2}
                        \policy^\deviate_2(\deviation)
                        \utility_1(\action, \policy^\deviate_2)
                \\
            & = (1 - \alpha \simcost) \utility_1(\br, \policy^\baseline_2)
                + \alpha \simcost \utility_1(\br, \policy^\deviate_2)
                \\
                & \ \ \ +
                    \alpha \simcost \sum_{\deviation \in \actions_2} \policy^\deviate_2(\deviation)
                    \left( 
                        \utility_1(\action, \policy^\deviate_2)
                        -
                        \utility_1(\br, \policy^\deviate_2)
                    \right)
                \\
            & = (1 - \alpha \simcost) \utility_1(\br, \policy^\baseline_2)
                + \alpha \simcost \utility_1(\br, \policy^\deviate_2)
                \\
                & \ \ \ +
                    \alpha \simcost \sum_{\deviation \in \actions_2} \policy^\deviate_2(\deviation)
                    \left[
                    - \left( 
                        \utility_1(\br, \policy^\deviate_2)
                        -
                        \utility_1(\action, \policy^\deviate_2)
                    \right)
                    \right]
                \\
            & = (1 - \alpha \simcost) \utility_1(\br, \policy^\baseline_2)
                + \alpha \simcost \utility_1(\br, \policy^\deviate_2)
                \\
                & \ \ \ +
                    \alpha \simcost \utilityAlt_2(\action, \policy^\deviate_2)
            .
        \end{align*}
        However,
            in this formula, only the term $\utilityAlt_2(\action, \policy^\deviate_2)$ depends on the choice of $\action$
            --- and since $(\policy^\baseline_1, \policy^\deviate_2)$ is a NE of $\gameAlt$,
            there can be no $\action$ for which this term would be strictly higher than for the actions from $\supp(\policy^\baseline_1)$.
        This concludes the proof of (iv).

Proof of (v):
    Applying the definition of $\utilityAlt_2$,
        we see the utility of any $\deviation \in \actions_2 \setminus \optimalCommitments$ in $\gameAlt$
        is equal to $\deviation$'s ``attractiveness ratio'' (from \Cref{lem:limit-NE-structure}):
    \begin{align*}
        \utilityAlt_2(\policy^\baseline_1, \deviation)
        & =
            \sum_{\action \in \actions_1}
                \policy^\baseline_1(\action)
                \utilityAlt_2(\action, \deviation)
        \\
        & =
            \sum_{\action \in \actions_1}
                \policy^\baseline_1(\action)
                \frac{
                    \utility_2(\action, \deviation) - \gv
                }{
                    \gv - \utility_2(\br, \deviation)        
                }
        \\
        & =
            \frac{
                \utility_2(\policy^\baseline_1, \deviation) - \gv
            }{
                \gv - \utility_2(\br, \deviation)        
            }
        .
    \end{align*}
    Using the terminology from \Cref{lem:limit-NE-structure}, $\game$ admits no profitable deviation that satisfies $\indiffLabel$ or $\likesSimLabel$
        --- indeed, this is because
            \Pltwo{}'s baseline only consists of optimal commitments (which rules out $\likesSimLabel$) and
            \Plone{}'s baseline consists of best-responses to any action that satisfies $\indiffLabel$
                (so any profitable deviation satisfying $\indiffLabel$ would contradict (D1)).
    Therefore,
        as a result of \Cref{lem:limit-NE-structure} and the choice of $\policy_1(\simulate)$ in (i),
        any profitable deviation of \Pltwo{} would need to have a strictly higher attractiveness ratio than the actions from the support of $\policy^\deviate_2$.
    However,
        the existence of such action would contradict the fact that
        $(\policy^\baseline_1, \policy^\deviate_2)$ is a Nash equilibrium of $\gameAlt$.
This concludes the proof of (v),
    and therefore the proof of the whole theorem as well.
\end{proof}
}

\simInZeroSumGames*

\begin{proof}
For any $\policy \in \NE(\simgame^\simcost)$, we have $\utility_1(\policy) \geq \max_{\rho_1 \in \policies_1(\game)} \utility_1(\rho_1, \policy_2) \geq v$,
which proves the inequality for \Plone{}.

To prove the inequality for \Pltwo{}, note that we can write $\policy_1$ as $\lambda \tilde \policy_1 + (1-\lambda) \simulate$,
    where $\tilde \policy_1$ is a best-response (in $\game$) to $\policy_2$.
Since $\game$ is zero-sum, we have
    $
        \utility_2(\simulate, \policy_2)
        =
        \sum_{\actionOpp}
            \min_{\action} \utility_2(\action, \actionOpp)
        \leq \min_{\action \in \actions_1} \utility_2(\action, \policy_2)
        = \utility_2(\br, \policy_2)
        = \utility_2(\tilde \policy_1, \policy_2)
        \leq - v
        .
    $
\end{proof}

\newcommand{\gtnUtil}{150}
\newcommand{\gtnDiff}{300}       % gtnUtil * 2
\newcommand{\projectUtil}{100}
\newcommand{\defectUtilA}{200}    % projectUtil * 2
\newcommand{\defectUtilB}{$\epsilon$}
\newcommand{\prisonUtilA}{999}
\newcommand{\numberOfGuesses}{N}
\newcommand{\simcostSpecific}{101}

\begin{figure}[!tb]
    \centering
    \includegraphics[width=0.45\textwidth]{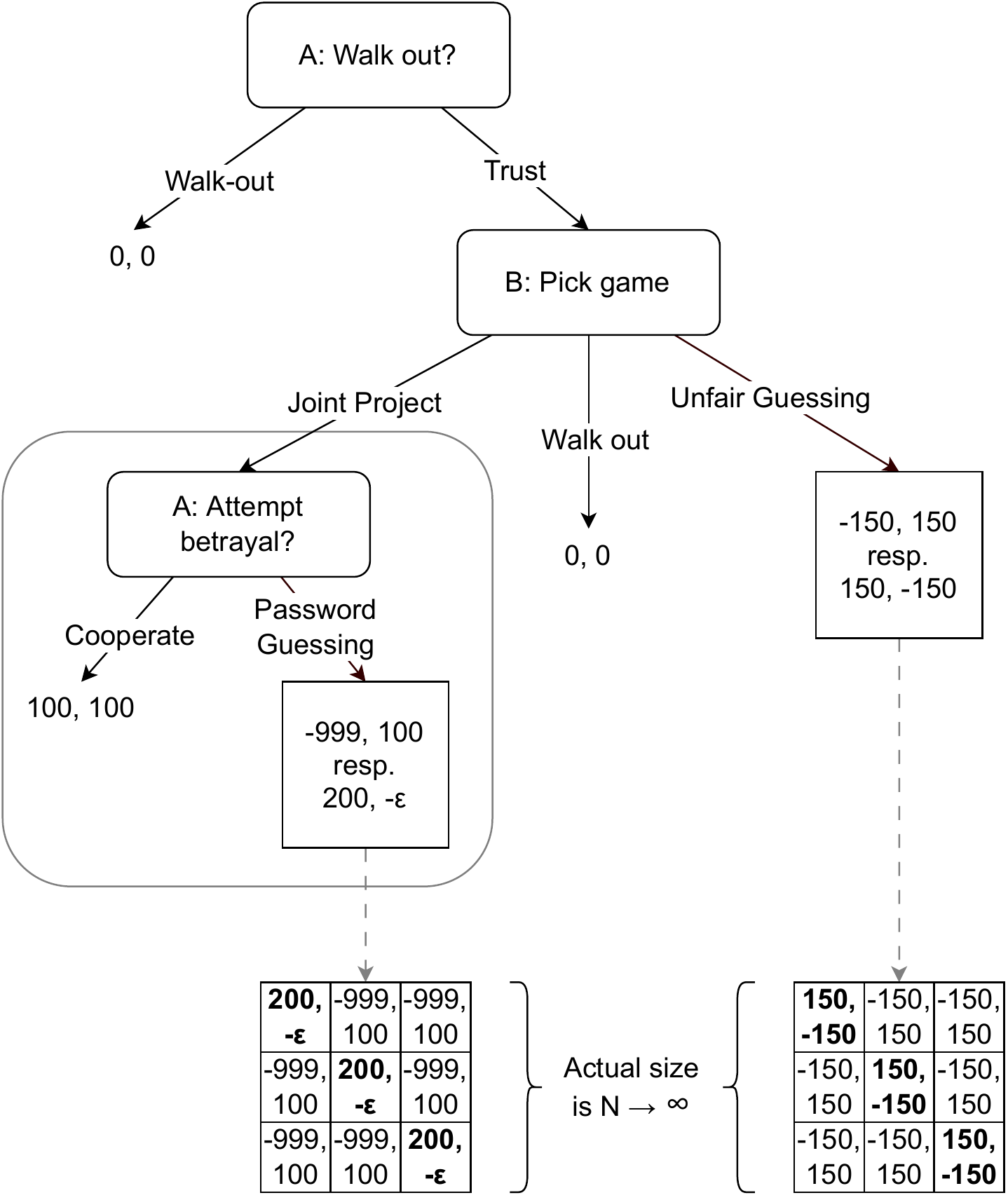}
    \caption{
        A game where both players prefer simulation to be neither cheap nor prohibitively costly.
    }
    \label{fig:nontrivial-simcost}
\end{figure}

\begin{example}[Optimal simulation cost is non-trivial]\label{ex:nontrivial-simcost}
Consider the game $\game$ depicted in \Cref{fig:nontrivial-simcost}.
    First, both Alice and Bob have an option to walk out and not play ($\utility_\Alice = \utility_\Bob = 0$).
    If Alice chooses to trust Bob and play, Bob gets to decide which game to play.
    One option is the Unfair Guessing game (Example~\ref{ex:cheap-exp}), where Alice needs to guess an integer that Bob is thinking, else she ends up transferring $\gtnUtil$ utility to Bob. If she guesses correctly, Bob transfers $\gtnUtil$ to her instead.
        (This game is parametrized by $\numberOfGuesses$, the highest integer that Bob is allowed to pick.
        Since the game is biased in Bob's favor, the corresponding expected utilities converge to $\utility_\Alice = - \gtnUtil, \utility_\Bob = \gtnUtil$ as $\numberOfGuesses \to \infty$.
        Since the precise numbers are not important for our conclusions, we will, for the purpose of this example, treat them as exactly equal to $\pm \gtnUtil$.)
    The other option available to Bob is to play the Joint Project game (\Cref{ex:exp-exp}).
        In this game, Alice can either Cooperate with Bob ($\utility_\Alice = \utility_\Bob = \projectUtil$)
        or attempt to betray him by guessing his password and stealing all his profits.
        A successful betrayal results in utilities $\utility_\Alice = \defectUtilA$, $\utility_\Bob$,
            while an unsuccessful one sends Alice to prison ($\utility_\Alice = -\prisonUtilA$, $\utility_\Bob = \projectUtil$).
        (This game is also parametrized by $\numberOfGuesses$, such that when Bob picks his password uniformly at random, the outcomes converge to $\utility_\Alice = -\prisonUtilA$, $\utility_\Bob = \projectUtil$. To simplify the notation, we treat them as equal to these numbers.)

We first discuss how the game works before simulation enters the picture.
    In this simulation, Alice would prefer to Bob to pick the Joint Project game and she would cooperate if Bob did pick this game.
    However, Bob would rather play the Unfair Guessing game, which is virtually guaranteed to make him better off.
    Realizing this, Alice decides to walks out instead, and the only equilibrium outcome is $\utility_1 = \utility_2 = 0$.

Conversely, if simulation is free,
    the Unfair Guessing game becomes unfavourable to Bob --- he would get $\utility_\Bob = - \gtnUtil$ if he picked it.
    However, simulating would also allow Alice to betray Bob in the Joint Project game, making him worse off than if he didn't play at all.
    As a result, when $\simcost$ is equal to $0$, Bob always walks out and the only equilibrium outcome is $\utility_1 = \utility_2 = 0$, as before.

However, consider the case where $\simcost = \simcostSpecific$,
    such that simulation is
        cheap enough to be justified by the fear of the Unfair Guessing game,
        but expensive enough to not be justified by the greed in the Joint Project game.
    Intuitively, we might hope that this will cause Bob to pick the Joint Project game, in which Alice will cooperate --- and this is mostly what actually ends up happening.
    The only wrinkle is that Alice needs to make the decision about simulation before knowing Bob's choice of game
        --- and if she never simulates, Bob would switch to always selecting the Unfair Guessing game instead.
        As a result, the actual equilibrium (given below) will have Bob sometimes deviating towards Unfair Guessing and Alice sometimes simulating.
            As a by-product, this will sometimes lead to Alice betraying Bob in the Joint Project Game (when she simulates and he doesn't deviate).
        However, even with these drawbacks, the resulting outcome is still much better than the default $\utility_\Alice = \utility_\Bob = 0$.
    Indeed, it is not hard to verify that $\simgame^{\simcostSpecific}$ has an equilibrium where
        Alice simulates with probability $(\gtnUtil - \projectUtil) / (\gtnDiff - \projectUtil) = \nicefrac{1}{4} $ and trusts Bob otherwise,
        while Bob picks the Unfair Guessing game with probability $(\simcost - \projectUtil) / (\projectUtil + \gtnDiff) = \nicefrac{1}{400}$ and selects the Joint Project game otherwise,
        and the resulting utilities are $\utility_\Alice \approx \projectUtil $, $\utility_\Bob = (1 - \nicefrac{1}{4}) \projectUtil = 75$.
    (Note that -- somewhat counterintuitively -- making the Unfair Guessing game riskier for Bob would make the overall outcome \textit{better} for him,
        because Alice would not need to simulate with so high probability to disincentivize deviation.)
        
By adjusting the payoffs in this game, we can obtain examples where the players prefer various values of $\simcost$ that
    are strictly higher than $0$, yet induce equilibria where simulation happens with non-zero probability.
\end{example}
\section{Proof of Proposition~\ref{prop:piecewise_constant_linear}: Linear Adjustment of P1's Payoffs Have Constant/Linear Effects on NE}\label{sec:app:linearity}

\newcommand{\adjust}{\alpha}
\newcommand{\adjustVec}{\vec \adjust}
\newcommand{\costScale}{\simcost}
\newcommand{\adjustGame}{\game_{\adjustVec}}
\newcommand{\Vertices}{\NE^{\textnormal{ext}}}
\newcommand{\constant}{\gamma}
\newcommand{\bfs}{\beta}
\newcommand{\LPforNESS}{the LP from Lemma~\ref{lemma:NEasLPsolutions}}
\newcommand{\convexWeight}{\lambda}

In the main text, we used the following result:

\piecewiseLinear*

\begin{proof}
With the exception of the claim about $\exceptionPoint_0$ being equal to $0$,
    the result is an immediate corollary the more general \Cref{lem:piecewise-linear-general-weaker} listed below.
The claim about $0$ being a breakpoint and about the non-existence of any breakpoints in $(-\infty, 0)$ immediately follows from \Cref{prop:NE-for-extreme-simcost}\,(i).
\end{proof}

In this section, we prove a more general version of \Cref{prop:piecewise_constant_linear} using the following setting:

\begin{definition}[Auxiliary]
A game with \defword{linearly adjustable payoffs}
    is any pair $(\game, \adjustVec)$ where
        $\game = (\actions, \tilde \utility)$ is a two-player normal-form game and
        $\adjustVec = (\adjust_\action)_{\action \in \actions_1} \in {\actions_1}$ is a vector of \defword{adjustments} for \Plone{}'s actions.
For a \defword{cost-scaling} factor $\costScale \in \R$,
    $\adjustGame^\costScale$ denotes the NFG with
        actions $\actions$ and
        utilities
            $\utility_2 := \tilde \utility_2$,
            $\utility_1(\action, \actionOpp) := \tilde \utility_1(\action, \actionOpp) - \costScale \adjust_\action$.
\end{definition}

\noindent
The connection between this notion and our setting is that
any simulation game $\simgame^\simcost$ can be expressed as
    $\simgame^\simcost = (\game')_{(0, \dots, 0, 1)}^\simcost$,
    where $\game'$ is the original game $\game$
    with one additional \Plone{} action $\simulate$
    that yields utilities $\tilde \utility_\pl(\simulate, \actionOpp) := \utility_\pl(\action, \actionOpp)$
        (where we fix some $\action \in \br(\actionOpp)$ for every $\actionOpp \in \actions_2$).

The piecewise constant/linear phenomenon that we observed on the motivating example of simulation in Trust Game (Figure~\ref{fig:piecewise-linearity}) in fact holds more generally --- for every game with linearly adjustable payoffs.
The goal of this section is to build up to the proof of the following result, which immediately gives our desired result -- \Cref{prop:piecewise_constant_linear} as a corollary:

\begin{lemma}[Games with linearly adjustable payoffs have piecewise constant/linear NE trajectories]
    \label{lem:piecewise-linear-general-weaker}
For every $\game$, there is a finite set of breakpoint values
    $
        -\infty = \exceptionPoint_{-1}
            < \exceptionPoint_0 < \dots < \exceptionPoint_k
            < \exceptionPoint_{k+1} = \infty
    $
    such that the following holds:
    For every $\costScale_0 \in (\exceptionPoint_l, \exceptionPoint_{l+1})$
        and every $\policy^{\costScale_0} \in \NE(\adjustGame^{\costScale_0})$,
    there is a linear mapping
        $\NEtrajectory_2 : \costScale \in [\exceptionPoint_l, \exceptionPoint_{l+1}] \mapsto \policy^\costScale_2 \in \policies_2$
    such that
        $\NEtrajectory_2(\costScale_0) = \policy^{\costScale_0}_2$
        and $(\policy^{\costScale_0}_1, \NEtrajectory_2(\costScale)) \in \NE(\adjustGame^{\costScale})$ for every $\costScale \in [\exceptionPoint_l, \exceptionPoint_{l+1}]$.
\end{lemma}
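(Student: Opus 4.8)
The plan is to exploit the special structure of $\adjustGame^\costScale$: the payoffs of \Pltwo{} do not depend on $\costScale$ at all, while \Plone{}'s payoff for an action $\action$ is shifted by the column-independent amount $-\costScale\adjust_\action$. Consequently the Nash conditions decouple. I would first fix an equilibrium $\policy^{\costScale_0} \in \NE(\adjustGame^{\costScale_0})$ and set $S_1 := \supp(\policy^{\costScale_0}_1)$ and $R_2 := \br(\policy^{\costScale_0}_1)$, the set of best responses of \Pltwo{} to $\policy^{\costScale_0}_1$ (computed in the fixed matrix $\tilde\utility_2$, hence independent of $\costScale$). The aim is to freeze \Plone{}'s strategy at $\policy^{\costScale_0}_1$ and move only \Pltwo{}'s strategy. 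The point is that a profile $(\policy^{\costScale_0}_1, \policy_2)$ is a NE of $\adjustGame^\costScale$ precisely when (ii) $\supp(\policy_2) \subseteq R_2$, and (i) every $\action \in S_1$ maximises $\tilde\utility_1(\action, \policy_2) - \costScale\adjust_\action$ over $\actions_1$. Condition (ii) is a fixed, $\costScale$-free linear constraint on $\policy_2$, whereas (i) is a system of equalities (indifference over $S_1$) and inequalities ($S_1$-actions weakly beat the rest) whose coefficients are affine in $\costScale$; crucially, because $-\costScale\adjust_\action$ is independent of \Pltwo{}'s action, the $\costScale$-terms factor out of the expectation over $\policy_2$ and can be pushed into the right-hand side of an otherwise $\costScale$-free constraint matrix.

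So, for the frozen $\policy^{\costScale_0}_1$, the admissible $\policy_2$ form a polytope $F^\costScale \subseteq \Delta(\actions_2)$ cut out by constraints whose data is affine in $\costScale$, with $\policy^{\costScale_0}_2 \in F^{\costScale_0}$. Here I would invoke the parametric-linear-programming viewpoint underlying \LPforNESS{}: a value of $\costScale$ is declared a breakpoint exactly when the combinatorial type of this parametric system changes, i.e.\ when some regular basis $\basis \in \regularBases$ passes between feasible and infeasible (equivalently, when the relevant support/best-response structure changes). Standard parametric-LP theory then yields finitely many breakpoints, and on each open interval $(\exceptionPoint_l, \exceptionPoint_{l+1})$ the set of feasible regular bases is constant; the basic solution $\vertex_\basis(\costScale)$ determined by $\basis$ solves a linear system whose right-hand side is affine in $\costScale$, hence is itself an affine function of $\costScale$, and remains a vertex of $F^\costScale$ throughout the closed interval $[\exceptionPoint_l, \exceptionPoint_{l+1}]$.

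The trick that simultaneously handles arbitrary (non-vertex) equilibria and the multiplicity of equilibria is to propagate convex weights rather than a single point. I would write $\policy^{\costScale_0}_2 = \sum_j \convexWeight_j\, \vertex_{\basis_j}(\costScale_0)$ as a convex combination of the vertices of the bounded polytope $F^{\costScale_0}$, and then define
\[
    \NEtrajectory_2(\costScale) := \sum_j \convexWeight_j\, \vertex_{\basis_j}(\costScale), \qquad \costScale \in [\exceptionPoint_l, \exceptionPoint_{l+1}],
\]
with the same weights $\convexWeight_j$. Since each $\vertex_{\basis_j}(\cdot)$ is affine and the $\convexWeight_j$ are constant, $\NEtrajectory_2$ is affine (the claimed ``linear mapping'') and satisfies $\NEtrajectory_2(\costScale_0) = \policy^{\costScale_0}_2$; since each $\vertex_{\basis_j}(\costScale) \in F^\costScale$ and $F^\costScale$ is convex, $\NEtrajectory_2(\costScale) \in F^\costScale$, i.e.\ $(\policy^{\costScale_0}_1, \NEtrajectory_2(\costScale)) \in \NE(\adjustGame^\costScale)$ for every $\costScale$ in the interval. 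Taking the finite union of the breakpoint sets produced by all relevant support pairs yields the single finite sequence $\exceptionPoint_{-1} < \dots < \exceptionPoint_{k+1}$.

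The main obstacle I expect is the parametric-LP bookkeeping of the second paragraph: making precise that the best-response polyhedra $F^\costScale$ (equivalently, the LP of \LPforNESS{}) have vertices that continue as affine functions of $\costScale$ with only finitely many breakpoints, and that ``between breakpoints'' is characterised by the constancy of the set of feasible regular bases. This needs care because the parameter a priori enters both the equalities and the inequalities, so I would first arrange that $\costScale$ perturbs only the right-hand side of a fixed constraint matrix (possible precisely because $-\costScale\adjust_\action$ does not depend on \Pltwo{}'s action), after which the classical statement that basic feasible solutions move affinely in a right-hand-side parameter applies verbatim. The decoupling observation, that freezing $\policy_1$ is legitimate because \Pltwo{}'s payoffs are $\costScale$-free, is exactly what reduces the two-sided Nash fixed point to this one-sided parametric LP in the first place.
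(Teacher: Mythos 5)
Your proposal is correct and takes essentially the same approach as the paper's proof: you freeze \Plone{}'s strategy (legitimate because \Pltwo{}'s payoffs do not depend on the cost parameter), reduce to a parametric LP over \Pltwo{}'s strategy whose constraint matrix is parameter-free and whose right-hand side is affine in the parameter, obtain affine vertex trajectories and finitely many breakpoints from the finitely many regular bases, and propagate fixed convex weights along these vertex trajectories, finally taking the union of breakpoint sets over all support pairs. This mirrors, almost step for step, the paper's decomposition into fixed-support LPs, its constancy lemma for \Plone{}'s strategy, its linearity proposition for extremal equilibria, and its concluding convex-combination argument.
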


\subsection{Background: Linear Programming}

Before proceeding with the proof, we recall several results from linear programming.
    (Since these results are standard, they will be given without a proof.
    For a detailed exposition of using LPs for solving normal-form games, see for example \cite{MAS}.)

First, if we can guess the support of a Nash equilibrium, the strategy itself can be found using a linear program:

\begin{observation}[Indifference sets of NE]\label{obs:fixed_support_NE}
Every NE $\policy$ of $\game$ satisfies $\supp(\policy_\pl) \subseteq \br(\policy_\opp)$.
As a result, we can write the set of NE in $\game$ as a (possibly overlapping) union
\begin{align*}
    &
    \NE(\game)
        =
        \bigcup \left\{
            \NE(\game, \actionSubset_1, \actionSubset_2)
            \mid
            \forall \pl :
            \actionSubset_\pl \subseteq \actions_\pl
        \right\},
    \\
    &
    \text{ where }
    \NE(\game, \actionSubset_1, \actionSubset_2)
        :=
        \\
        &
        \phantom{xx}
        \left\{
            \policy \in \NE(\game)
            \mid
            \forall \pl :
            \supp(\policy_\pl) \subseteq \actionSubset_\pl \subseteq \br(\policy_\opp)
        \right\}
        .
\end{align*}
\end{observation}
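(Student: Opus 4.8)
The plan is to first establish the pointwise claim $\supp(\policy_\pl) \subseteq \br(\policy_\opp)$ directly from the definition of a Nash equilibrium, and then to read off the set-theoretic decomposition almost for free.

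For the first part, I would fix a NE $\policy$ and a player $\pl$. By definition $\policy_\pl$ is a best response to $\policy_\opp$, so $\utility_\pl(\policy_\pl, \policy_\opp) = \utility_\pl(\br, \policy_\opp) = \max_{\action \in \actions_\pl} \utility_\pl(\action, \policy_\opp)$. Writing the left-hand side as the convex combination $\sum_{\action \in \supp(\policy_\pl)} \policy_\pl(\action)\, \utility_\pl(\action, \policy_\opp)$, the argument is that every term in the support must attain this maximum: if some $\action \in \supp(\policy_\pl)$ had $\utility_\pl(\action, \policy_\opp) < \utility_\pl(\br, \policy_\opp)$, then shifting its probability mass onto a pure maximizer would strictly raise $\utility_\pl(\cdot\,, \policy_\opp)$, contradicting the optimality of $\policy_\pl$. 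Hence each $\action \in \supp(\policy_\pl)$ is a pure best response to $\policy_\opp$, i.e.\ $\supp(\policy_\pl) \subseteq \br(\policy_\opp)$.

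For the decomposition, the inclusion ``$\supseteq$'' is immediate, since by definition each $\NE(\game, \actionSubset_1, \actionSubset_2)$ is a subset of $\NE(\game)$. For ``$\subseteq$'', given any $\policy \in \NE(\game)$ I would exhibit witnessing supports by taking $\actionSubset_\pl := \br(\policy_\opp)$ for each player $\pl$ (the choice $\actionSubset_\pl := \supp(\policy_\pl)$ serves equally well). The first part then guarantees $\supp(\policy_\pl) \subseteq \actionSubset_\pl \subseteq \br(\policy_\opp)$, so $\policy \in \NE(\game, \actionSubset_1, \actionSubset_2)$ and therefore lies in the union.

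I do not anticipate any genuine obstacle: the statement is essentially a repackaging of the best-response / indifference principle. The only point requiring a little care is that the subsets $\actionSubset_\pl$ chosen in the forward inclusion must simultaneously sandwich between $\supp(\policy_\pl)$ and $\br(\policy_\opp)$, which is exactly what the first part supplies. It is also worth noting that, since each $\actions_\pl$ is finite, the union ranges over only finitely many index pairs $(\actionSubset_1, \actionSubset_2)$; this finiteness is what makes the observation useful, as it lets each piece $\NE(\game, \actionSubset_1, \actionSubset_2)$ be described by a fixed system of (in)equalities and thus solved via the linear program recalled in the subsequent background material.
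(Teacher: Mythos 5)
Your proof is correct. Note that the paper itself gives no proof of this observation: it is stated in the appendix as a standard result (with a pointer to the multiagent-systems literature), and your argument --- the support/indifference characterization of best responses plus the trivial choice of witnesses $\actionSubset_\pl := \br(\policy_\opp)$ for the forward inclusion --- is exactly the standard argument the authors are implicitly relying on.
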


\begin{lemma}[NE as solutions of LP]\label{lemma:NEasLPsolutions}
For any $\actionSubset_1$, $\actionSubset_2$, the elements of $\NE(\game, \actionSubset_1, \actionSubset_2)$ are precisely (the $\policy$-parts of) the solutions of the following linear program (with no maximisation objective).
\begin{align}
    &
    \forall \pl :
        \nonumber
    \\
    &
    \sum\nolimits_{\action_\pl \in \actionSubset_\pl}
        \policy_\pl (\action_\pl)
        =
        1
    \\
    &
    \utility_\pl (\action_\pl, \policy_\opp) = \constant_\pl
        &
        \text{ for } \action_\pl \in \actionSubset_\pl
    \\
    &
    \utility_\pl (\action_\pl, \policy_\opp) \geq \constant_\pl
        &
        \text{ for } \action_\pl \in \actions_\pl \setminus \actionSubset_\pl
    \\
    &
    \phantom{x}\text{ where the variables satisfy }
        \nonumber
    \\
    &
    \policy_\pl( \action_\pl )
        =
        0
        &
        \text{ for } \action_\pl \in \actions_\pl \setminus \actionSubset_\pl
    \\
    &
    \policy_\pl (\action_\pl) \geq 0
        &
        \text{ for } \action_\pl \in \actionSubset_\pl
    \\
    &
    \constant_\pl \in \R
\end{align}
\end{lemma}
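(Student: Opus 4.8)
The plan is to prove the lemma as a direct translation of the support characterization of Nash equilibria from Observation~\ref{obs:fixed_support_NE} into linear (in)equalities, treating the two set inclusions separately. The guiding idea is that the free scalar $\constant_\pl$ is to be identified with player $\pl$'s best-response utility $\utility_\pl(\br, \policy_\opp)$ against the opponent's strategy. A preliminary observation that makes the whole statement meaningful is that, for each fixed action $\action_\pl$, the quantity $\utility_\pl(\action_\pl, \policy_\opp) = \sum_{\action_\opp \in \actions_\opp} \policy_\opp(\action_\opp) \utility_\pl(\action_\pl, \action_\opp)$ is a linear function of the variables $\policy_\opp$; hence every row of the displayed system is linear in the decision variables $(\policy_1, \policy_2, \constant_1, \constant_2)$ and the feasible set is genuinely the solution set of a (feasibility) linear program.

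First I would show $\NE(\game, \actionSubset_1, \actionSubset_2) \subseteq \{\text{LP solutions}\}$. Given $\policy \in \NE(\game, \actionSubset_1, \actionSubset_2)$, set $\constant_\pl := \utility_\pl(\br, \policy_\opp)$. Since $\policy_\pl \in \Delta(\actions_\pl)$ with $\supp(\policy_\pl) \subseteq \actionSubset_\pl$, the normalization $\sum_{\action_\pl \in \actionSubset_\pl} \policy_\pl(\action_\pl) = 1$, the nonnegativity on $\actionSubset_\pl$, and the vanishing of $\policy_\pl$ off $\actionSubset_\pl$ all hold. The defining requirement $\actionSubset_\pl \subseteq \br(\policy_\opp)$ says precisely that each $\action_\pl \in \actionSubset_\pl$ attains the best-response value, yielding the equality rows $\utility_\pl(\action_\pl, \policy_\opp) = \constant_\pl$. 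The remaining rows are nothing but the optimality half of the best-response condition applied to the actions outside $\actionSubset_\pl$: no such action can be a profitable deviation away from the value $\constant_\pl$, which holds because $\policy$ is a Nash equilibrium.

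Conversely, I would take an arbitrary feasible point $(\policy_1, \policy_2, \constant_1, \constant_2)$ and recover a member of $\NE(\game, \actionSubset_1, \actionSubset_2)$. The normalization, nonnegativity, and vanishing-off-$\actionSubset_\pl$ constraints say exactly that each $\policy_\pl$ is a probability distribution supported within $\actionSubset_\pl$. The equality rows make every action of $\actionSubset_\pl$ earn the common value $\constant_\pl$ against $\policy_\opp$, and together with the optimality rows this identifies $\constant_\pl$ as the best-response value, so that $\actionSubset_\pl \subseteq \br(\policy_\opp)$ and $\policy_\pl$ is a best response to $\policy_\opp$. Applying this for both $\pl \in \{1,2\}$ gives $\policy \in \NE(\game)$, and $\supp(\policy_\pl) \subseteq \actionSubset_\pl \subseteq \br(\policy_\opp)$ then places $\policy$ in $\NE(\game, \actionSubset_1, \actionSubset_2)$.

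I expect the only real obstacle to be the careful bookkeeping of the optimality rows in both directions: one must verify that they encode exactly the ``no profitable deviation'' part of the definition of best response, and that in combination with the equality rows they force $\constant_\pl$ to be the optimal payoff rather than merely some value shared across $\actionSubset_\pl$. Once this is checked, the lemma follows by unwinding Observation~\ref{obs:fixed_support_NE} and the definitions of $\br(\symbolPlaceholder)$ and $\Delta(\actions_\pl)$, with the substitution $\constant_\pl = \utility_\pl(\br, \policy_\opp)$ and no further computation.
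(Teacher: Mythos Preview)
The paper does not prove this lemma; it is stated as a standard background result with a reference to \cite{MAS} for details. Your two-inclusion argument is exactly the standard verification, and it is correct.

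One point worth making explicit: your phrasing ``no such action can be a profitable deviation away from the value $\constant_\pl$'' correctly encodes the constraint $\utility_\pl(\action_\pl,\policy_\opp)\le\constant_\pl$ for $\action_\pl\in\actions_\pl\setminus\actionSubset_\pl$, whereas the displayed system in the paper writes the opposite inequality $\ge$. That is a typo in the paper (indeed, the later slack-variable rewriting $\utility_1(\actionAlt,\policy_2)+\slackVariable_{\actionAlt}=\constant$ with $\slackVariable_{\actionAlt}\ge 0$ only makes sense for $\le$), and your argument proves the intended statement. You may want to flag this explicitly rather than silently correcting it.
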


\noindent
Another standard result is that the geometry of the set $\NE(\game, \actionSubset_1, \actionSubset_2)$ can be derived from the LP above:

\begin{lemma}[Geometry of NE]\label{lem:NEformConvexPolygon}
(1) For any $\policy, \policy' \in \NE(\game, \actionSubset_1, \actionSubset_2)$,
    we have $(\policy_1, \policy_2'), (\policy_1', \policy_2) \in \NE(\game, \actionSubset_1, \actionSubset_2)$.

\noindent
(2) $\NE(\game, \actionSubset_1, \actionSubset_2)$ is a convex polytope
    and its vertices are precisely the basic feasible solutions of \LPforNESS{}.
\end{lemma}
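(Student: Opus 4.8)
The plan is to derive both parts from \Cref{lemma:NEasLPsolutions}, which identifies $\NE(\game, \actionSubset_1, \actionSubset_2)$ with the $\policy$-parts of the solutions of a fixed linear program, call it $\mathcal{L}(\actionSubset_1, \actionSubset_2)$, whose variables are $(\policy_1, \policy_2, \constant_1, \constant_2)$. Once the geometric claims are phrased in terms of this LP, they reduce to elementary observations about its constraint structure together with the standard correspondence between extreme points of a polyhedron and basic feasible solutions.

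For part (1), the key step is to observe that the constraints of $\mathcal{L}(\actionSubset_1, \actionSubset_2)$ split into two groups that share no variables: one group constrains only $(\policy_1, \constant_2)$, the other only $(\policy_2, \constant_1)$. Indeed, every constraint mentioning $\policy_1$ is either a simplex/support constraint on $\policy_1$ alone, or has the form $\utility_2(\action_2, \policy_1) = \constant_2$ (for $\action_2 \in \actionSubset_2$) or $\utility_2(\action_2, \policy_1) \geq \constant_2$ (for $\action_2 \notin \actionSubset_2$); none of these involve $\policy_2$ or $\constant_1$, precisely because player $2$'s payoff against $\policy_1$ is independent of player $2$'s own strategy. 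The symmetric statement holds for $\policy_2$. Hence the feasible region is a Cartesian product. Given feasible lifts of $\policy$ and $\policy'$, I would therefore splice their coordinates: $(\policy_1, \policy'_2, \constant'_1, \constant_2)$ is again feasible, so by \Cref{lemma:NEasLPsolutions} we get $(\policy_1, \policy'_2) \in \NE(\game, \actionSubset_1, \actionSubset_2)$, and symmetrically for $(\policy'_1, \policy_2)$.

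For part (2), convexity is immediate since the set is cut out by linear equalities and inequalities. To upgrade this to ``polytope'' and to pin down the vertices, I would eliminate the auxiliary variables $\constant_\pl$. Whenever the region is nonempty each $\actionSubset_\pl$ is nonempty (else the simplex constraint reads $0 = 1$), so fixing any $\action_\pl \in \actionSubset_\pl$ gives $\constant_\pl = \utility_\pl(\action_\pl, \policy_\opp)$ as an affine function of $\policy$. Thus the projection $(\policy, \constant) \mapsto \policy$ restricts to an affine bijection between the feasible region of $\mathcal{L}(\actionSubset_1, \actionSubset_2)$ and $\NE(\game, \actionSubset_1, \actionSubset_2)$. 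Boundedness is then clear because the $\policy$-coordinates lie in a product of simplices, so the set is a polytope; and since an affine isomorphism carries extreme points to extreme points, the vertices of $\NE(\game, \actionSubset_1, \actionSubset_2)$ correspond exactly to the extreme points of the LP feasible region, which are its basic feasible solutions by the standard LP characterization.

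The points requiring the most care are conceptual rather than computational. In part (1) the crux is verifying the decoupling, namely that no single constraint couples $\policy_1$ with $\policy_2$; this is exactly the bilinearity fact that each player's incentive constraints depend only on the \emph{opponent's} mixed strategy, and reading it off cleanly from the LP is the whole content of the interchangeability phenomenon. In part (2) the delicate bookkeeping is the elimination of the $\constant_\pl$: I must confirm they are uniquely determined by $\policy$ so that projecting them out is a genuine affine isomorphism, since this is what legitimately transports the LP fact ``extreme point $=$ basic feasible solution'' down to the $\policy$-polytope. Both steps are routine once the product structure and the elimination are stated precisely, so I anticipate no genuine obstacle beyond this careful setup.
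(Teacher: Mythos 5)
Your proposal is correct, but there is nothing in the paper to compare it against: the paper deliberately states this lemma without proof, introducing it among LP facts that ``are standard [and] will be given without a proof,'' with a pointer to \cite{MAS}. Your write-up is a sound reconstruction of that standard argument, and both of its pivots hold. For part (1), the constraints of \LPforNESS{} indeed partition into one group involving only $(\policy_1, \constant_2)$ (simplex/support constraints on $\policy_1$ and \Pltwo{}'s utility constraints against $\policy_1$) and one involving only $(\policy_2, \constant_1)$, so the feasible region is a Cartesian product and splicing feasible lifts gives exactly the interchangeability claim, localized to the fixed sets $\actionSubset_1, \actionSubset_2$. For part (2), your elimination of the auxiliary variables is the right care point and is handled correctly: on the feasible region the equality constraints force $\constant_\pl = \utility_\pl(\action_\pl, \policy_\opp)$ for any $\action_\pl \in \actionSubset_\pl$ (nonempty whenever the region is), so the projection onto the $\policy$-coordinates is an affine bijection; this transports boundedness and the extreme-point/basic-feasible-solution correspondence down to $\NE(\game, \actionSubset_1, \actionSubset_2)$, which is precisely what statement (2) asserts once its BFS's are identified with their $\policy$-parts. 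One cosmetic remark: you copied verbatim the paper's inequality $\utility_\pl(\action_\pl, \policy_\opp) \geq \constant_\pl$ for $\action_\pl \in \actions_\pl \setminus \actionSubset_\pl$; for the solutions of the LP to be Nash equilibria this inequality should read $\leq$ (actions outside $\actionSubset_\pl$ must do no better than the common best-response value), but this is a typo in the paper's Lemma~\ref{lemma:NEasLPsolutions}, not a gap in your argument --- your decoupling and elimination steps are agnostic to the direction of that inequality.
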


\noindent
In light of \Cref{lem:NEformConvexPolygon}, we can denote
    \begin{align*}
        \NE(\game, \actionSubset_1, \actionSubset_2)
        :=
        \NE_1(\game, \actionSubset_1, \actionSubset_2)
        \times
        \NE_2(\game, \actionSubset_1, \actionSubset_2)
        .
    \end{align*}
We also use
    $\Vertices_\pl(\game, \actionSubset_1, \actionSubset_2)$
    to denote the \defword{extremal NE strategies}
    --- i.e., the vertices
    $\NE_\pl(\game, \actionSubset_1, \actionSubset_2)$.

\subsection{Linearity of Simulation Equilibria}

The first observation is that since the utilities of \Pltwo{} do not depend on $\costScale$, their Nash equilibrium strategy of \Plone{} do not need to change either:

\begin{lemma}[WLOG, \Plone{}'s strategy is constant]\label{lem:constancy-WLOG}
Suppose that $\policy$, resp. $\policy'$, is a solution of \LPforNESS{}
    for $\adjustGame^{\symbolPlaceholder}$ and $(\actionSubset_1, \actionSubset_2)$
    for $\costScale$, resp. $\costScale'$.
Then $(\policy_1, \policy'_2)$ is a solution of the LP for $\adjustGame^{\costScale'}$.
\end{lemma}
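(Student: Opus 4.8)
The plan is to unwind the constraints of \LPforNESS{} and exploit the fact that they never couple $\policy_1$ and $\policy_2$ directly. Recall that for the fixed supports $(\actionSubset_1, \actionSubset_2)$ the program consists, for each player $\pl$, of a normalization constraint and sign constraints on $\policy_\pl$ alone, the indifference constraints $\utility_\pl(\action_\pl, \policy_\opp) = \constant_\pl$ for $\action_\pl \in \actionSubset_\pl$, and the best-response inequalities $\utility_\pl(\action_\pl, \policy_\opp) \geq \constant_\pl$ for $\action_\pl \notin \actionSubset_\pl$. The key observation is that the payoff constraints for player $\pl$ are written against the \emph{opponent's} strategy $\policy_\opp$, so the \Pltwo{} block involves only $\policy_1$ and the \Plone{} block involves only $\policy_2$. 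Moreover, since $\adjustGame^\costScale$ leaves \Pltwo{}'s payoffs untouched ($\utility_2 = \tilde\utility_2$), the entire \Pltwo{} block is independent of $\costScale$, whereas $\costScale$ enters only through $\utility_1$ in the \Plone{} block.

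Concretely, I would proceed as follows. Let $\constant_1, \constant_2$ be the value variables witnessing that $\policy$ is feasible at $\costScale$, and $\constant_1', \constant_2'$ those witnessing that $\policy'$ is feasible at $\costScale'$. I then claim that $(\policy_1, \policy'_2, \constant_1', \constant_2)$ is feasible for \LPforNESS{} at $\costScale'$, which by \Cref{lemma:NEasLPsolutions} yields $(\policy_1, \policy'_2) \in \NE(\adjustGame^{\costScale'})$. The verification splits cleanly into blocks: the normalization and sign constraints for $\policy_1$ hold because $\policy$ is a solution, and those for $\policy'_2$ because $\policy'$ is a solution; the \Plone{} payoff constraints, which read $\utility_1(\action_1, \policy'_2) \, \{=, \geq\} \, \constant_1'$ at cost $\costScale'$, hold because $(\policy'_2, \constant_1')$ came from the feasible point at $\costScale'$ and these constraints do not see $\policy_1$ at all; and the \Pltwo{} payoff constraints $\utility_2(\action_2, \policy_1) \, \{=, \geq\} \, \constant_2$ hold because $(\policy_1, \constant_2)$ satisfied them at $\costScale$ and $\utility_2$ carries no $\costScale$-dependence, so they continue to hold verbatim at $\costScale'$.

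I expect no genuine obstacle here; the only thing to get right is the bookkeeping of which variable each constraint touches. Conceptually, the separability of the \Plone{} and \Pltwo{} blocks is essentially the content of \Cref{lem:NEformConvexPolygon}(1) (the product structure of $\NE(\game, \actionSubset_1, \actionSubset_2)$), and the one extra ingredient needed to combine components taken at two \emph{different} cost-scaling factors is the cost-independence of the \Pltwo{} block, which is immediate from the definition of $\adjustGame^\costScale$. If I wanted to be fully explicit I would also note that the argument is symmetric, so $(\policy'_1, \policy_2)$ is likewise a solution at $\costScale$, but only the stated direction is needed for the downstream piecewise-linearity argument in \Cref{lem:piecewise-linear-general-weaker}.
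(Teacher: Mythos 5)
Your proposal is correct and follows essentially the same route as the paper: the paper's (much terser) proof also just verifies that $(\policy_1, \policy'_2)$ is feasible for the LP at $\costScale'$, the key point being that \Pltwo{}'s utilities in $\adjustGame^{\symbolPlaceholder}$ do not depend on the cost-scaling factor, so the constraints touching $\policy_1$ carry over verbatim while those touching $\policy'_2$ hold by assumption. Your write-up merely makes explicit the block bookkeeping (including the choice of value variables $\constant_1', \constant_2$) that the paper dismisses as trivial.
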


\begin{proof}
To prove the lemma, it suffices to verify that $(\policy_1, \policy'_2)$ is a feasible solution of \LPforNESS{}.
However, this is trivial once we realise that the utility of \Pltwo{} does not depend on $\simcost$.
\end{proof}

As a result, it only remains to prove the linearity of $\Vertices_2(\adjustGame^\costScale, \actionSubset_1, \actionSubset_2)$
    (and then put all the results together).

\begin{proposition}\label{prop:linearity-extremal-NE}
For every $\actionSubset_1 \subset \actions_1$ and $\actionSubset_2 \subset \actions_2$,
    there is a finite number of breakpoints $- \infty = \exceptionPoint_{-1} < \dots < \exceptionPoint_{k+1} = \infty$,
    such that on any of the intervals $(\exceptionPoint_i, \exceptionPoint_{i+1})$,
    the elements of $\Vertices_2(\adjustGame^\costScale, \actionSubset_1, \actionSubset_2)$ change linearly with $\costScale$.
\end{proposition}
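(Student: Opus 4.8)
The plan is to read off $\Vertices_2(\adjustGame^\costScale, \actionSubset_1, \actionSubset_2)$ as the vertex set of a \emph{parametric} linear program in which the parameter $\costScale$ enters only the right-hand side, and then to apply the standard basis description of such programs. Fix $\actionSubset_1 \subset \actions_1$ and $\actionSubset_2 \subset \actions_2$. By \Cref{lem:NEformConvexPolygon} the set $\NE(\adjustGame^\costScale, \actionSubset_1, \actionSubset_2)$ is a convex polytope which factors as $\NE_1 \times \NE_2$; since \Pltwo{}'s indifference/optimality constraints only involve $\policy_1$ while \Plone{}'s only involve $\policy_2$, the LP of \Cref{lemma:NEasLPsolutions} decouples. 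I would discard the $\policy_1$-block (whose constraints are $\costScale$-free, consistent with \Cref{lem:constancy-WLOG}) and keep only the block in the variables $(\policy_2, \constant_1)$, whose feasible region projects isomorphically onto $\NE_2$ (when $\actionSubset_1 \neq \emptyset$ the projection is an affine isomorphism because $\constant_1 = \utility_1(\action, \policy_2)$ is pinned down for any $\action \in \actionSubset_1$; the case $\actionSubset_1 = \emptyset$ is trivial). Thus its vertices are exactly $\Vertices_2(\adjustGame^\costScale, \actionSubset_1, \actionSubset_2)$.

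The crucial step is to make $\costScale$ enter only additively. Using $\utility_1(\action,\actionOpp) = \tilde\utility_1(\action,\actionOpp) - \costScale\adjust_\action$ together with the normalization $\sum_{\actionOpp\in\actionSubset_2}\policy_2(\actionOpp) = 1$, and crucially the fact that $\adjust_\action$ does not depend on $\actionOpp$, each constraint $\utility_1(\action,\policy_2) = \constant_1$ (for $\action\in\actionSubset_1$), resp.\ $\utility_1(\action,\policy_2)\geq\constant_1$ (for $\action\notin\actionSubset_1$), rewrites as $\sum_{\actionOpp}\policy_2(\actionOpp)\tilde\utility_1(\action,\actionOpp) - \constant_1 = \costScale\adjust_\action$, resp.\ $\geq\costScale\adjust_\action$. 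Hence all variable coefficients are $\costScale$-independent and only the right-hand sides $\costScale\adjust_\action$ vary, affinely, with $\costScale$; the normalization and nonnegativity constraints do not involve $\costScale$ at all. Converting to standard form with slack variables, the program becomes $\LPmatrix x = b(\costScale)$, $x\geq 0$, where $\LPmatrix$ is constant and $b(\costScale)$ is affine in $\costScale$.

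Now I would invoke the basis description. Every vertex of the feasible region is the basic feasible solution $x_\basis(\costScale)$ associated with some basis $\basis$ (an invertible square column-submatrix $\LPmatrix_\basis$), namely $x_\basis(\costScale) = \LPmatrix_\basis^{-1} b(\costScale)$ on the basic coordinates and $0$ elsewhere. Since $\LPmatrix_\basis$ is fixed and $b(\costScale)$ is affine, $x_\basis(\cdot)$ is an affine function of $\costScale$. The candidate $x_\basis(\costScale)$ is a genuine vertex exactly when it is feasible, i.e.\ $x_\basis(\costScale)\geq 0$, which is a finite conjunction of affine inequalities in $\costScale$; hence the set $\closedInterval_\basis := \{\costScale : x_\basis(\costScale)\geq 0\}$ is an interval of $\costScale$-values (an intersection of half-lines).

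Finally I would assemble the breakpoints. There are only finitely many bases $\basis$, hence finitely many intervals $\closedInterval_\basis$ and finitely many of their endpoints; I take the sorted list of these endpoints to be $\exceptionPoint_{-1} = -\infty < \exceptionPoint_0 < \dots < \exceptionPoint_{k+1} = \infty$. On each open interval $(\exceptionPoint_i,\exceptionPoint_{i+1})$ no endpoint lies inside, so the set of bases $\basis$ with $(\exceptionPoint_i,\exceptionPoint_{i+1})\subseteq\closedInterval_\basis$ is constant; for every $\costScale$ in this interval the corresponding affine maps $\costScale\mapsto x_\basis(\costScale)$ enumerate precisely the vertices of the feasible region, so (after projecting off $\constant_1$) they enumerate $\Vertices_2(\adjustGame^\costScale, \actionSubset_1, \actionSubset_2)$, each depending affinely --- i.e.\ ``linearly'' in the sense of the statement --- on $\costScale$. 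The main point needing care is this last combinatorial-stability claim: that between consecutive breakpoints the family of feasible bases, and thus the labelling of vertices by affine functions, is unchanged. The only nuisance is degeneracy, where a single vertex may arise from several bases, which is harmless here since we only assert a description of the vertex \emph{set} and not a bijection with bases.
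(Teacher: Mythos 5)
Your proposal is correct and follows essentially the same route as the paper's proof: both characterize $\Vertices_2(\adjustGame^\costScale, \actionSubset_1, \actionSubset_2)$ as the basic feasible solutions of the LP of Lemma~\ref{lemma:NEasLPsolutions} put into standard (slack-variable) form, exploit the fact that the adjustment $\adjust_\action$ does not depend on \Pltwo{}'s action to move every occurrence of $\costScale$ into the right-hand side, conclude that each basic solution $\LPmatrix_\basis^{-1}\LPrhs$ is affine in $\costScale$ while $\LPmatrix$ stays constant, and take as breakpoints the endpoints of the finitely many per-basis feasibility intervals. The one step the paper makes explicit that you leave implicit is the replacement of $\LPmatrix$ by a matrix with linearly independent rows (justified there using the assumption that the NE set is non-empty at the given $\costScale$), which your appeal to the standard vertex--basis correspondence tacitly presupposes; this is a minor point and does not change the verdict.
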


\noindent
Here, ``elements of $\Vertices_2(\adjustGame^\costScale, \actionSubset_1, \actionSubset_2)$ changing linearly'' means that
    (a) for a fixed $i$, there is some $N \geq 0$ such that 
    for every $(\exceptionPoint_i, \exceptionPoint_{i+1})$,
    the set $\Vertices_2(\adjustGame^\costScale, \actionSubset_1, \actionSubset_2)$ has exactly $N$ elements and
    (b) there are linear functions $\NEtrajectory_2^n : (\exceptionPoint_i, \exceptionPoint_{i+1}) \to \policies_2$, $n = 1, \dots, N$,
    such that for every $\costScale \in (\exceptionPoint_i, \exceptionPoint_{i+1})$,
    % \begin{align*}
    $
        \Vertices_2(\adjustGame^\costScale, \actionSubset_1, \actionSubset_2)
        = 
        \{
            \NEtrajectory_2^n(\costScale)
            \mid
            n = 1, \dots, N
        \}
        .
    $
    % \end{align*}

\begin{figure*}[htb]
    \centering
    \begin{equation*}
        \left[
        \begin{array}{ccccccccc|c}
        1 & 1 & \cdots & 1 & & \dots & & 0 & \phantom{\shortminus} 0 & 1\\
        &&&&&&&&&\\
        \utility_1(\action_1, \actionOpp_1) & \utility_1(\action_1, \actionOpp_2) & \cdots & \utility_1(\action_1, \actionOpp_m) & & \dots & & 0 & \shortminus 1 & \adjust_{\action_1} \\
        \utility_1(\action_2, \actionOpp_1) & \utility_1(\action_2, \actionOpp_2) & \cdots & \utility_1(\action_2, \actionOpp_m) & & \dots & & 0 & \shortminus 1 & \adjust_{\action_2}\\
        \vdots & \vdots & \vdots & \vdots & & & & \vdots & \phantom{\shortminus} \vdots & \vdots\\
        \utility_1(\action_n, \actionOpp_1) & \utility_1(\action_n, \actionOpp_2) & \cdots & \utility_1(\action_n, \actionOpp_m) & & \dots & & 0 & \shortminus 1 & \adjust_{\action_n}\\
        \utility_1(\action'_1, \actionOpp_1) & \utility_1(\action'_1, \actionOpp_2) & \cdots & \utility_1(\action'_1, \actionOpp_m) & 1 & 0 & \dots & 0 & \shortminus 1 & \adjust_{\action'_1}\\
        \utility_1(\action'_2, \actionOpp_1) & \utility_1(\action'_2, \actionOpp_2) & \cdots & \utility_1(\action'_2, \actionOpp_m) & 0 & 1 &  & 0 & \shortminus 1 & \adjust_{\action'_2}\\
        \vdots & \vdots & \ddots & \vdots & \vdots & & \ddots & \vdots & \phantom{\shortminus}\vdots & \vdots \\
        \utility_1(\action'_{n'}, \actionOpp_1) & \utility_1(\action'_{n'}, \actionOpp_2) & \cdots & \utility_1(\action'_{n'}, \actionOpp_m) & 0 & 0 & \dots & 1 & \shortminus 1 & \adjust_{\action'_{n'}}\\
        \end{array}
        \right]
        ,
    \end{equation*}
    \caption{The matrix form $\LPmatrix \LPvariables^\transpose = \LPrhs$ of the LP \eqref{eq:LP-slack-variables-start}-\eqref{eq:LP-slack-variables-end}, where
        the columns are indexed by
            $
                \LPvariables
                =
                (
                    \policy^\costScale_2(\actionOpp_1), \dots, \policy^\costScale_2(\actionOpp_1),
                    \slackVariable_{\action'_1}, \dots, \slackVariable_{\action'_{n'}},
                    \constant
                )
            $.
        (The numbers $m$, $n$, and $n'$ stand for the size of $\actionSubset_2$, $\actionSubset_1$, and $\actions_1 \setminus \actionSubset_1$ respectively.)
        The additional constraints are $\policy^\simcost_2(\actionOpp_j) \geq 0$ and $\slackVariable_\action' \geq 0$.
        Note that because \Plone{}'s utilities are adjusted independently of \Pltwo{}'s actions, the adjustments $\adjust_\action$ and $\adjust_\actionAlt$ can be moved to right-hand side of the equation.
    }
        \label{fig:matrix-form-of-LP2}
    \end{figure*}

\begin{proof}
Let $(\game, \adjustVec)$ be a game with linearly adjustable payoffs
    and suppose that $\costScale$ is such that there exists some NE in $(\adjustGame^\costScale, \actionSubset_1, \actionSubset_2)$.
    
As the first step, we rewrite the relevant part of the LP from earlier.
    Using \Cref{lemma:NEasLPsolutions}
        (in combination with (1) from \Cref{lem:NEformConvexPolygon}),
        we see that a policy $\policy_2$ lies in $\Vertices_2(\adjustGame^\costScale, \actionSubset_1, \actionSubset_2)$
        if and only if it is a basic feasible solution of the following LP:
    \begin{align}
        & \sum\nolimits_{\actionOpp \in \actionSubset_2}
            \policy_2(\actionOpp)
            = 1
            \\
        & \utility_1(\action, \policy_2) - \costScale \adjust_\action = \constant
            & \textnormal{ for } \action \in \actionSubset_1
            \\
        & \utility_1(\actionAlt, \policy_2) - \costScale \adjust_\actionAlt \geq \constant
            & \textnormal{ for } \actionAlt \in \action_1 \setminus \actionSubset_1
            \\
        & \phantom{x}\textnormal{ where the variables satisfy} \\
        & \policy_2(\actionOpp) = 0
            & \textnormal{ for } \actionOpp \in \actions_2 \setminus \actionSubset_2
            \\
        & \policy_2(\actionOpp) \geq 0
            & \textnormal{ for } \action \in \actionSubset_2
            \\
        & \gamma \in \R
        .
    \end{align}
We turn all of the inequalities into equalities by introducing slack variables
    $\slackVariable_\actionAlt \geq 0$, $\actionAlt \in \actions_1 \setminus \actionSubset_1$:
    \begin{align}
        & \sum\nolimits_{\actionOpp \in \actionSubset_2}
            \policy_2(\actionOpp)
            = 1
                \label{eq:LP-slack-variables-start}
            \\
        & \utility_1(\action, \policy_2) - \costScale \adjust_\action = \constant
            & \textnormal{ for } \action \in \actionSubset_1
            \\
        & \utility_1(\actionAlt, \policy_2) - \costScale \adjust_\action + \slackVariable_\actionAlt = \constant
            & \textnormal{ for } \actionAlt \in \actions_1 \setminus \actionSubset_1
            \\
        & \phantom{x}\textnormal{ where the variables satisfy} \\
        & \policy_2(\actionOpp) = 0
            & \textnormal{ for } \actionOpp \in \actions_2 \setminus \actionSubset_2
            \\
        & \policy_2(\actionOpp) \geq 0
            & \textnormal{ for } \action \in \actionSubset_2
            \\
        & \gamma \in \R
            \label{eq:LP-slack-variables-end}
        .
    \end{align}

\medskip

Second, we rewrite the LP \eqref{eq:LP-slack-variables-start}-\eqref{eq:LP-slack-variables-end} in a matrix form.
    We denote the relevant variables as
        $\LPvariables =
            (
                \policy^\simcost_2(\actionOpp_1), \dots, \policy^\simcost_2(\actionOpp_n),
                \slackVariable_{\actionAlt_1}, \dots, \slackVariable_{\actionAlt_{n'}},
                \gamma
            )
        $,
        where
            $n \ := \lvert \actions_1 \rvert$,
            $n'\ := \lvert \actions_1 \setminus \actionSubset_1 \rvert$.
    In this notation,
        there will be some matrix $\LPmatrix$ and right-hand side $\LPrhs = (1, 0, \dots, 0)$
        for which some $\policy_2$ is a solution of \eqref{eq:LP-slack-variables-start}-\eqref{eq:LP-slack-variables-end}
        if and only if it satisfies
            $\LPmatrix \LPvariables^\transpose = \LPrhs$
            and $\policy^\simcost_2(\actionOpp_j), \slackVariable_\actionAlt \geq 0$.
    However,
        we can additionally use the fact that the adjustment cost $\costScale \adjust_\action$ that \Plone{} pays does not depend on the action of \Pltwo{}.
        This allows us the matrix form depicted in Figure~\ref{fig:matrix-form-of-LP2},
            where all the $\costScale \adjust_\action$-s have been moved to the right-hand side.

\medskip

As the third step,
    we note that
        while the matrix from Figure~\ref{fig:matrix-form-of-LP2} might have linearly dependent rows,
        we can always replace it by a matrix whose rows are linearly independent.
    To see this, note first that
        clearly no row corresponding to one of the actions $\actionAlt \in \actions_1 \setminus \actionSubset_1$
        can be expressed as a linear combination of any other rows,
        because of the $1$-s in the bottom-right corner of the matrix.
    Second,
        it \textit{is} possible that one of the rows corresponding to some $\action \in \actionSubset_1$
        can be expressed as a linear combination of the other rows corresponding to $\actionSubset_1$.
        Suppose that $\action$ is such action and $\lambda_i \in \R$ are the corresponding weights.
        There are two options:
            If $\sum_i \lambda_i \adjust_{\action_i} = \adjust_\action$,
                the condition corresponding to $\action$ can be omitted, since it is already subsumed by the conditions corresponding to $\actionSubset_1 \setminus \{ \action \}$.
            Conversely, if $\sum_i \lambda_i \adjust_{\action_i} \neq \adjust_\action$,
                the system of equations from Figure~\ref{fig:matrix-form-of-LP2} will be unsolvable.
            However, this case is ruled out by our assumption that $\costScale$ is such that $\NE(\game, \actionSubset_1, \actionSubset_2)$ is non-empty.
In summary: For the remainder of the proof, we can assume that
    the rows of the matrix $\LPmatrix$ are linearly independent.

\medskip

Fourth, we identify the basic solution of the LP given by Figure~\ref{fig:matrix-form-of-LP2}.
    For the purpose of this step, denote 
        the set of column-indices of $\LPmatrix$ as
            $
                \indices
                :=
                \actionSubset_2
                \cup \left( \actions_1 \setminus \actionSubset_1 \right)
                \cup \{ \constant \}
            $.
        For a ``basis'' $\basis \subseteq \indices$,
            we use $\LPmatrix_\basis$ to denote the sub-matrix of $\LPmatrix$ consisting of the columns indexed by $\basis$.
        By $\regularBases$, we denote the set of all $\basis$-s for which the sub-matrix $\LPmatrix_\basis$ is regular.
        Finally, for $\basis \in \regularBases $, we denote by $\LPvariables^\basis$ the basic solution corresponding to $\basis$
        --- i.e., the solution of $\LPmatrix \LPvariables^\transpose = \LPrhs$ for which all the variables indexed by $\indices \setminus \basis$ are equal to $0$.
    By definition of a BFS, the basic feasible solutions of the LP are
        precisely all the vectors of the form $\LPvariables^\basis$, $\basis \in \regularBases$.

\medskip

Fifth,
    we show that every basic (not necessarily feasible solution of the LP given by Figure~\ref{fig:matrix-form-of-LP2} changes linearly with $\costScale$.
    To see this,
        note that each basic (not necessarily feasible) solution $\LPvariables^\basis$ can be written as the vector $\LPmatrix_\basis^{-1} \LPrhs$, extended by $0$-s at the indices $\indices \setminus \basis$.
        (Since $\LPmatrix_\basis$ is assumed to be regular, the inverse exists.)
    Since the matrix $\LPmatrix$ does not depend on $\costScale$ and $\LPrhs$ only depends on $\costScale$ linearly, the mapping $\costScale \mapsto \LPvariables^\basis$ is linear.

\medskip

Finally, we conclude the proof.
    To do this, recall that a basic solution $\LPvariables^\basis$ is feasible
        if all of the variables $\policy^\simcost_2(\actionOpp), \slackVariable_\action$ are non-negative.
    For every $\LPvariables^\basis$,
        the set of the values of $\costScale$ for which all of these definitions are satisfied
        is going to be some (possibly empty or trivial) closed interval $[\exceptionPoint^\basis_0, \exceptionPoint^\basis_1]$.
    By taking the set
        $
            \{
                \exceptionPoint^\basis_i
                \mid
                i = 0, 1, \,
                \basis \in \regularBases
            \} \cup \{ -\infty, \infty \}
        $
        and reordering it as an increasing sequence,
        we obtain the desired breakpoint set $\exceptionSet$.
This completes the whole proof
\end{proof}

\begin{proof}[Proof of \Cref{lem:piecewise-linear-general-weaker}]
Let $(\game, \adjustVec)$ be a game with linearly adjustable payoffs.
To get the desired sequence of breakpoints, we take
    -- for every pair $\actionSubset_1 \subseteq \actions_1$ and $\actionSubset_2 \subseteq \actions_2$ --
    some set $\exceptionSet(\actionSubset_1, \actionSubset_2)$ of breakpoints given by \Cref{prop:linearity-extremal-NE}
    and define
    $
        \exceptionSet
        :=
        \bigcup\nolimits_{\actionSubset_1 \subseteq \actions_1, \actionSubset_2 \subseteq \actions_2}
            \exceptionSet(\actionSubset_1, \actionSubset_2)
    $.
    We then enumerate $\exceptionSet$ as a strictly increasing sequence $(\exceptionPoint_i)_i$
To prove our result, let $\costScale \in [\exceptionPoint_i, \exceptionPoint_{i+1}]$,
    and $\policy^{\costScale_0} \in \NE(\adjustGame^{\costScale_0})$.
We finding a trajectory $\NEtrajectory_2$ and that satisfies the conclusion of the lemma.

By \Cref{obs:fixed_support_NE},
    there are some sets $\actionSubset_1$, $\actionSubset_2$ for which $\policy^{\costScale_0} \in \NE(\adjustGame^{\costScale_0}, \actionSubset_1, \actionSubset_2)$.
By \Cref{lem:NEformConvexPolygon} and the subsequent observation,
    there are some
        basic feasible solutions $\bfs^1, \dots, \bfs^N$ (of \LPforNESS)
        and convex combination $\convexWeight_1, \dots, \convexWeight_N$
    such that
        $
            \policy^{\costScale_0}_2
            = \sum\nolimits_{i=1}^N \convexWeight_i \bfs^i_2
        $.
By \Cref{prop:linearity-extremal-NE},
    there are some linear trajectories $\NEtrajectory^1_2, \dots, \NEtrajectory^N_2$
    such that
        $\NEtrajectory_2^i(\costScale_0) = \bfs^i_2$
        and for every $\costScale \in (\exceptionPoint_i, \exceptionPoint_{i+1})$,
            $\NEtrajectory_2^i(\costScale) \in \Vertices_2(\adjustGame^\costScale, \actionSubset_1, \actionSubset_2)$.
    Moreover, by \Cref{lem:constancy-WLOG}, we have
        $(\policy^{\costScale_0}_1, \NEtrajectory_2^i(\costScale) ) \in \NE(\adjustGame^\costScale, \actionSubset_1, \actionSubset_2)$
        for every $\costScale \in (\exceptionPoint_i, \exceptionPoint_{i+1})$.
By the convexity of NE (\Cref{lem:NEformConvexPolygon}),
    $\NEtrajectory_2 := \sum\nolimits_{i=1}^N \NEtrajectory_2^i$
    is a linear trajectory for which
        $\NEtrajectory_2(\costScale_0) = \policy^{\costScale_0}$
        and for every $\costScale \in (\exceptionPoint_i, \exceptionPoint_{i+1})$,
            $(\policy^{\costScale_0}, \NEtrajectory_2^i(\costScale)) \in \NE(\adjustGame^\costScale, \actionSubset_1, \actionSubset_2)$.
Moreover, since the utilities in $\adjustGame^\costScale$ depend continuously on $\costScale$,
    this also implies that
    $
        (\policy^{\costScale_0}, \NEtrajectory_2^i(\costScale))
        \in
        \NE(\adjustGame^\costScale, \actionSubset_1, \actionSubset_2)
    $
    for $\costScale \in \{ \exceptionPoint_i, \exceptionPoint_{i+1}\}$.
(By \Cref{obs:fixed_support_NE},) this concludes the whole proof.
\end{proof}

\end{document}